\newcommand{\Nat}{\mathbb{N}}
\newcommand{\litlang}{L_0}
\newcommand{\barlang}{L_\alpha}
\newcommand{\datalang}{D}
\newcommand{\Expbar}{\ensuremath{\operatorname{\sf RBExp}}}
\newcommand{\longmid}{\hspace{0.9ex}\smash{\rule[-1.0ex]{0.41pt}{3.2ex}}\hspace{0.9ex}}
\newcommand{\midmid}{\hspace{0.2ex}{\rule[-0.1ex]{0.6pt}{1.65ex}}\hspace{0.2ex}}
\newcommand{\scriptmidmid}{\hspace{0.2ex}{\rule[-0.1ex]{0.6pt}{1.1ex}}\hspace{0.2ex}}
\newcommand{\subscriptmidmid}{\hspace{0.2ex}{\rule[-0.1ex]{0.6pt}{0.9ex}}\hspace{0.2ex}}
\newcommand{\op}[1]{\ensuremath{\operatorname{\sf #1}}}
\newcommand{\degree}{\op{deg}}
\newcommand{\unbar}{\op{ub}}
\newcommand{\dataord}{\sqsubseteq}
\newcommand{\datasup}{\sqsupseteq}
\newcommand{\ExpSpace}{\mbox{\textsc{ExpSpace}}\xspace}
\newcommand{\NExpSpace}{\mbox{\textsc{NExpSpace}}\xspace}
\newcommand{\PSpace}{\mbox{\textsc{PSpace}}\xspace}
\newcommand{\eq}[2]{{\langle {#1}\rangle {#2}}}
\newcommand{\freesuc}{\mathsf{fsuc}} 
\newcommand{\boundsuc}{\mathsf{bsuc}} 
\newcommand{\partialto}{\rightharpoonup} 
\newcommand{\newletter}[1]{{\midmid}#1}
\newcommand{\scriptnew}[1]{{\scriptmidmid}#1}
\newcommand{\subscriptnew}[1]{{\subscriptmidmid}#1}
\newcommand{\FN}{\mathsf{FN}}
\newcommand{\Names}{\mathbb{A}}
\newcommand{\A}{\ensuremath{\mathbb{A}}\xspace}
\newcommand{\alphaeq}{\equiv_\alpha}
\newcommand{\powufs}{\pow_{\mathsf{ufs}}}
\newcommand{\powfs}{\pow_{\mathsf{fs}}}
\newcommand{\powf}{\pow_{\omega}}
\newcommand{\BoolCombis}{\ensuremath{{\color{black!60}\mathcal{B}}}\xspace}
\newcommand{\X}{\ensuremath{\mathord{\tt X}}}
\newcommand{\branchleft}{\ensuremath{\operatorname{\mathrlap{\ \vert}\!\!<}}}
\newcommand{\branchright}{\ensuremath{\operatorname{\mathrlap{\,\vert}\!\!>}}}
\newcommand{\nfresh}[1]{%
\ensuremath{\underline{%
{\hspace{2mm}}{\#^{#1}}{\hspace{2mm}}
}}}
\newcommand{\Setfs}{\@ifnextchar_\SetfsWithAtoms{\ensuremath{\Set_{\mathsf{fs}}}}}
\def\SetfsWithAtoms_#1{\ensuremath{\Set_{\mathsf{fs},#1}}}
\newcommand{\myparagraph}[1]{\medskip\par\noindent\textbf{\textbf{#1}}\hspace{6pt}}
\newcommand{\fresh}{\mathbin{\#}}
\newcommand{\supp}{\mathsf{supp}}
\newcommand{\fix}{\mathop{\mathsf{fix}}\xspace}
\newcommand{\Fix}{\mathop{\mathsf{Fix}}\xspace}
\newcommand{\trans}[1]{\xrightarrow{#1}}
\newcommand{\Nom}{\ensuremath{\mathsf{Nom}}\xspace}
\tikzstyle{shiftarr}=[
\newcommand{\descto}[3][]{
    \arrow[draw=none,
           to path={
            (\tikztostart.center)
            -- (\tikztotarget.center) \tikztonodes
           },
          ]{#2}[description,#1]{#3}
}
\newlength{\myboxwidth}
\newcounter{blubber}
\newenvironment{myenumerate}
{\begin{enumerate}
\setlength{\itemsep}{0pt}
    \setlength{\leftmargin}{0pt}
    \setlength{\itemindent}{0pt}
}
{\end{enumerate}}
\newenvironment{rmenumerate}{\begin{enumerate}}{\end{enumerate}}
\newenvironment{myrmenumerate}
{\begin{rmenumerate}
\setlength{\itemsep}{0pt}
    \setlength{\leftmargin}{0pt}
    \setlength{\itemindent}{0pt}
}
{\end{rmenumerate}}
\def\Id{\mathit{Id}}
\renewcommand{\theta}{\vartheta}
\newcommand{\by}[1]{\text{(#1)}}
\newcommand{\id}{\mathsf{id}}
\newcommand{\BC}{\mathbf{C}}
\newcommand{\pow}{\mathcal{P}}
\newcommand{\Set}{\ensuremath{\mathsf{Set}}\xspace}
\def\epito{\twoheadrightarrow}
\newcommand{\takeout}[1]{\empty}
\newcommand\size[1]{\sharp #1}
\newcommand*{\@old@slash}{}\let\@old@slash\slash
\def\slash{\relax\ifmmode\delimiter"502F30E\mathopen{}\else\@old@slash\fi}
\spnewtheorem{thm}[theorem]{Theorem}{\bfseries}{\itshape}
\spnewtheorem{cor}[theorem]{Corollary}{\bfseries}{\itshape}
\spnewtheorem{lem}[theorem]{Lemma}{\bfseries}{\itshape}
\spnewtheorem{lemdefn}[theorem]{Lemma and Definition}{\bfseries}{\itshape}
\spnewtheorem{propn}[theorem]{Proposition}{\bfseries}{\itshape}
\spnewtheorem{obs}[theorem]{Observation}{\bfseries}{\upshape}
\spnewtheorem{fact}[theorem]{Fact}{\bfseries}{\upshape}
\spnewtheorem{thmdefn}[theorem]{Theorem and Definition}{\bfseries}{\itshape}
\spnewtheorem{propdefn}[theorem]{Proposition and Definition}{\bfseries}{\itshape}
\spnewtheorem{defn}[theorem]{Definition}{\bfseries}{\upshape}
\spnewtheorem{construction}[theorem]{Construction}{\bfseries}{\upshape}
\spnewtheorem{rem}[theorem]{Remark}{\bfseries}{\upshape}
\spnewtheorem{expl}[theorem]{Example}{\bfseries}{\upshape}
\spnewtheorem{assn}[theorem]{Assumption}{\bfseries}{\upshape}
\spnewtheorem{conv}[theorem]{Convention}{\bfseries}{\upshape}
\spnewtheorem{notn}[theorem]{Notation}{\bfseries}{\upshape}
\spnewtheorem{open}[theorem]{Problem}{\bfseries}{\upshape}
\newcommand{\qedhere}{\qed}
\title{Nominal Automata with Name Binding}
\author{
Lutz Schr{\"o}der\inst{1} \and
Dexter Kozen\inst{2} \and
Stefan Milius\inst{1} \and
Thorsten Wi\ss{}mann\inst{1}
}
\institute{Friedrich-Alexander-Universität Erlangen-N\"urnberg
\and Cornell University}
\authorrunning{L.~Schr{\"o}der, D.~Kozen, S.~Milius, and T.~Wi\ss{}mann}
\begin{document}
\maketitle
\begin{abstract}
  Nominal sets are a convenient setting for languages over infinite
  alphabets, i.e.~data languages.  We introduce an automaton
  model over nominal sets, \emph{regular nondeterministic nominal
    automata (RNNA)}, which have a natural coalgebraic definition using
  abstraction sets to capture transitions that read a fresh letter
  from the input word. We prove a Kleene theorem for RNNAs w.r.t.\ a
  simple expression language that extends \emph{nominal Kleene algebra
    (NKA)} with unscoped name binding, thus remedying the known
  failure of the expected Kleene theorem for NKA itself.  We analyse
  RNNAs under two notions of freshness: \emph{global} and
  \emph{local}. Under global freshness, RNNAs turn out to be
  equivalent to session automata, and as such have a decidable
  inclusion problem. Under \emph{local} freshness, RNNAs retain a
  decidable inclusion problem, and translate into register automata.
  We thus obtain decidability of inclusion for a reasonably expressive
  class of nondeterministic register automata, with no bound on the
  number of registers.
\end{abstract}

\section{Introduction}

\noindent \emph{Data languages} are languages over infinite alphabets,
regarded as modeling the communication of values from infinite data
types such as nonces~\cite{KurtzEA07}, channel
names~\cite{Hennessy02}, process identifiers~\cite{BolligEA14},
URL's~\cite{BieleckiEA02}, or data values in XML documents
(see~\cite{NevenEA04} for a summary). There is a plethora of automata
models for data languages~\cite{Bojanczyk10,GrumbergEA10,Segoufin06},
which can be classified along several axes. One line of division is
between models that use explicit registers and have a finite-state
description (generating infinite configuration spaces) on the one
hand, and more abstract models phrased as automata over nominal
sets~\cite{Pitts13} on the other hand. The latter have infinitely many
states but are typically required to be \emph{orbit-finite}, i.e.\ to
have only finitely many states up to renaming implicitly stored
letters. There are correspondences between the two styles;
e.g.~Boja\'nczyk, Klin, and Lasota's \emph{nondeterministic
  orbit-finite automata (NOFA)}~\cite{BojanczykEA14} are equivalent to
Kaminski and Francez' \emph{register automata
  (RAs)}~\cite{KaminskiFrancez94} (originally called finite memory
automata), more precisely to RAs with nondeterministic
reassignment~\cite{KaminskiZeitlin10}. A second distinction concerns
notions of freshness: \emph{global freshness} requires that the next
letter to be consumed has not been seen before, while \emph{local
  freshness} postulates only that the next letter is distinct from the
(boundedly many) letters currently stored in the registers.

Although local freshness looks computationally more natural,
nondeterministic automata models (typically more expressive than
deterministic ones~\cite{KozenEA15}) featuring local freshness tend to
have undecidable inclusion problems. This includes RAs (unless
restricted to two registers~\cite{KaminskiFrancez94}) and
NOFAs~\cite{NevenEA04,BojanczykEA14} as well as \emph{variable
  automata}~\cite{GrumbergEA10}. \emph{Finite-state unification-based
  automata (FSUBAs)}~\cite{KaminskiTan06} have a decidable inclusion
problem but do not support freshness. 
Contrastingly, \emph{session automata}, which give up local
freshness in favor of global freshness, have a decidable inclusion
problem~\cite{BolligEA14}.

Another formalism for global freshness is \emph{nominal Kleene algebra
  (NKA)}~\cite{GabbayCiancia11}. It has been shown that a slight
variant of the original NKA semantics satisfies one half of a Kleene
theorem~\cite{KozenEA15}, which states that NKA expressions can be
converted into a species of nondeterministic nominal automata with
explicit \emph{name binding} transitions (the exact definition of
these automata being left implicit in op.~cit.); the converse
direction of the Kleene theorem fails even for deterministic nominal
automata.

Here, we introduce \emph{regular bar expressions (RBEs)}, which differ
from NKA in making name binding dynamically scoped. RBEs are just
regular expressions over an extended alphabet that includes bound
letters, and hence are equivalent to the corresponding
nondeterministic finite automata, which we call \emph{bar NFAs}. We
equip RBEs with two semantics capturing global and local freshness,
respectively, with the latter characterized as a quotient of the
former: For global freshness, we insist on bound names being
instantiated with names not seen before, while in local freshness
semantics, we accept also names that have been read previously but
will not be used again; this is exactly the usual behaviour of
\emph{$\alpha$-equivalence}, and indeed is formally defined using this
notion. Under global freshness, bar NFAs are essentially equivalent to
session automata.

We prove bar NFAs to be expressively equivalent to a nondeterministic
nominal automaton model with name binding, \emph{regular
  nondeterministic nominal automata (\mbox{RNNAs})}. The states of an
RNNA form an orbit-finite nominal set; RNNAs are distinguished from
NOFAs by having both free and bound transitions and being finitely
branching up to $\alpha$-equivalence of free transitions. This is
equivalent to a concise and natural definition of RNNAs as coalgebras
for a functor on nominal sets (however, this coalgebraic view is not
needed to understand our results). From the equivalence of bar NFAs
and RNNAs we obtain (i) a full Kleene theorem relating RNNAs and RBEs;
(ii) a translation of NKA into RBEs, hence, for closed expressions,
into session automata; and (iii) decidability in parametrized \PSpace\
of inclusion for RBEs, implying the known \ExpSpace decidability
result for NKA~\cite{KozenEA15}.

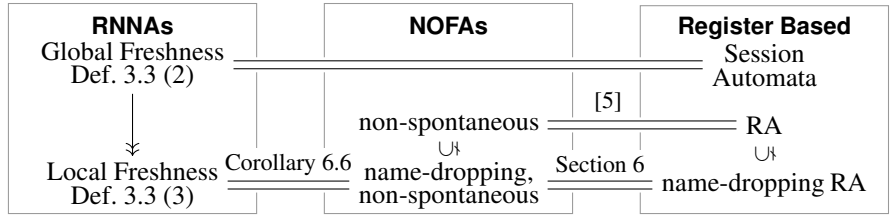
\begin{figure}[t]%
\centering
\setlength{\baselineskip}{2pt}
\begin{tikzpicture}[
    x=42mm,
    y=8mm,
    node distance = 2cm and 1cm,
    class/.style={
        shape=rectangle,
        inner sep=2pt,
        align=center,
        font=\fontsize{10pt}{9pt}\selectfont,
    },
    world/.style={
        shape=rectangle,
        draw=black!40!white,
        inner ysep=0.3em,
        inner xsep=0.0em,
        yshift = 0.7em,
        minimum width=33mm,
    },
    worldlabel/.style={
        yshift=-3mm,
        font=\bf\sffamily,
    },
    ]
    \begin{scope}[every node/.style={ class }]
    \node at (-1,2)      (Global) {Global Freshness\\Def.~\ref{def:barString}~\eqref{eq:globalFreshness}};
    \begin{scope}
    \node at (-1,0)      (Local)  {Local Freshness\\Def.~\ref{def:barString}~\eqref{eq:localFreshness}};
    \node at (1,1) (RA) {RA};
    \node at ($ (RA) - (0,1) $) (NRA) {name-dropping RA};
    \node at ($ (RA) + (0,1) $)      (SessionA) {Session\\Automata};
    \node at (0,0)                       (NameDropping) {name-dropping, \\ non-spontaneous};
    \node at ($ (NameDropping) + (0,1) $) (NonSpontaneous) {non-spontaneous};
    \end{scope}
    \end{scope}

    \begin{scope}[every edge/.style={draw=none},every node/.style={sloped}]
    \path
        (NRA)         edge node {$\subsetneq$} (RA)
        (NameDropping)  edge node {$\subsetneq$} (NonSpontaneous)
        ;
    \path[every edge/.style={
          -,draw=white,
          line width=2mm,
          shorten <=0.5mm,
          shorten >=0.5mm,
          postaction={transform canvas={yshift=1.5pt},draw=black,thin},
          postaction={transform canvas={yshift=-1.5pt},draw=black,thin},
          },
          every node/.style={
            above,
            fill=white,
            inner xsep=2pt,
            inner ysep=2pt,
            font=\relsize{0.1},
          },
          ]
        (NameDropping) edge node[xshift=0mm]{Section~\ref{sec:NameDroppingRA}} (NRA)
        (Local) edge node[yshift=-1pt]{Corollary~\ref{cor:rnna-nofa}} (NameDropping)
        (NonSpontaneous) edge node[xshift=-4.5mm]{\cite{BojanczykEA14}} (RA)
        (Global) edge (SessionA)
    ;
    \path[-,draw=white,line width=2mm] (Global) -- (Local) ;
    \path[->>,draw=black] (Global) -- (Local) ;
    \end{scope}

    \node[fit=(Global) (Local)] (NodesRNNA) {};
    \node[fit=(NonSpontaneous) (NameDropping)] (NodesNOFAs) {};
    \node[fit=(SessionA) (RA) (NRA)] (NodesRegisters) {};

    \begin{scope}[on background layer,every node/.style={ world }]
        \node[fit=(NodesRNNA)
                    (NodesRNNA.north|-NodesNOFAs.north) (NodesRNNA.north|-NodesRegisters.north)
                    (NodesRNNA.south|-NodesNOFAs.south) (NodesRNNA.south|-NodesRegisters.south)
              ] (RNNAs) {};
        \node[fit=(NodesNOFAs)
                    (NodesNOFAs.north|-NodesRNNA.north) (NodesNOFAs.north|-NodesRegisters.north)
                    (NodesNOFAs.south|-NodesRNNA.south) (NodesNOFAs.south|-NodesRegisters.south)
                    ] (NOFAs) {};
        \node[fit= (NodesRegisters)
                   (NodesRegisters.north|-NodesRNNA) (NodesRegisters.north|-NodesNOFAs.north)
                   (NodesRegisters.south|-NodesRNNA) (NodesRegisters.south|-NodesNOFAs.south)
                ] (Registers) {};
        \begin{scope}[every node/.style={
                    worldlabel,
                }]
            \node at (RNNAs.north) {RNNAs};
            \node at (NOFAs.north) {NOFAs};
            \node at (Registers.north) {Register Based};
        \end{scope}
    \end{scope}

\end{tikzpicture}%
\vspace{-1mm}
\caption {Expressivity of selected data language formalisms
  (restricted to empty initial register assignment). FSUBAs are
  properly contained in name-dropping RA.}%
  \vspace{-5mm}
\label{fig:models}%
\end{figure}
Under local freshness, RNNAs correspond to a natural subclass of RAs
(equivalently, NOFAs) defined by excluding nondeterministic
reassignment and by enforcing a policy of \emph{name dropping}, which
can be phrased as ``at any time, the automaton may
nondeterministically lose letters from registers'' -- thus freeing the
register but possibly getting stuck when lost names are expected to be
seen later.  
This policy is compatible with verification problems that relate to
scoping, such as `files that have been opened need to be closed before
termination' or `currently uncommitted transactions must be
either committed or explicitly aborted'.
Unsurprisingly, RNNAs with local freshness semantics are strictly more
expressive than FSUBAs; the relationships of the various models are
summarised in Figure~\ref{fig:models}. We show that RNNAs nevertheless
retain a decidable inclusion problem under local freshness, again in
parametrized \PSpace, using an algorithm that we obtain by varying the
one for global freshness. This is in spite of the fact that RNNAs a)
do not impose any bound on the number of registers, and b) allow
unrestricted nondeterminism and hence express languages whose
complement cannot be accepted by any RA, such as `some letter occurs
twice'.

\myparagraph{Further Related Work} A Kleene theorem for
\emph{deterministic} nominal automata and expressions with recursion
appears straightforward~\cite{KozenEA15}. Kurz et al.~\cite{KurzEA12}
introduce regular expressions for languages over words with scoped
binding, which differ technically from those used in the semantics of
NKA and regular bar expressions in that they are taken only modulo
$\alpha$-equivalence, not the other equations of NKA concerning scope
extension of binders. They satisfy a Kleene theorem for automata that
incorporate a bound on the nesting depth of binding, rejecting words
that exceed this depth.

Data languages are often represented as products of a classical finite
alphabet and an infinite alphabet; for simplicity, we use just the set
of names as the alphabet. Our unscoped name binders are, under local
semantics, similar to the binders in \emph{regular expressions with
  memory}, which are equivalent to unrestricted register
automata~\cite{LibkinEA15}.

Automata models for data languages, even models beyond register
automata such as fresh-register automata~\cite{Tzevelekos11} and
history-register automata~\cite{GrigoreTzevelekos16}, often have
decidable \emph{emptyness} problems, and their (less expressive)
deterministic restrictions then have decidable inclusion
problems. Decidability of inclusion can be recovered for
nondeterministic or even alternating register-based models by
drastically restricting the number of registers, to at most two in the
nondeterministic case~\cite{KaminskiFrancez94} and at most one in the
alternating case~\cite{DemriLazic09}. The complexity of the inclusion
problem for alternating one-register automata is non-primitive
recursive. \emph{Unambiguous register automata} have a decidable
inclusion problem and are closed under complement as recently shown by
Colcombet et al.~\cite{Colcombet15,ColcombetEA15}. RNNAs and
unambiguous RAs are incomparable: Closure under complement implies
that the language $L=$`some letter occurs twice' cannot be accepted by
an unambiguous RA, as its complement cannot be accepted by any
RA~\cite{BojanczykBook}. However, $L$ can be accepted by an RNNA (even
by an FSUBA). Failure of the reverse inclusion is due to name
dropping. 

Data walking automata~\cite{ManuelEA16} have strong navigational
capabilities but no registers, and are incomparable with unrestricted
RAs; we do not know how they relate to name-dropping RAs. Their
inclusion problem is decidable even under nondeterminism but at least
as hard as Petri net reachability, in particular not known to be
elementary.

\section{Preliminaries}\label{sec:prelim}

We summarise the basics of \emph{nominal sets}; \cite{Pitts13} offers
a comprehensive introduction.  \myparagraph{Group actions} Recall that
an \emph{action} of a group $G$ on a set $X$ is a map
$G \times X \to X$, denoted by juxtaposition or infix $\cdot$, such
that $\pi(\rho x) = (\pi\rho)x$ and $1x = x$ for $\pi, \rho \in G$,
$x \in X$. A \emph{$G$-set} is a set $X$ equipped with an action of
$G$.  The \emph{orbit} of $x \in X$ is the set
$\{\pi x \mid \pi \in G\}$. A function $f:X\to Y$ between $G$-sets
$X,Y$ is \emph{equivariant} if $f(\pi x) = \pi (fx)$ for all
$\pi \in G,x \in X$. Given a $G$-set $X$, $G$ acts on subsets
$A\subseteq X$ by $\pi A = \{\pi x \mid x \in A\}$. For $A\subseteq X$
and $x\in X$, we put
\[
\fix x = \{ \pi \in G \mid \pi x = x\} 
\qquad
\text{and}
\qquad
\textstyle\Fix A = \bigcap_{x \in A} \fix x.
\]
Note that elements of $\fix A$ and $\Fix A$ fix $A$ setwise and pointwise,
respectively.

\myparagraph{Nominal sets} Fix a countably infinite set $\Names$ of
\emph{names}, and write $G$ for the group of finite permutations on
$\Names$. Putting $\pi a = \pi(a)$ makes $\Names$ into a
$G$-set. Given a $G$-set $X$ and $x\in X$, a set $A \subseteq \Names$
\emph{supports} $x$ if $\Fix A \subseteq \fix x$, and $x$ \emph{has
  finite support} if some finite~$A$ supports $x$. In this case, there
is a least set $\supp(x)$ supporting~$x$. We say that $a\in\Names$ is
\emph{fresh for $x$}, and write $a\fresh x$, if $a\notin\supp(x)$. A
\emph{nominal set} is a $G$-set all whose elements have finite
support.  For every equivariant function $f$ between nominal sets, we
have $\supp(fx) \subseteq \supp(x)$. The function $\supp$ is
equivariant, i.e.\ $\supp(\pi x)=\pi(\supp(x))$ for $\pi\in G$.  Hence
$\size{\supp(x_1)} = \size{\supp(x_2)}$ whenever $x_1$, $x_2$ are in
the same orbit of a nominal set (we use $\size{}$ for cardinality). A
subset $S \subseteq X$ is \emph{finitely supported (fs)} if $S$ has
finite support with respect to the above-mentioned action of $G$ on
subsets; \emph{equivariant} if $\pi x \in S$ for all $\pi \in G$ and
$x \in S$ (which implies $\supp(S)=\emptyset$); and \emph{uniformly
  finitely supported (ufs)} if $\bigcup_{x \in S} \supp(x)$ is
finite~\cite{TurnerWinskel09}.  We denote by $\powfs(X)$ and
$\powufs(X)$ the sets of fs and ufs subsets of a nominal set~$X$,
respectively. Any ufs set is fs but not conversely; e.g.\ the set
$\Names$ is fs but not ufs. Moreover, any finite subset of $X$ is ufs
but not conversely; e.g.\ the set of words $a^n$ for fixed
$a\in\Names$ is ufs but not finite. A nominal set $X$ is
\emph{orbit-finite} if the action of~$G$ on it has only finitely many
orbits.
\begin{lem}[\cite{gabbay2011}, Theorem~2.29]\label{lem:supp}
  If $S$ is ufs, then $\supp(S) = \bigcup_{x \in S} \supp(x)$.
\end{lem}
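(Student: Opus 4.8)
The plan is to prove the two inclusions separately. Write $U = \bigcup_{x \in S} \supp(x)$; since $S$ is ufs, $U$ is finite. For the inclusion $\supp(S) \subseteq U$, I would first show that $U$ supports $S$, i.e.\ $\Fix U \subseteq \fix S$. Given $\pi \in \Fix U$, each $x \in S$ satisfies $\supp(x) \subseteq U$, so $\pi$ fixes $\supp(x)$ pointwise; as $\supp(x)$ supports $x$, this gives $\pi x = x$. Hence $\pi S = \{\pi x \mid x \in S\} = S$, so $\pi \in \fix S$. Thus $U$ is a finite support of $S$, and since $\supp(S)$ is by definition the least support, $\supp(S) \subseteq U$.

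For the reverse inclusion $U \subseteq \supp(S)$ it suffices to show $\supp(x) \subseteq \supp(S)$ for each $x \in S$. I would argue by contradiction: suppose $a \in \supp(x) \setminus \supp(S)$ for some $x \in S$. Since $U$ is finite I can pick a name $b \notin U$; then $b \fresh x$ (as $\supp(x) \subseteq U$) and $b \notin \supp(S)$ (as $\supp(S) \subseteq U$ by the first part). The transposition $(a\,b)$ then fixes $\supp(S)$ pointwise, so $(a\,b) \in \Fix(\supp(S)) \subseteq \fix S$, giving $(a\,b)\,S = S$ and hence $(a\,b)\,x \in S$. By equivariance of $\supp$ we have $\supp((a\,b)\,x) = (a\,b)\,\supp(x)$, which contains $b$ because $a \in \supp(x)$. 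But $(a\,b)\,x \in S$ forces $\supp((a\,b)\,x) \subseteq U$, whence $b \in U$, contradicting the choice of $b$. Therefore $a \in \supp(S)$, as required.

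The first inclusion is essentially bookkeeping with the definitions of support. The main obstacle is the reverse inclusion, and the key idea is the freshness/transposition argument: one uses a name $b$ fresh for the whole finite union $U$ to move the offending name $a$ out of $\supp(x)$ while keeping the set $S$ setwise fixed, and then reads off a contradiction from the fact that the support of the moved element must still lie in $U$. Finiteness of $U$ -- exactly the ufs hypothesis -- is what makes such a fresh $b$ available; the set $\Names$ from the preceding discussion, which is fs but not ufs, shows that the hypothesis cannot simply be dropped.
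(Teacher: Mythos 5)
Your proof is correct. The paper does not prove this lemma itself but cites it from the literature (Gabbay, Theorem~2.29), so there is no in-paper argument to compare against; your two inclusions --- showing that the finite union $U$ is a supporting set to get $\supp(S)\subseteq U$, and then the transposition argument with a name $b$ fresh for all of $U$ to get the reverse inclusion --- constitute the standard proof of this fact, and every step (in particular the uses of $\Fix(\supp(S))\subseteq\fix S$ and of equivariance of $\supp$) is justified by facts stated in the paper's preliminaries.
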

\begin{lem}\label{lem:of}
  Every ufs subset of an orbit-finite set $X$ is finite.
\end{lem}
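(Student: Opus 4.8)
The plan is to reduce the claim to a counting statement: in an orbit-finite nominal set, only finitely many elements can have support contained in a fixed finite set. Since $S$ is ufs, the set $C := \bigcup_{x\in S}\supp(x)$ is finite by definition, and every $x \in S$ satisfies $\supp(x)\subseteq C$. Hence $S \subseteq \{x\in X \mid \supp(x)\subseteq C\}$, and it suffices to prove that the right-hand set is finite for an arbitrary finite $C\subseteq\Names$.

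Because $X$ is orbit-finite, it is a finite union of orbits, so I would fix one orbit $O$ with representative $x_0$, bound the number of $y\in O$ with $\supp(y)\subseteq C$, and then sum over the finitely many orbits. Write $A=\supp(x_0)$, which is finite. Every $y\in O$ has the form $y=\pi x_0$ for some $\pi\in G$, and by equivariance of $\supp$ we get $\supp(y)=\pi\,\supp(x_0)=\pi A$; thus $\supp(y)\subseteq C$ forces $\pi A\subseteq C$, i.e.\ $\pi$ restricts to an injection $A\to C$.

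The key observation — and the crux of the argument — is that the element $\pi x_0$ depends only on the restriction $\pi|_A$: if $\pi|_A=\rho|_A$, then $\rho^{-1}\pi$ fixes $A$ pointwise, so $\rho^{-1}\pi\in\Fix A\subseteq\fix(x_0)$, using that $A=\supp(x_0)$ supports $x_0$; hence $\pi x_0=\rho x_0$. Consequently $\pi\mapsto\pi x_0$ factors through $\pi\mapsto\pi|_A$, so the number of $y\in O$ with $\supp(y)\subseteq C$ is at most the number of injections $A\to C$, which is finite. Taking the union over the finitely many orbits shows $\{x\in X\mid\supp(x)\subseteq C\}$ is finite, and therefore so is $S$.

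I expect the main obstacle to be pinning down precisely \emph{why} $\pi x_0$ is determined by $\pi|_{\supp(x_0)}$; everything else is bookkeeping. This is exactly where the defining property of support, $\Fix A\subseteq\fix(x_0)$, is invoked, and it is what collapses the infinitely many candidate permutations into the finitely many relevant restrictions.
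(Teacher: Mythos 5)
Your proof is correct and complete. It is worth noting, though, that it is considerably more explicit than the paper's own argument, which is just two sentences: the first shows the (converse) implication that finite sets are ufs, and the second invokes Lemma~\ref{lem:supp} to conclude that $\supp(S)=\bigcup_{x\in S}\supp(x)$ is finite. As written, the paper's proof only establishes finiteness of $\supp(S)$, not of $S$ itself; the step from ``all elements of $S$ have support inside a fixed finite set $C$'' to ``$S$ is finite'' is left entirely implicit. That step is exactly the content of your injection-counting argument: within a single orbit with representative $x_0$ and $A=\supp(x_0)$, the map $\pi\mapsto\pi x_0$ factors through $\pi\mapsto\pi|_A$ because $\Fix A\subseteq\fix(x_0)$, so the elements of the orbit with support in $C$ are bounded by the number of injections $A\to C$, and orbit-finiteness lets you sum over orbits. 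So your proposal does not merely reproduce the paper's proof --- it supplies the missing (and essential) counting argument, and in that sense is the more trustworthy of the two. The only cost is length; the paper buys brevity at the price of leaving the crux unargued.
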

For a nominal set $X$ we have the \emph{abstraction
  set}~\cite{GabbayPitts99}
\begin{equation*}
[\Names]X=(\Names\times X)/{\sim}
\end{equation*}
where $\sim$ abstracts the notion of $\alpha$-equivalence as known
from calculi with name binding, such as the $\lambda$-calculus:
$(a,x)\sim(b,y)$ iff $(c\,a)\cdot x=(c\,b)\cdot y$ for any fresh
$c$. This captures the situation where $x$ and $y$ differ only in the
concrete name given to a bound entity that is called $a$ in $x$ and
$b$ in~$y$, respectively. We write $\langle a\rangle x$ for the
$\sim$-equivalence class of~$(a,x)$. E.g.\
$\langle a\rangle\{a,d\}=\langle b\rangle \{b,d\}$ in
$[\Names]\powf(\Names)$ provided that $d\notin\{a,b\}$.

\takeout{
\myparagraph{Coalgebra} An \emph{$F$-coalgebra} $(C,\gamma)$ for an
endofunctor $F:\BC\to\BC$ on a category $\BC$ consists of a
$\BC$-object $C$ of \emph{states} and a morphism $\gamma:C\to FC$;
here, we are interested in the case $\BC=\Nom$. A \emph{coalgebra
  morphism} $f:(C,\gamma)\to (D,\delta)$ is a morphism $f:C\to D$ such
that $Ff\gamma=\delta f$. An $F$-coalgebra $(C,\gamma)$ is
\emph{final} if for each $F$-coalgebra $(D,\delta)$, there exists a
unique coalgebra morphism $(D,\delta)\to(C,\gamma)$. 
E.g.\ coalgebras for the functor $FX=\Names \times X$ on $\Nom$
consist of equivariant maps $X\to\Names$ (output) and $X\to X$ (next
state), thus produce a stream of names at each state; by equivariance,
this stream has finite support, i.e.\ contains only finitely many
distinct names. Consequently, the final $F$-coalgebra in this case is
the set of finitely supported streams over~$\Names$.
}

\section{Strings and Languages with Name Binding}
\label{sec:barstrings}

\noindent As indicated in the introduction, we will take a simplified
view of \emph{data languages} as languages over an infinite alphabet;
we will use the set $\Names$ of names, introduced in
Section~\ref{sec:prelim}, as this alphabet, so that a \emph{data
  language} is just a subset $A\subseteq\Names^*$. Much like nominal
Kleene algebra (NKA)~\cite{GabbayCiancia11}, our formalism will
generate data words from more abstract strings that still include a
form of \emph{name binding}. Unlike in NKA, our binders will have
unlimited scope to the right, a difference that is in fact immaterial
at the level of strings but will be crucial at the level of regular
expressions. We write a bound occurrence of $a\in\Names$ as
$\newletter a$, and define an extended alphabet $\bar\Names$ by
\begin{equation*}
  \bar\Names=\Names\cup\{\,\newletter{a}\mid a\in\Names\}.
\end{equation*}%
\vspace{-2em}
\begin{defn}
  A \emph{bar string} is a word over $\bar\Names$, i.e.\ an element of
  $\bar\Names^*$. The set $\bar\Names^*$ is made into a nominal set by
  the letter-wise action of $G$. The \emph{free names} occurring in a
  bar string $w$ are those names $a$ that occur in $w$ to the left of
  any occurrence of $\newletter{a}$. A bar string is \emph{clean} if
  its bound letters $\newletter a$ are mutually distinct and distinct
  from all its free names. We write $\FN(w)$ for the set of free names
  of $w$, and say that $w$ is \emph{closed} if $\FN(w)=\emptyset$;
  otherwise, $w$ is \emph{open}. We define \emph{$\alpha$-equivalence}
  $\alphaeq$ on bar strings as the equivalence (\emph{not}:
  congruence) generated by $w \newletter{a}v\alphaeq w\newletter{b}u$
  if $\eq{a}v=\eq{b}u$ in $[\Names]\bar\Names^*$
  (Section~\ref{sec:prelim}). We write $[w]_\alpha$ for the
  $\alpha$-equivalence class of $w$. For a bar string $w$, we denote
  by $\unbar(w)\in\Names^*$ (for \emph{unbind}) the word arising from
  $w$ by replacing all bound names $\newletter a$ with the
  corresponding free name $a$.
\end{defn}
The set $\FN(w)$ is invariant under $\alpha$-equivalence, so we have a
well-defined notion of \emph{free names} of bar strings modulo
$\alphaeq$. Every bar string is $\alpha$-equivalent to a clean one.

\begin{expl}
  We have
  \( [ab\newletter{c}ab]_\alpha \neq [ab\newletter{a}ab]_\alpha =
  [ab\newletter{c}cb]_\alpha \neq [ap\newletter{c}cp]_\alpha \) where
  $\FN(ab\newletter{c}ab) = \FN(ab\newletter{a}ab) = \{a,b\}$. The bar
  string $ab\newletter{a}ab$ is not clean, and an $\alpha$-equivalent
  clean one is $ab\newletter{c}cb$.
\end{expl}
\begin{defn}
  \label{def:barString}
  A \emph{literal language} is a set of bar strings, and a \emph{bar
    language} is an fs set of bar strings modulo $\alpha$-equivalence,
  i.e.~an fs subset of
  \begin{equation}\label{eq:barM}
    \bar M:=\bar\Names^*/\mathord{\alphaeq}.
  \end{equation}
  A literal or bar language is \emph{closed} if all bar strings it
  contains are closed.
\end{defn}
\noindent Bar languages capture \emph{global freshness}; in fact, the
operator $N$ defined by
\begin{equation}  \label{eq:globalFreshness}
N(L)=\{\unbar(w)\mid w\text{ clean}, [w]_\alpha\in
L\}\subseteq\Names^*
\end{equation}
is injective on closed bar languages. Additionally, we define the
\emph{local freshness semantics} $\datalang(L)$ of a bar language $L$
by
\begin{equation}
  \datalang(L)=\{\unbar(w)\mid [w]_\alpha\in L\}\subseteq\Names^*.
  \label{eq:localFreshness}
\end{equation}
That is, $\datalang(L)$ is obtained by taking \emph{all}
representatives of $\alpha$-equivalence classes in $L$ and then
removing bars, while $N$ takes only clean
representatives. Intuitively, $D$ enforces local freshness by blocking
$\alpha$-renamings of bound names into names that have free
occurrences later in the bar string. The operator $\datalang$ fails to
be injective; e.g.\ (omitting notation for $\alpha$-equivalence
classes)
$D(\{\newletter a\newletter b,\newletter aa\})=\Names^2=D(\{\newletter
a\newletter b\})$.
This is what we mean by our slogan that \emph{local freshness is a
  quotient of global freshness}.
\begin{rem}
  Again omitting $\alpha$-equivalence classes, we have
  $D(\{\newletter a\newletter b\})=\Names^2$ because
  $\newletter a\newletter b\alphaeq\newletter a\newletter a$. On the
  other hand,
  $D(\{\newletter a\newletter b a\})=\{cdc\in\Names^3\mid c\neq d\}$
  because
  $\newletter a\newletter b a\not\alphaeq \newletter a\newletter a a$.
  We see here that since our local freshness semantics is based on
  $\alpha$-equivalence, we can only insist on a letter $d$ being
  distinct from a previously seen letter $c$ if $c$ will be seen again
  later. This resembles the process of \emph{register allocation} in a
  compiler, where program variables are mapped to CPU registers
  (see \cite[Sec.~9.7]{dragonbook1986} for
  details): 
  Each time the register allocation algorithm needs a register for a
  variable name ($\newletter{v}$), any register may be (re)used whose
  current content is not going to be accessed later.
\end{rem}

\begin{rem}
  In \emph{dynamic sequences}~\cite{GabbayEA15}, there are two
  dynamically scoped constructs $\langle a$ and $a \rangle$ for
  dynamic allocation and deallocation, respectively, of a name $a$; in
  this notation, our $\newletter{a}$ corresponds to $\langle a a$.  
\end{rem}

\section{Regular Bar Expressions}\label{sec:regexp}

Probably the most obvious formalism for bar languages are regular
expressions, equivalently finite automata, over the extended alphabet
$\bar\Names$. Explicitly:

\begin{defn}
  A \emph{nondeterministic finite bar automaton}, or \emph{bar NFA}
  for short, over $\Names$ is an NFA $A$ over $\bar\Names$. We call
  transitions of type $q\trans{a}q$ in $A$ \emph{free transitions} and
  transitions of type $q\trans{\scriptnew{a}}q$ \emph{bound
    transitions}. The \emph{literal language} $\litlang(A)$ of $A$ is
  the language accepted by $A$ as an NFA over $\bar\Names$. The
  \emph{bar language} $\barlang(A)\subseteq\bar M$
  (see~\eqref{eq:barM}) accepted by $A$ is defined as
  \begin{equation*}
    \barlang(A)=\litlang(A)/\mathord{\alphaeq}.
  \end{equation*}
  Generally, we denote by $\litlang(q)$ the $\bar\Names$-language
  accepted by the state $q$ in $A$ and by $\barlang(q)$ the quotient
  of $\litlang(q)$ by $\alpha$-equivalence.  The \emph{degree}
  $\degree(A)$ of $A$ is the number of names $a\in\Names$ that occur
  in transitions $q\trans{a}q'$ or $q\trans{\scriptnew a}q'$ in $A$.
  
  Similarly, a \emph{regular bar expression} is a regular expression
  $r$ over $\bar\Names$; the \emph{literal language}
  $\litlang(r)\subseteq\bar\Names^*$ \emph{defined by $r$} is the
  language expressed by $r$ as a regular expression, and the \emph{bar
    language defined by $r$} is
  $\barlang(r)=\litlang(r)/{\alphaeq}$. The \emph{degree} $\degree(r)$
  \emph{of $r$} is the number of names $a$ occurring as either
  $\newletter a$ or~$a$ in~$r$.
\end{defn}
\begin{expl}
  We have
  $L_\alpha(ac+\newletter{c}d) = \{ac\}\cup [\newletter{c}d]_\alpha$.
  Under local freshness semantics, this bar language contains for
  example $ad$, $bd$, and $cd$ but not
  $dd$. $D(L_\alpha\big((a+\newletter{a})^*\big))$ is the same
  language as $D(L_\alpha(\newletter{a}^*))$, even though
  $(a+\newletter{a})^*$ and $\newletter{a}^*$ define different bar
  languages.
\end{expl}
\begin{rem}\label{rem:session}
  Up to the fact that we omit the finite component of the alphabet
  often considered in data languages, a \emph{session
    automaton}~\cite{BolligEA14} is essentially a bar NFA (where free
  names $a$ are denoted as $a^\uparrow$, and bound names
  $\newletter a$ as $a^\circledast$). It defines an $\Names$-language
  and interprets bound transitions for $\newletter a$ as binding $a$
  to some globally fresh name. In the light of the equivalence of
  global freshness semantics and bar language semantics in the closed
  case, session automata are thus essentially the same as bar NFAs;
  the only difference concerns the treatment of open bar strings:
  While session automata explicitly reject bar strings that fail to be
  closed (\emph{well-formed}~\cite{BolligEA14}), a bar NFA will
  happily accept open bar strings. Part of the motivation for this
  permissiveness is that we now do not need to insist on regular bar
  expressions to be closed; in particular, regular bar expressions are
  closed under subexpressions.
\end{rem}
\begin{expl}
  In terms of $\Names$-languages, bar NFAs under global freshness
  semantics, like session automata, can express the language ``all
  letters are distinct'' (as $\newletter a^*$) but not the universal
  language~$\Names^*$~\cite{BolligEA14}.
\end{expl}

\begin{expl}\label{expl:regexp}
  The bar language
  $L=\{\epsilon,\newletter ba,\newletter ba \newletter ab,$
  $\newletter ba\newletter ab\newletter ba, \newletter ba\newletter
  ab\newletter ba\newletter ab\dots\}$
  (omitting equivalence classes) is defined by the regular bar
  expression $(\newletter ba \newletter ab)^*(1+\newletter ba)$ and
  accepted by the bar NFA $A$ with four states $s,t,u,v$, where $s$ is
  initial and $s$ and $u$ are final, and transitions
  $s\trans{\scriptnew b} t\trans{a} u\trans{\scriptnew a}v\trans{b}s$.
  Under global freshness, the closed bar language $\newletter a L$
  defines the language of odd-length words over $\Names$ with
  identical letters in positions $0$ and $2$ (if any), and with every
  letter in an odd position being globally fresh and repeated three
  positions later. Under local freshness, $\newletter a L$ defines the
  $\Names$-language consisting of all odd-length words over $\Names$
  that contain the same letters in positions $0$ and $2$ (if any) and
  repeat every letter in an odd position three positions later (if
  any) \emph{but no earlier}; that is, the bound names are indeed
  interpreted as being locally fresh. The reason for this is that,
  e.g., in the bar string $\newletter a\newletter ba \newletter ab$,
  $\alpha$-renaming of the bound name $\newletter b$ into
  $\newletter a$ is blocked by the occurrence of $a$
  after~$\newletter b$; similarly, the second occurrence of
  $\newletter a$ cannot be renamed into~$\newletter
  b$. \lsnote{@Thorsten: this is where the new examples can go}
\end{expl}
\begin{expl}
    The choice of fresh letters may restrict the branching later:
    The language $D(L_\alpha(\newletter{a} (c+dd))) = \{ac,dc,add,cdd\mid a\in
    \A\setminus\{c,d\}\}$ contains neither $bbb$ nor $cc$.
\end{expl}
\noindent We will see in the sequel that bar NFAs and regular bar
expressions are expressively equivalent to several other models,
specifically 
\begin{itemize}
\item under both semantics, to a nominal automaton model with name
  binding that we call \emph{regular nondeterministic nominal
    automata};
\item under local freshness, to a class of nondeterministic orbit
  finite automata~\cite{BojanczykEA14}; and consequently to a class of
  register automata.
\end{itemize}

\myparagraph{Nominal Kleene algebra} We recall that expressions $r,s$
of \emph{nominal Kleene algebra (NKA)}~\cite{GabbayCiancia11}, briefly
\emph{NKA expressions}, are defined by the grammar
\begin{equation*}
  r,s::= 0 \mid 1 \mid a \mid r+s\mid rs \mid r^* \mid \nu a.\,r\qquad 
  (a\in\Names).
\end{equation*}
Kozen et al.~\cite{KozenEA15b,KozenEA15} give a semantics of NKA in
terms of \emph{$\nu$-languages}. These are fs languages over words
with binding, so called \emph{$\nu$-strings}, which are either $1$ or
$\nu$-regular expressions formed using only names $a\in\Names$,
sequential composition, and name binding $\nu$, taken modulo the
equational laws of NKA~\cite{GabbayCiancia11}, including
$\alpha$-equivalence and laws for scope extension of
binding. \label{p:iso} In this semantics, a binder $\nu a$ is just
interpreted as itself, and all other clauses are standard. It is easy
to see that the nominal set of $\nu$-strings modulo the NKA laws is
isomorphic to the universal bar language $\bar M$; one converts bar
strings into $\nu$-strings by replacing any occurrence of
$\newletter{a}$ with $\nu a.a$, with the scope of the binder extending
to the end of the string. On closed expressions, $\nu$-language
semantics is equivalent to the semantics originally defined by Gabbay
and Ciancia~\cite{GabbayCiancia11,KozenEA15b}, which is given by the
operator $N$ defined in~\eqref{eq:globalFreshness} (now applied also
to languages containing open bar strings). Summing up, we can see NKA
as another formalism for bar languages. We will see in the next
section that regular bar expressions are strictly more expressive than
NKA; the crucial difference is that the name binding construct $\nu a$
of NKA has a static scope, while bound names $\newletter a$ in regular
bar expressions have dynamic scope.
\begin{rem}
  On open expressions, the semantics of~\cite{GabbayCiancia11}
  and~\cite{KozenEA15,KozenEA15b} differ as $N$ may interpret bound
  names with free names appearing elsewhere in the expression; e.g.\
  the NKA expressions $a+\nu a.\,a$ and $\nu a.\,a$ have distinct bar
  language semantics $\{a,\newletter a\}$ and $\{\newletter a\}$,
  respectively, which are both mapped to $\Names$ under $N$. For
  purposes of expressivity comparisons, we will generally restrict to
  closed expressions as well as ``closed'' automata and languages in
  the sequel. For automata, this typically amounts to the initial
  register assignment being empty, and for languages to being
  equivariant subsets of $\bar\Names^*$. 
\end{rem}
\section{Regular Nondeterministic Nominal Automata}\label{sec:rnna}

%


\noindent We proceed to develop a nominal automaton model that
essentially introduces a notion of configuration space into the
picture, and will turn out to be equivalent to bar NFAs. The
deterministic restriction of our model has been considered in the
context of NKA~\cite{KozenEA15}.

\begin{defn}
  A \emph{regular nondeterministic nominal automaton (RNNA)} is a
  tuple $A=(Q,\to,s,F)$ consisting of
  \begin{itemize}
  \item an orbit-finite set $Q$ of states, with an \emph{initial} state $s\in Q$;
  \item an equivariant subset $\to$ of $Q\times\bar\Names\times Q$,
    the \emph{transition relation}, where we write
    $q\trans{\alpha} q'$ for $(q,\alpha,q')\in\mathord{\to}$;
    transitions of type $q\trans{a}q'$ are called \emph{free}, and
    those of type $q\trans{\scriptnew a}q'$ \emph{bound};
  \item an equivariant subset $F\subseteq Q$ of \emph{final} states
  \end{itemize} 
  such that the following conditions are satisfied:
  \begin{itemize}
  \item The relation $\to$ is \emph{$\alpha$-invariant}, i.e.\ closed
    under $\alpha$-equivalence of transitions, where transitions
    $q \trans{\scriptnew a} q'$ and $p \trans{\scriptnew b} p'$ are
    \emph{$\alpha$-equivalent} if $q = p$ and
    $\langle a \rangle q' = \langle b \rangle p'$.
  \item The relation $\to$ is \emph{finitely branching up to
      $\alpha$-equivalence}, i.e.\ for each state $q$ the sets
    $\{(a,q')\mid q\trans{a}q'\}$ and
    $\{\langle a\rangle q'\mid q\trans{\scriptnew a}q'\}$ are finite
    (equivalently ufs, by Lemma~\ref{lem:of}).
  \end{itemize}
  The \emph{degree} $\degree(A)=\max\{\sharp\supp(q)\mid q\in Q\}$ of
  $A$ is the maximum size of supports of states in $A$.
\end{defn} 
\begin{rem}
  For readers familiar with universal coalgebra~\cite{Rutten00}, we
  note that RNNAs have a much more compact definition in coalgebraic
  terms, and in fact we regard the coalgebraic definition as evidence
  that RNNAs are a natural class of automata; however, no familiarity
  with coalgebras is required to understand the results of this
  paper. Coalgebraically, an RNNA is simply an orbit-finite coalgebra
  $\gamma:Q\to FQ$ for the functor $F$ on $\Nom$ given by
  \begin{equation*}
    FX = 2 \times\powufs(\Names\times X)\times\powufs([\Names]X),
  \end{equation*}
  together with an initial state $s \in Q$. The functor $F$ is a
  nondeterministic variant of the functor
  $KX = 2\times X^\Names\times[\Names]X$ whose coalgebras are
  \emph{deterministic nominal automata}~\cite{KozenEA15}. Indeed Kozen
  et al.~\cite{KozenEA15} show that the $\nu$-languages, equivalently
  the bar languages, form the final $K$-coalgebra.
\end{rem}
%
%
We proceed to define the language semantics of RNNAs.
\begin{defn}
  An RNNA $A$, with data as above, \emph{(literally) accepts} a bar
  string $w\in\bar\Names^*$ if $s\trans{w}q$ for some $q\in F$, where
  we extend the transition notation $\trans{w}$ to bar strings in the
  usual way. The \emph{literal language accepted by $A$} is the set
  $\litlang(A)$ of bar strings accepted by $A$, and the \emph{bar
    language accepted by $A$} is the quotient
  $\barlang(A) = \litlang(A)/\mathord{\equiv_\alpha}$.
\end{defn}


\noindent A key property of RNNAs is that supports of states evolve in
the expected way along transitions (cf.~\cite[Lemma 4.6]{KozenEA15}
for the deterministic case):
\begin{lem}\label{lem:trans}
  Let $A$ be an RNNA. Then the following hold.
  \begin{enumerate}
  \item\label{item:free-trans} If $q\trans{a}q'$ in $A$ then
    $\supp(q')\cup\{a\}\subseteq\supp(q)$.
  \item\label{item:bound-trans} If $q\trans{\scriptnew{a}}q'$ in $A$ then
    $\supp(q')\subseteq\supp(q)\cup\{a\}$.
  \end{enumerate}
\end{lem}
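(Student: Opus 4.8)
The plan is to derive both parts from a single principle: for any state $q$, the collection of its outgoing transitions forms a finitely supported (indeed finite) set whose support is controlled by $\supp(q)$ via equivariance, and Lemma~\ref{lem:supp} then lets me read off the supports of the individual successors. The general facts I rely on are that an equivariant map between nominal sets cannot enlarge supports, i.e.\ $\supp(fx)\subseteq\supp(x)$, together with the product formula $\supp(a,x)=\{a\}\cup\supp(x)$ (with $\supp(a)=\{a\}$ for $a\in\Names$) and the abstraction support formula $\supp(\langle a\rangle x)=\supp(x)\setminus\{a\}$.

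For part~\itemref{item:free-trans} I would consider the set of free successors $S_q=\{(a,q')\mid q\trans{a}q'\}\subseteq\Names\times Q$. Since $\to$ is equivariant, the assignment $q\mapsto S_q$ satisfies $\pi S_q=S_{\pi q}$, so it is an equivariant map into $\powufs(\Names\times Q)$ (each $S_q$ is finite by the finite-branching condition, hence ufs), giving $\supp(S_q)\subseteq\supp(q)$. As $S_q$ is ufs, Lemma~\ref{lem:supp} yields $\supp(S_q)=\bigcup_{(a,q')\in S_q}(\{a\}\cup\supp(q'))$. The given transition contributes the term $\{a\}\cup\supp(q')$, whence $\{a\}\cup\supp(q')\subseteq\supp(S_q)\subseteq\supp(q)$, which is exactly the claim.

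For part~\itemref{item:bound-trans} the same scheme works once I pass to abstraction classes, which is the one genuinely new ingredient: the raw set $\{(a,q')\mid q\trans{\scriptnew a}q'\}$ is typically infinite because $\to$ is $\alpha$-invariant, so I instead take the bound-successor set $B_q=\{\langle a\rangle q'\mid q\trans{\scriptnew a}q'\}\subseteq[\Names]Q$, which is finite by finite branching up to $\alpha$-equivalence. Equivariance of $\to$ together with $\pi\langle a\rangle q'=\langle\pi a\rangle\pi q'$ shows $q\mapsto B_q$ is equivariant, so $\supp(B_q)\subseteq\supp(q)$; Lemma~\ref{lem:supp} and the abstraction formula then give $\supp(B_q)=\bigcup_{\langle a\rangle q'\in B_q}(\supp(q')\setminus\{a\})$. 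For the given bound transition this forces $\supp(q')\setminus\{a\}\subseteq\supp(q)$, i.e.\ $\supp(q')\subseteq\supp(q)\cup\{a\}$.

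I expect the only real obstacle to be conceptual rather than computational: one must recognise that bound transitions have to be organised via the abstraction set $[\Names]Q$ (so that the successor set is ufs, and finiteness comes from branching \emph{up to} $\alpha$-equivalence) rather than as raw labelled transitions, and then apply $\supp(\langle a\rangle x)=\supp(x)\setminus\{a\}$ correctly. The equivariance verifications for $q\mapsto S_q$ and $q\mapsto B_q$, and the product/abstraction support computations, are routine.
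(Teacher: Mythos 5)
Your proof is correct and follows essentially the same route as the paper: the paper's proof likewise considers the ufs sets $Z=\{(a,q')\mid q\trans{a}q'\}$ and $Z=\{\langle a\rangle q'\mid q\trans{\scriptnew a}q'\}$, uses that the support of an element of a ufs set is contained in the support of the set, and that $Z$ depends equivariantly on $q$ to get $\supp(Z)\subseteq\supp(q)$. Your identification of the abstraction set $[\Names]Q$ as the right home for bound successors, and the support formulas for pairs and abstractions, match the paper's argument exactly.
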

In fact, the properties in the lemma are clearly also sufficient for
ufs branching. From Lemma~\ref{lem:trans}, an easy induction shows
that for any state $q$ in an RNNA and any $w$ literally accepted by
$A$ from $q$, we have $\FN(w) = \supp([w]_\alpha) \subseteq
\supp(q)$. Hence:
\begin{cor}\label{cor:lang-ufs}
  Let $A$ be an RNNA. Then $\barlang(A)$ is ufs; specifically, if $s$ is the
  initial state of $A$ and $w\in \barlang(A)$, then
  $\supp(w)\subseteq\supp(s)$.
\end{cor}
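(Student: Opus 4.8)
The plan is to derive the Corollary directly from the inclusion $\FN(w)\subseteq\supp(q)$ asserted just above it, applied to the initial state $q=s$, and to read off uniform finite support from finiteness of $\supp(s)$. Every $w\in\barlang(A)$ is of the form $[v]_\alpha$ for some bar string $v\in\litlang(A)$, i.e.\ for some $v$ with $s\trans{v}q_f$ and $q_f\in F$; so it suffices to show, for every state $q$ and every $v$ literally accepted from $q$, that $\supp([v]_\alpha)=\FN(v)\subseteq\supp(q)$. Specialising to $q=s$ then yields $\supp(w)=\supp([v]_\alpha)\subseteq\supp(s)$, which is the ``specifically'' clause. Since $s$ is an element of the nominal set $Q$, its support $\supp(s)$ is finite, whence $\bigcup_{w\in\barlang(A)}\supp(w)\subseteq\supp(s)$ is finite and $\barlang(A)$ is ufs (and in particular fs) by definition.

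I would prove $\FN(v)\subseteq\supp(q)$ by induction on the length of $v$, using Lemma~\ref{lem:trans}. The base case $v=\epsilon$ is immediate, as $\FN(\epsilon)=\emptyset$. If $v=a\,v'$ arises from a free transition $q\trans{a}q'$ followed by an accepting run of $v'$ from $q'$, then $\FN(v)=\{a\}\cup\FN(v')$; by Lemma~\ref{lem:trans}\itemref{item:free-trans} we have $\{a\}\cup\supp(q')\subseteq\supp(q)$, and by induction $\FN(v')\subseteq\supp(q')$, so $\FN(v)\subseteq\supp(q)$. If instead $v=\newletter a\,v'$ arises from a bound transition $q\trans{\scriptnew a}q'$, then the leading binder captures the subsequent free occurrences of $a$, so $\FN(v)=\FN(v')\setminus\{a\}$; by Lemma~\ref{lem:trans}\itemref{item:bound-trans} we have $\supp(q')\subseteq\supp(q)\cup\{a\}$, and by induction $\FN(v')\subseteq\supp(q')$, whence $\FN(v)\subseteq(\supp(q)\cup\{a\})\setminus\{a\}\subseteq\supp(q)$.

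It remains to identify $\supp([v]_\alpha)$ with $\FN(v)$, which I would obtain from the standard nominal fact that abstraction removes the abstracted name from the support, i.e.\ $\supp(\langle a\rangle x)=\supp(x)\setminus\{a\}$ in $[\Names]X$; a routine induction on $v$, whose clauses mirror the recursion for $\FN$ used above, then gives $\supp([v]_\alpha)=\FN(v)$. The only genuinely delicate point is the bound-transition case: one must check that the single name $a$ which Lemma~\ref{lem:trans}\itemref{item:bound-trans} may add to $\supp(q')$ is exactly the name that the leading binder $\newletter a$ deletes from $\FN(v)$, so that bound names can never leak into $\supp(s)$. Everything else is routine bookkeeping, and I expect no substantial obstacle once this cancellation is in place.
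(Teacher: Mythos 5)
Your proposal is correct and follows exactly the route the paper takes: the text immediately preceding the corollary derives $\FN(w)=\supp([w]_\alpha)\subseteq\supp(q)$ for any $w$ literally accepted from $q$ by "an easy induction" from Lemma~\ref{lem:trans}, and the corollary then follows since $\supp(s)$ is finite. Your write-up merely makes explicit the induction steps (including the correct cancellation of the bound name $a$ in the $\newletter a$-case) and the identification $\supp([v]_\alpha)=\FN(v)$ that the paper leaves implicit.
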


\noindent We have an evident notion of $\alpha$-equivalence of paths
in RNNAs, defined analogously as for bar strings. 
Of course,
$\alpha$-equivalent paths always start in the same state. The set of
paths of an RNNA $A$ is closed under $\alpha$-equivalence.
However, this does not in
general imply that $\litlang(A)$ is closed under $\alpha$-equivalence;
e.g.\ for $A$ being
\begin{equation}\label{eq:non-alpha-closure}
s()\trans{\scriptnew a}t(a)\trans{\scriptnew b}u(a,b)
\end{equation}
(with $a,b$ ranging over distinct names in $\Names$), where $s()$ is
initial and the states $u(-,-)$ are final, we have
$\newletter a\newletter b\in\litlang(A)$ but the $\alpha$-equivalent
$\newletter a\newletter a$ is not in $\litlang(A)$. Crucially,
closure of $\litlang(A)$ under $\alpha$-equivalence is
nevertheless without loss of generality, as we show next. 
\begin{defn}\label{def:name-drop} An RNNA $A$ is \emph{name-dropping}
  if for every state $q$ in $A$ and every subset $N\subseteq\supp(q)$
  there exists a state $q|_N$ in $A$ that \emph{restricts $q$ to $N$};
  that is, $\supp(q|_N)= N$, $q|_N$ is final if $q$ is final, and
  $q|_N$ has at least the same incoming transitions as $q$ (i.e.\
  whenever $p\trans{\alpha}q$ then $p\trans{\alpha}q|_N$), and as many
  of the outgoing transitions of $q$ as possible;
  i.e.~$q|_N\trans{a}q'$ whenever $q\trans{a}q'$ and
  $\supp(q')\cup\{a\}\subseteq N$, and $q|_N\trans{\scriptnew a}q'$
  whenever $q\trans{\scriptnew a}q'$ and
  $\supp(q')\subseteq N\cup\{a\}$.
\end{defn} 
\noindent The counterexample shown in~\eqref{eq:non-alpha-closure}
fails to be name-dropping, as no state restricts $q=u(a,b)$ to
$N = \{b\}$. The following lemma shows that closure under
$\alpha$-equivalence is restored under name-dropping:
\begin{lem}\label{lem:name-drop-alpha}  Let $A$ be a name-dropping RNNA. Then $\litlang(A)$ is closed under
  $\alpha$-equivalence, i.e.\
  $\litlang(A)=\{w\mid[w]_\alpha\in\barlang(A)\}$.
\end{lem}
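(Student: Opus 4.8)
The plan is to prove the nontrivial inclusion, namely that $\litlang(A)$ is closed under $\alphaeq$; the reverse inclusion $\litlang(A)\subseteq\{w\mid[w]_\alpha\in\barlang(A)\}$ is immediate from $\barlang(A)=\litlang(A)/\alphaeq$. Since $\alphaeq$ is the equivalence closure of the symmetric one-step relation $R$ given by $x\newletter a y\mathrel R x\newletter b z$ whenever $\eq a y=\eq b z$, it suffices to prove that $\litlang(A)$ is closed under a single such step: assuming $v=x\newletter a y\in\litlang(A)$ and $\eq a y=\eq b z$, I would derive $w=x\newletter b z\in\litlang(A)$. The case $a=b$ forces $y=z$ and is trivial, so assume $a\neq b$. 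A standard computation with the abstraction set (choosing $c$ fresh and comparing supports in $(c\,a)y=(c\,b)z$) shows that $\eq a y=\eq b z$ with $a\neq b$ forces $b\notin\supp(y)$, i.e.\ neither $b$ nor $\newletter b$ occurs in the bar string $y$, and that $z=(a\,b)\cdot y$.

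Fix an accepting path for $v$ and split it at the distinguished bound letter as $s\trans{x}p\trans{\scriptnew a}q_0\trans{\alpha_1}q_1\cdots\trans{\alpha_m}q_m$ with $q_m\in F$ and $y=\alpha_1\cdots\alpha_m$; by the previous paragraph no $\alpha_i$ is $b$ or $\newletter b$. The idea is to reuse the prefix $s\trans{x}p$ unchanged, rename the bound transition from $a$ to $b$, and transport the suffix by the transposition $(a\,b)$. The renaming is only legitimate if $b$ is fresh for the target of the bound transition, and this can fail because the automaton may retain $b$ in a register even though $b$ is never read in $y$. This is exactly where name-dropping enters: I set $\tilde q_i=q_i|_{\supp(q_i)\setminus\{b\}}$, so that $b\fresh\tilde q_i$ for all $i$, using Definition~\ref{def:name-drop}.

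The key step is to show that $\tilde q_0\trans{\alpha_1}\tilde q_1\cdots\trans{\alpha_m}\tilde q_m$ is again a path of $A$, by chaining the two halves of the name-dropping condition on each suffix transition $q_{i-1}\trans{\alpha_i}q_i$ (recall $\alpha_i\notin\{b,\newletter b\}$). First I apply the ``incoming'' clause to the target $q_i$ to obtain the genuine transition $q_{i-1}\trans{\alpha_i}\tilde q_i$ (consistent with Lemma~\ref{lem:trans} since $\supp(\tilde q_i)\subseteq\supp(q_i)$), and then the ``outgoing'' clause to the source $q_{i-1}$ to drop $b$ from it, obtaining $\tilde q_{i-1}\trans{\alpha_i}\tilde q_i$. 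The required side conditions, $\supp(\tilde q_i)\cup\{\alpha_i\}\subseteq\supp(q_{i-1})\setminus\{b\}$ in the free case and $\supp(\tilde q_i)\subseteq(\supp(q_{i-1})\setminus\{b\})\cup\{c\}$ in the bound case $\alpha_i=\newletter c$, follow from Lemma~\ref{lem:trans} together with $b\notin\supp(\tilde q_i)$ and $\alpha_i\neq b$. The same ``incoming'' clause applied to $q_0$ yields $p\trans{\scriptnew a}\tilde q_0$, and finality is inherited, so $\tilde q_m\in F$.

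Finally I assemble the accepting path for $w$. Since $b\fresh\tilde q_0$ we have $\eq a{\tilde q_0}=\eq b{(a\,b)\cdot\tilde q_0}$, so $\alpha$-invariance of $\to$ turns $p\trans{\scriptnew a}\tilde q_0$ into $p\trans{\scriptnew b}(a\,b)\cdot\tilde q_0$; applying $(a\,b)$ to the dropped suffix path by equivariance of $\to$ gives $(a\,b)\cdot\tilde q_0\trans{(a\,b)\cdot\alpha_1}\cdots\trans{(a\,b)\cdot\alpha_m}(a\,b)\cdot\tilde q_m$, which reads $(a\,b)\cdot y=z$ and ends in $(a\,b)\cdot\tilde q_m\in F$. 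Concatenated with the untouched prefix $s\trans{x}p$, this is an accepting path for $x\newletter b z=w$. I expect the main obstacle to be the middle step: one must realize that name-dropping has to be applied at \emph{both} endpoints of every suffix transition (incoming on the target, then outgoing on the source) and verify that the support bookkeeping from Lemma~\ref{lem:trans} makes every side condition hold once $b$ has been removed; by contrast, the renaming via $\alpha$-invariance and the equivariant transport of the suffix are routine.
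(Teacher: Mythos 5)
Your proof is correct. The core mechanism coincides with the paper's --- reduce to a single generating step of $\alphaeq$, use name-dropping to pass to states from which the offending name $b$ has been removed, chain the incoming and outgoing clauses of Definition~\ref{def:name-drop} along the suffix, and finish with $\alpha$-invariance of the bound transition --- but the decomposition is genuinely different. The paper packages the restriction step as a language-level statement (Lemma~\ref{lem:lang-restrict}: every $w\in\litlang(q)$ with $\FN(w)\subseteq N$ lies in $\litlang(q|_N)$, proved by exactly your incoming-then-outgoing chaining, except that it shrinks the support to $\FN$ of the remaining suffix at each step rather than merely deleting $b$). It then transports the suffix using equivariance of $q\mapsto\barlang(q)$ (Lemma~\ref{lem:lang-equivar}), which only yields $[w']_\alpha\in\barlang((ab)(q_2|_N))$ and therefore forces an outer induction on word length to recover the \emph{literal} membership $w'\in\litlang((ab)(q_2|_N))$. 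You instead restrict every state of the suffix path to $\supp(q_i)\setminus\{b\}$ in a single pass, verify that the restricted states still form a path, and transport that path pointwise by the transposition $(a\,b)$ using equivariance of $\to$; this produces the accepting path for $x\newletter{b}z$ directly and dispenses with both the auxiliary language lemma and the word-length induction. Your support bookkeeping via Lemma~\ref{lem:trans} (using $\alpha_i\notin\{b,\newletter{b}\}$ to verify the side conditions of the outgoing clause) is exactly what is needed and checks out. Your version is the more elementary and self-contained of the two; the paper's buys a reusable restriction lemma at the cost of a slightly more involved induction.
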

Finally, we can close a given RNNA under name dropping, preserving the
bar language:
\begin{lem}\label{lem:wlog-name-drop}
  Given an RNNA of degree $k$ with $n$ orbits, there exists a bar
  language-equivalent name-dropping RNNA of degree $k$ with at most
  $n2^k$ orbits.
\end{lem}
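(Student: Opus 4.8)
The plan is to build the name-dropping automaton $A'$ by adding, for every state $q$ of $A$ and every subset $N\subseteq\supp(q)$, a state $q|_N$ that behaves like $q$ but whose support has been cut down to $N$. Concretely I would realise $q|_N$ as the $\Fix N$-orbit $\{\pi q\mid\pi\in\Fix N\}$, viewed as an element of $\powfs(Q)$, and take $Q'$ to be the set of all such orbits for $q\in Q$ and $N\subseteq\supp(q)$; the group acts by $\pi\cdot(q|_N)=(\pi q)|_{\pi N}$, which stays in $Q'$ since $\pi N\subseteq\supp(\pi q)$. The first technical point is to prove $\supp(q|_N)=N$ whenever $N\subseteq\supp(q)$: that $N$ supports $q|_N$ is immediate from $\rho\,\Fix N=\Fix N$ for $\rho\in\Fix N$, and minimality follows by the standard freshness argument, transposing some $a\in N$ with a fresh $c\fresh q$ and checking that this moves the orbit precisely because $a\in\supp(q)$. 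This identifies the original states with the orbits $q|_{\supp(q)}=\{q\}$ and gives $\degree(A')\le k$, with equality via $N=\supp(q)$.

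For the orbit count, note that $\pi\cdot(q|_N)=(\pi q)|_{\pi N}$, so the orbits of $Q'$ arising from a fixed orbit of $Q$ (represented by some $q_0$) correspond to orbits of subsets $N\subseteq\supp(q_0)$, of which there are at most $2^{|\supp(q_0)|}\le 2^k$; summing over the $n$ orbits of $Q$ yields at most $n2^k$ orbits. I would then declare $q|_N$ final iff $q\in F$ (well defined and equivariant, as $F$ is), keep $s'=s|_{\supp(s)}$ as initial state, and define the transitions so as to realise Definition~\ref{def:name-drop}: a free transition $q|_N\trans{a}q'|_{N'}$ whenever $q\trans{a}q'$ in $A$ with $N'\cup\{a\}\subseteq N$ and $N'\subseteq\supp(q')$, and a bound transition $q|_N\trans{\scriptnew a}q'|_{N'}$ whenever $q\trans{\scriptnew a}q'$ in $A$ with $N'\subseteq N\cup\{a\}$ and $N'\subseteq\supp(q')$, closing under $\alpha$-equivalence of bound transitions. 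The routine check is well-definedness on orbits: replacing the representative $q$ by $\pi q$ with $\pi\in\Fix N$ sends the witnessing $A$-transition to $\pi q\trans{a}\pi q'$, and since $N'\subseteq N$ forces $\pi\in\Fix N'$, the target orbit $q'|_{N'}$ is unchanged; one also verifies equivariance and finite branching up to $\alpha$, the latter inherited from $A$ since each $A$-successor contributes only finitely many choices of $N'$.

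It then remains to verify that $A'$ is name-dropping and bar-language equivalent to $A$. Name-dropping is built in: for $q|_N\in Q'$ and $M\subseteq N=\supp(q|_N)$ the state $q|_M$ lies in $Q'$ (as $M\subseteq\supp(q)$), has support $M$, is final exactly when $q|_N$ is, and by the transition definition inherits all incoming transitions and exactly the prescribed outgoing ones. For language equivalence I would prove both inclusions by projection. The inclusion $\barlang(A)\subseteq\barlang(A')$ holds because the original transitions survive on the states $q|_{\supp(q)}$, their side conditions being exactly the conclusions of Lemma~\ref{lem:trans}, so $\litlang(A)\subseteq\litlang(A')$. Conversely, every $A'$-transition comes, by construction, from an $A$-transition on the same free letter or on an $\alpha$-equivalent bound letter; hence an accepting run of $A'$ lifts, choosing representatives step by step and translating each witnessing $A$-transition by a permutation in $\Fix N$, to an accepting run of $A$ reading an $\alpha$-equivalent bar string. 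The key point is that such a permutation fixes $N$ pointwise and so preserves any free letter read (which lies in $N$), leaving only bound letters to be renamed; this yields $\barlang(A')\subseteq\barlang(A)$, and hence equality.

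I expect the main obstacle to be the transition definition, which must be simultaneously well defined on the orbits $q|_N$, equivariant, $\alpha$-invariant, finitely branching up to $\alpha$, tight enough that $A'$ adds no bar strings beyond those of $A$, and yet generous enough to be name-dropping. The support computation $\supp(q|_N)=N$ is the other delicate ingredient, since both the preservation of degree and the $n2^k$ orbit bound rest on it.
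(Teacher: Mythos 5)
Your construction is exactly the one the paper uses: states $q|_N=\Fix(N)q$ with action $\pi\cdot(q|_N)=(\pi q)|_{\pi N}$, the same support computation $\supp(q|_N)=N$, the same orbit count, the same transition relation (the paper's bound transitions are precisely your definition closed under $\alpha$-equivalence), and the same two-directional language argument via the embedding $q\mapsto q|_{\supp(q)}$ and step-by-step lifting of $A'$-runs back to $A$ up to $\alpha$-equivalence. The only difference is one of emphasis: the steps you label ``routine'' or ``built in'' --- showing that the successor sets of $q|_N$ do not depend on the chosen representative of the coset $\Fix(N)q$ (especially for bound transitions, where the letter itself may move), and the incoming-bound-transition half of the name-dropping check --- are exactly where the paper's proof spends most of its effort, so these would need to be carried out in full in a complete write-up.
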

\begin{proof}[Sketch]
  From an RNNA $A$, construct an equivalent name-dropping RNNA with
  states of the form
  \begin{equation*}
    q|_N := \Fix(N) q
  \end{equation*}
  where $q$ is a state in $A$, $N\subseteq\supp(q)$, and $\Fix(N) q$
  denotes the orbit of $q$ under $\Fix(N)$. The final states are the
  $q|_N$ with $q$ final in $A$, and the initial state is
  $s|_{\supp(s)}$, where $s$ is the initial state of $A$. As
  transitions, we take
  \begin{itemize}
  \item $q|_N\trans{a}q'|_{N'}$ whenever
    $q\trans{a}q'$, $N'\subseteq N$, and $a\in N$, and
  \item $q|_N\trans{\scriptnew a} q'|_{N'}$ whenever
    $q\trans{\scriptnew b}q''$,
    $N''\subseteq\supp(q'')\cap(N\cup\{b\})$, and
    $\eq{a}{ (q'|_{N'})}=\eq{b}{(q''|_{N''})}$.
    \qed
  \end{itemize}
\end{proof}

\begin{expl} 
  Closing the RNNA from~\eqref{eq:non-alpha-closure} under name
  dropping as per Lemma~\ref{lem:wlog-name-drop} yields additional
  states that we may denote $u(\bot,b)$ (among others), with
  transitions $t(a)\trans{\scriptnew {b}}u(\bot,b)$; now,
  $\eq{b}{u(\bot,b)}=\eq{a}{u(\bot,a)}$, so $\newletter a\newletter a$
  is (literally) accepted.
\end{expl}

\myparagraph{Equivalence to bar NFAs} We proceed to show that RNNAs
are expressively equivalent to bar NFAs by providing mutual
translations. In consequence, we obtain a Kleene theorem connecting
RNNAs and regular bar expressions.
\begin{construction}\label{constr:RNNA}
  We construct an RNNA $\bar A$ from a given bar NFA $A$ with set $Q$
  of states, already incorporating closure under name dropping as per
  Lemma~\ref{lem:wlog-name-drop}. For $q\in Q$, put
\begin{math}
  N_q=\supp(\barlang(q)).
\end{math}
The set $\bar Q$ of states of $\bar A$ consists of pairs
\begin{equation*}
  (q,\pi F_N)\qquad(q\in Q\text{, }N\subseteq N_q)
\end{equation*}
where $F_N$ abbreviates $\Fix(N)$ and $\pi F_N$ denotes a left
coset. Left cosets for~$F_N$ can be identified with injective
renamings $N \to\Names$; intuitively, $(q,\pi F_N)$ restricts $q$ to
$N$ and renames~$N$ according to $\pi$. (That is, we construct a
configuration space, as in other translations into
NOFAs~\cite{CianciaTuosto09,BojanczykEA14}; here, we create virtual
registers according to $\supp(\barlang(q))$.) We let $G$ act on states
by $\pi_1\cdot(q,\pi_2F_N)=(q,\pi_1\pi_2F_N)$. The initial state of
$\bar A$ is $(s,F_{N_s})$, where $s$ is the initial state of $A$; a
state $(q,\pi F_N)$ is final in $\bar A$ iff $q$ is final in $A$. Free
transitions in $\bar A$ are given by
\begin{equation*}
  (q,\pi F_N)\trans{\pi(a)}(q',\pi F_{N'})\;\;\text{whenever}\;\; q\trans{a}q'
\text{ and }N'\cup\{a\}\subseteq N
\end{equation*}
and bound transitions by
\begin{equation*}
  (q,\pi F_N)\trans{\scriptnew{a}}(q',\pi'F_{N'})\;\:\text{whenever}\;\:
   q\trans{\scriptnew{b}}q',
  N'\subseteq N\cup\{b\},\eq{a}{\pi' F_{N'}} = \eq{\pi(b)}{\pi F_{N'}}.
\end{equation*}
\end{construction}
\begin{thm}\label{thm:freernna}
  $\bar A$ is a name-dropping RNNA with at most $|Q|2^{\degree(A)}$
  orbits, $\degree(\bar A)=\degree(A)$, and
  $\barlang(\bar A)=\barlang(A)$.
\end{thm}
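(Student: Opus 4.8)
The plan is to check, in order, that $\bar A$ is a well-defined nominal set with the stated orbit and degree bounds, that it satisfies the RNNA conditions and is name-dropping, and finally that $\barlang(\bar A)=\barlang(A)$, the last being the only substantial point. I begin with supports. Identifying the coset $\pi F_N$ with the injective renaming $a\mapsto\pi(a)$ on $N$ (as in Construction~\ref{constr:RNNA}), one computes $\supp((q,\pi F_N))=\pi(N)$, so every state is finitely supported. Since the $G$-action only moves the coset component, the orbit of $(q,\pi F_N)$ is $\{(q,\sigma F_N)\mid\sigma\in G\}$, and orbits are therefore in bijection with pairs $(q,N)$ with $N\subseteq N_q$. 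As $\barlang(q)$ is ufs (its free names all occur in transitions of $A$), Lemma~\ref{lem:supp} gives $N_q=\bigcup_{w\in\litlang(q)}\FN(w)$, whence $|N_q|\le\degree(A)$; summing $2^{|N_q|}$ over $q$ yields the orbit bound $|Q|2^{\degree(A)}$. Finally $\degree(\bar A)=\max_q|N_q|\le\degree(A)$, which is the degree statement.

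Next I dispatch the axioms. Equivariance of $\to$ and $F$ is immediate from $\pi_1\cdot(q,\pi_2F_N)=(q,\pi_1\pi_2F_N)$ together with the fact that finality depends only on $q$. The relation is $\alpha$-invariant essentially by design: bound transitions are defined through the abstraction equation $\eq{a}{\pi'F_{N'}}=\eq{\pi(b)}{\pi F_{N'}}$, which already quantifies over representatives, so replacing a target abstraction by an $\alpha$-equivalent one preserves the defining condition. Finite branching up to $\alpha$ holds because each $q$ has finitely many outgoing transitions in the NFA $A$ and there are finitely many admissible subsets $N'$. For name-dropping I use that the states have exactly the shape prescribed by Lemma~\ref{lem:wlog-name-drop}: given $(q,\pi F_N)$ and $M\subseteq\supp((q,\pi F_N))=\pi(N)$, the state $(q,\pi F_{N_0})$ with $N_0=\pi^{-1}(M)$ restricts it to $M$, since the transition conditions are monotone in the $N$-component — shrinking $N$ only removes outgoing transitions and retains all incoming ones.

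The heart is $\barlang(\bar A)=\barlang(A)$, proved by matching accepting paths. Any path $(q_0,\pi_0F_{N_0})\trans{\beta_1}\cdots\trans{\beta_n}(q_n,\pi_nF_{N_n})$ of $\bar A$ projects, on forgetting the coset component, to a path $q_0\trans{\alpha_1}\cdots\trans{\alpha_n}q_n$ of $A$ with the same final state; maintaining the invariant $\supp((q_i,\pi_iF_{N_i}))=\pi_i(N_i)$ one checks that the read word $\beta_1\cdots\beta_n$ is $\alpha$-equivalent to $\alpha_1\cdots\alpha_n$, giving $\barlang(\bar A)\subseteq\barlang(A)$. Conversely, given $[w]_\alpha\in\barlang(A)$ I take $w$ clean with a literal accepting run $q_0\trans{\alpha_1}\cdots\trans{\alpha_n}q_n$ in $A$ and lift it to $\bar A$ from $(s,F_{N_s})$: at step $i$ I set $N_i=\FN(\alpha_{i+1}\cdots\alpha_n)$ (so $N_i\subseteq N_{q_i}$ and the free side-condition $N_i\cup\{a\}\subseteq N_{i-1}$ holds), and at each bound step I choose the representative name to be the corresponding clean bound letter of $w$. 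Cleanliness keeps the renamings identity on the free names still to be read, so the lift reads $w$ up to $\alpha$ while name-dropping discards names never read again; hence $\barlang(A)\subseteq\barlang(\bar A)$. By Lemma~\ref{lem:name-drop-alpha} $\litlang(\bar A)$ is already $\alpha$-closed, so equality of bar languages is exactly what is needed.

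The main obstacle lies in this last step: tracking how the renaming $\pi_i$ evolves across bound transitions and confirming that the word read by the projected (resp.\ lifted) run agrees modulo $\alpha$-equivalence with the original. The choice $N_i=\FN(\alpha_{i+1}\cdots\alpha_n)$, the cleanliness of $w$, and the built-in name-dropping closure of $\bar A$ are precisely what keep this bookkeeping consistent, and I expect the detailed proof to concentrate there; the remaining items are routine verifications from the definitions.
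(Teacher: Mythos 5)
Your overall strategy is sound but structured quite differently from the paper's, and the difference matters for where the work lands. The paper factors the construction: it first builds the plain configuration-space RNNA $\tilde A$ with states $(q,\pi H_q)$, $H_q=\Fix(N_q)$ (no restriction to subsets $N$), proves $\barlang(\tilde A)=\barlang(A)$ via a path-normalization lemma (Lemma~\ref{lem:literal-path}: every path from $(q_0,\pi_0 H_{q_0})$ is $\alpha$-equivalent to a $\pi_0$-literal one), and then shows that $\bar A$ is exactly the name-dropping closure $(\tilde A)'$ of Lemma~\ref{lem:wlog-name-drop} -- this hinges on the coset identity $(\Fix N)\pi H_q=\pi F_{\pi^{-1}N}$ -- so that name-dropping-ness, the orbit bound, and preservation of the bar language under the closure all come for free from that lemma. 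Your monolithic approach forces you to handle the evolving renamings and the name-dropping simultaneously in a single path-matching induction; that is exactly the step you defer as the ``main obstacle'', and it is where essentially all of the paper's content (Lemmas~\ref{lem:freernna-trans}, \ref{lem:path-alpha}, \ref{lem:literal-path} and part~(4) of the proof of Lemma~\ref{lem:wlog-name-drop}) is concentrated. Be aware that ``one checks that the read word is $\alpha$-equivalent'' is not routine: the transition-level condition $\eq{a}{\pi'F_{N'}}=\eq{\pi(b)}{\pi F_{N'}}$ relates only single steps, and propagating it along the remainder of the path needs the observation that the support of the $\alpha$-class of a path equals the support of its starting state (Remark~\ref{rem:eqpath}).

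There is also a concrete flaw in your converse inclusion as stated: you ``take $w$ clean with a literal accepting run in $A$'', but $[w]_\alpha\in\barlang(A)$ only guarantees a literal run for \emph{some} representative $w_0$, which need not be clean (a bar NFA may literally accept $\newletter{a}\newletter{a}$ without literally accepting the clean $\newletter{a}\newletter{b}$). To follow your plan you would have to lift the run of $w_0$ while renaming bound steps to the clean letters of $w$, which reintroduces precisely the renaming bookkeeping you were trying to localize, and your choice $N_i=\FN(\alpha_{i+1}\cdots\alpha_n)$ is then pinned to the wrong word. The paper's route here is much shorter and avoids the issue entirely: $A$ embeds as a subautomaton of $\bar A$ via $q\mapsto(q,\id F_{N_q})$ (on an accepting run the side condition $a\in N_q$ for free transitions is automatic, and bound transitions lift with $a=b$ and identity cosets), so every literal accepting run of $A$ is already an accepting run of $\bar A$ reading the same bar string -- no cleanliness and no $\alpha$-renaming is needed in this direction. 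The remaining items in your write-up (supports, orbit count, equivariance, $\alpha$-invariance, finite branching, and the name-dropping verification via monotonicity in the $N$-component) are correct in outline, though the preservation of incoming \emph{bound} transitions under shrinking $N$ uses equivariance of the coset-coarsening map $\pi F_N\mapsto\pi F_{N_0}$ rather than bare monotonicity.
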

\begin{expl}\label{expl:bar-A}
  The above construction converts the bar NFA $A$ of
  Example~\ref{expl:regexp}, i.e.\ the expression
  $(\newletter ba \newletter ab)^*(1+\newletter ba)$, into an RNNA
  that is similar to the one appearing in the counterexample to one
  direction of the Kleene theorem for NKA~\cite{KozenEA15} (cf.\
  Remark~\ref{rem:nka-non-kleene}): By the above description of left
  cosets for $F_N$, we annotate every state $q$ with a list of
  $\size{\supp(\barlang(q))}$ entries that are either (pairwise
  distinct) names or $\bot$, indicating that the corresponding name
  from $\supp(\barlang(q))$ has been dropped. We can draw those orbits
  of the resulting RNNA that have the form $(q,\pi N_q)$, i.e.~do not
  drop any names, as
  \begin{equation*}
  \begin{tikzcd}[row sep=-3.0mm, column sep=6mm,baseline=-1mm]
    &t(c,b)
        \arrow{dr}[above]{c}
    \\
    s(c) \arrow{ur}[above]{\scriptnew b}
    & & u(b) \arrow{dl}[below]{\scriptnew c}
    \\
    & v(b,c) \arrow{ul}[below]{b}
  \end{tikzcd}
  \quad
  \begin{array}{l}
  \text{for $b\neq c$, with $s(c)$, $u(b)$ final for all
  $b,c\in\Names$,}\\
    \text{and $s(c)$ initial.}
  \end{array}
  \end{equation*}
  Additional states then arise from name dropping; e.g.\ for $t$ we
  have additional states $t(\bot,b)$, $t(c,\bot)$, and $t(\bot,\bot)$,
  all with a $\newletter b$-transition from $s(c)$. The states
  $t(\bot,\bot)$ and $t(\bot,b)$ have no outgoing transitions, while
  $t(c,\bot)$ has a $c$-transition to $u(\bot)$.

\end{expl}
We next present the reverse construction, i.e.\ given an RNNA $A$ we
extract a bar NFA~$A_0$ (a subautomaton of $A$) such that
$\barlang(A_0)=\barlang(A)$.

Put $k=\degree(A)$. We fix a set $\Names_0\subseteq\Names$ of size
$\size{\Names_0}=k$ such that $\supp(s)\subseteq\Names_0$ for the initial
state $s$ of $A$, and a name $*\in\Names-\Names_0$. The states of
$A_0$ are those states $q$ in $A$ such that
\(
  \supp(q)\subseteq\Names_0.
\)
As this implies that the set $Q_0$ of states in $A_0$ is ufs, $Q_0$ is
finite by Lemma~\ref{lem:of}.  For $q,q'\in Q_0$,
the free transitions $q\trans{a}q'$ in $A_0$ are the same as in $A$
(hence $a\in\Names_0$ by Lemma~\ref{lem:trans}.1).  The bound
transitions $q\trans{\scriptnew{a}}q'$ in $A_0$ are those bound
transitions $q\trans{\scriptnew{a}}q'$ in $A$ such that
$a\in\Names_0\cup\{*\}$. A state is final in $A_0$ iff it is final in
$A$. The initial state of $A_0$ is $s\in Q_0$.
\begin{thm}\label{thm:nom-bar}
  The number of states in the bar NFA $A_0$ is linear in the number of
  orbits of $A$ and exponential in $\degree(A)$. Moreover,
  $\degree(A_0)\le\degree(A)+1$, and $\barlang(A_0) =\barlang(A)$.
\end{thm}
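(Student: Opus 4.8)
The statement bundles three claims; the size and degree bounds are immediate from the construction, so the real work is the language equality. For the state count, note that $Q_0=\{q\in Q\mid\supp(q)\subseteq\Names_0\}$ satisfies $\bigcup_{q\in Q_0}\supp(q)\subseteq\Names_0$, hence is ufs and therefore finite by Lemma~\ref{lem:of}. Each orbit of $A$ has a fixed support size $d\le k:=\degree(A)$ and contributes to $Q_0$ only those of its elements whose $d$-element support lands inside the fixed $k$-element set $\Names_0$; there are at most $k!/(k-d)!\le k^{k}$ of these, a bound depending only on $k$. So $\size{Q_0}$ is linear in the number of orbits of $A$ and exponential in $\degree(A)$. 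For the degree, I observe that free transitions of $A_0$ read names from $\supp(q)\subseteq\Names_0$ and bound transitions read names from $\Names_0\cup\{*\}$ by construction, so all names occurring in $A_0$ lie in $\Names_0\cup\{*\}$ and $\degree(A_0)\le k+1=\degree(A)+1$.

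For $\barlang(A_0)=\barlang(A)$, the inclusion $\subseteq$ is free: $A_0$ is a subautomaton of $A$ (it inherits states, transitions, and initial/final states), so $\litlang(A_0)\subseteq\litlang(A)$ and the same holds after quotienting by $\alphaeq$. For $\supseteq$ the plan is to prove, by induction on the length of the accepting run, that for every $q\in Q_0$ and every bar string $w$ literally accepted from $q$ in $A$ there is $w'\alphaeq w$ literally accepted from $q$ in $A_0$; instantiating at the common initial state $s$ (which lies in $Q_0$ since $\supp(s)\subseteq\Names_0$) then gives $\barlang(A)\subseteq\barlang(A_0)$. The base case $w=\epsilon$ is trivial since finalness is inherited by $Q_0$. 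If the first step is a free transition $q\trans{a}q_1$, then Lemma~\ref{lem:trans}\itemref{item:free-trans} yields $\supp(q_1)\cup\{a\}\subseteq\supp(q)\subseteq\Names_0$, so $q_1\in Q_0$ and $q\trans{a}q_1$ already lives in $A_0$; I apply the induction hypothesis to $q_1$ and prepend $a$, using that $\alphaeq$ is a left congruence (an $\alpha$-derivation for the suffix stays valid under a common prefix).

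The substantive case is a bound first step $q\trans{\scriptnew a}q_1$, for which Lemma~\ref{lem:trans}\itemref{item:bound-trans} only gives $\supp(q_1)\subseteq\supp(q)\cup\{a\}\subseteq\Names_0\cup\{a\}$, so neither $a\in\Names_0\cup\{*\}$ nor $q_1\in Q_0$ need hold. The plan is to $\alpha$-rename the bound name $a$ to some $c\in\Names_0\cup\{*\}$ with $c\fresh q_1$: if $a\in\supp(q_1)$ then $a\notin\Names_0$ and $\size{\supp(q_1)\cap\Names_0}\le k-1<\size{\Names_0}$, so I can take $c\in\Names_0\setminus\supp(q_1)$; otherwise $q_1\in Q_0$ already and I take $c=*$. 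In either case $c\fresh q_1$ gives $\eq{c}{(c\,a)q_1}=\eq{a}{q_1}$, so $\alpha$-invariance of $\to$ produces a transition $q\trans{\scriptnew c}(c\,a)q_1$ of $A$, and a short support computation shows $(c\,a)q_1\in Q_0$, so it is a transition of $A_0$. Applying the induction hypothesis to $(c\,a)q_1$ and $(c\,a)v$ (accepted from it by equivariance, via a strictly shorter run) yields $v'\alphaeq(c\,a)v$ accepted from $(c\,a)q_1$ in $A_0$, and then $\newletter c v'$ is accepted from $q$ in $A_0$.

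The crux, and the step I expect to be the main obstacle, is verifying $\newletter c v'\alphaeq\newletter a v$. By left-congruence this reduces to $\newletter c (c\,a)v\alphaeq\newletter a v$. Both choices of $c$ above satisfy $c\notin\supp(q_1)$, and since $\FN(v)\subseteq\supp(q_1)$ by the support-evolution remark following Lemma~\ref{lem:trans}, we get $c\notin\FN(v)$. The fact to establish is therefore: whenever $c\notin\FN(v)$, one has $\newletter c (c\,a)v\alphaeq\newletter a v$. This is delicate because $c$ may still occur \emph{bound} inside $v$, so a single generating $\alpha$-step does not suffice; I would instead first $\alpha$-rename all bound occurrences of $c$ in $v$ to globally fresh names (using that $\alphaeq$ is equivariant and a left congruence), obtaining $\hat v\alphaeq v$ in which $c$ does not occur literally, then apply one generating step $\newletter a\hat v\alphaeq\newletter c(c\,a)\hat v$, and finally transport back along $\hat v\alphaeq v$. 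The reason everything fits is precisely the degree bound $\size{\Names_0}=k=\degree(A)$: it leaves just enough room to recycle a name from $\Names_0$ for every bound transition whose read name is still live, while the single extra name $*$ absorbs the bound transitions reading a name that is immediately discarded.
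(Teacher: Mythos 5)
Your proof is correct, and its combinatorial core---the choice of replacement name for an out-of-range bound transition ($*$ when the bound name is immediately discarded, a recycled name from $\Names_0\setminus\supp(q_1)$ when it is still live, the latter existing because $\size{\supp(q_1)\cap\Names_0}<k=\size{\Names_0}$)---is exactly the paper's. Where you genuinely diverge is in how the renaming is propagated through the remainder of the accepting run. The paper develops $\alpha$-equivalence of \emph{paths} in an RNNA (Lemmas~\ref{lem:pathsalpha} and~\ref{lem:path-alpha}, together with Remark~\ref{rem:eqpath}, which says the support of the $\alpha$-class of a path is the support of its starting state); with that machinery, replacing the first transition $q\trans{\scriptnew{a}}q_1$ by an $\alpha$-equivalent $q\trans{\scriptnew{c}}(c\,a)q_1$ automatically yields an $\alpha$-equivalent whole path whose label string is $\alpha$-equivalent to the original, so the theorem reduces to a statement about single bound transitions. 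You instead run a direct induction on the bar string, pushing the transposition $(c\,a)$ onto the residual word by equivariance and proving by hand the string-level fact that $\newletter{a}v\alphaeq\newletter{c}(c\,a)v$ whenever $c\notin\FN(v)$. You correctly identify the delicate point there ($c$ may still occur \emph{bound} in $v$, so a single generating step does not apply) and your fix is sound; in fact one generating step on the leftmost $\newletter{c}$ in $v$, renaming it to a globally fresh $d$, already removes every occurrence of $c$ from $v$, after which the outer generating step and transport along the left congruence go through. Your route is more elementary (no path apparatus) at the cost of doing explicitly the bookkeeping that Remark~\ref{rem:eqpath} packages once. One small repair is needed: the inference ``if $a\in\supp(q_1)$ then $a\notin\Names_0$'' is not valid as written; you should first dispose of the case $a\in\Names_0$ (where $\supp(q_1)\subseteq\supp(q)\cup\{a\}\subseteq\Names_0$, so $q_1\in Q_0$ and the transition is already in $A_0$ with no renaming needed) and only then, for $a\notin\Names_0$, run your two sub-cases. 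The paper's proof makes exactly this case explicit.
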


\noindent Combining this with Theorem~\ref{thm:freernna}, we obtain
the announced equivalence result:
\begin{cor}\label{cor:rnna-barnfa}
  RNNAs are expressively equivalent to bar NFAs, hence to regular bar
  expressions.
\end{cor}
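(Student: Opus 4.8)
The plan is to assemble the corollary from the two preceding theorems together with the classical Kleene theorem for finite automata. The statement decomposes into two equivalences: RNNAs versus bar NFAs, and bar NFAs versus regular bar expressions.

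For the first equivalence I would simply invoke Theorems~\ref{thm:freernna} and~\ref{thm:nom-bar}, which provide mutual translations preserving the bar language. Given a bar NFA $A$, Construction~\ref{constr:RNNA} yields an RNNA $\bar A$ with $\barlang(\bar A)=\barlang(A)$ (Theorem~\ref{thm:freernna}). Conversely, from an RNNA $A$ the subautomaton $A_0$ is a bar NFA with $\barlang(A_0)=\barlang(A)$ (Theorem~\ref{thm:nom-bar}). Hence a bar language is accepted by some RNNA if and only if it is accepted by some bar NFA.

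For the second equivalence, I would reduce to the classical Kleene theorem. Both bar NFAs and regular bar expressions are defined over the extended alphabet $\bar\Names$, which is infinite; however, each individual bar NFA has finite degree and each regular bar expression mentions only finitely many names, so in either case only a finite subalphabet of $\bar\Names$ is relevant. Restricting to this finite subalphabet, the classical Kleene theorem gives mutual translations between NFAs and regular expressions that preserve the literal language $\litlang$. Since the bar language is obtained uniformly as $\barlang(-)=\litlang(-)/\mathord{\alphaeq}$, these translations also preserve $\barlang$, yielding the desired equivalence between bar NFAs and regular bar expressions.

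The only point requiring care --- and it is a mild one --- is the infinitude of $\bar\Names$: one must observe that finiteness of degree confines every bar NFA and every regular bar expression to a finite subalphabet before the classical Kleene theorem can be applied. Transitivity of the two equivalences then gives expressive equivalence of RNNAs and regular bar expressions, completing the proof.
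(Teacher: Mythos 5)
Your proposal is correct and follows essentially the same route as the paper: the equivalence of RNNAs and bar NFAs is exactly the combination of Theorems~\ref{thm:freernna} and~\ref{thm:nom-bar}, and the step to regular bar expressions is the classical Kleene theorem applied to the finite subalphabet of $\bar\Names$ actually occurring (which the paper leaves implicit). Your explicit remark that finiteness of the degree confines everything to a finite subalphabet is a sensible clarification but not a departure from the paper's argument.
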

This amounts to a \emph{Kleene theorem for RNNAs}. 
The decision procedure for inclusion (Section~\ref{sec:inclusion})
will use the equivalence of bar NFAs and RNNAs, essentially running a
bar NFA in synchrony with an RNNA.



\begin{rem}\label{rem:nka-non-kleene}
  It has been shown in that an NKA expression $r$ can be translated
  into a nondeterministic nominal automaton whose states are the
  so-called \emph{spines} of $r$, which amounts to one direction of a
  Kleene theorem~\cite{KozenEA15}. One can show that the spines in
  fact form an RNNA, so that NKA embeds into regular bar
  expressions. The automata-to-NKA direction of the Kleene theorem
  fails even for deterministic nominal automata, i.e.\ regular bar
  expressions are strictly more expressive than NKA. Indeed, the
  regular bar expression
  $(\newletter ba \newletter ab)^*(1+\newletter ba)$ of
  Example~\ref{expl:regexp} defines a language that cannot be defined
  in NKA because it requires unbounded nesting of name
  binding~\cite{KozenEA15}.
\end{rem}

\section{Name-Dropping Register Automata}
\label{sec:name-dropping}

We next relate RNNAs to two equivalent models of local freshness,
nondeterministic orbit-finite automata~\cite{BojanczykEA14} and
register automata (RAs)~\cite{KaminskiFrancez94}. RNNAs necessarily
only capture subclasses of these models, since RAs have an undecidable
inclusion problem~\cite{KaminskiFrancez94}; the distinguishing
condition is a version of name-dropping.
\begin{defn}~\cite{BojanczykEA14} A nondeterministic orbit-finite
  automaton (NOFA) $A$ consists of an orbit finite set $Q$ of states,
  two equivariant subsets $I, E \subseteq Q$ of \emph{initial} and
  \emph{final} states, respectively, and an equivariant transition
  relation $\mathord{\to}\subseteq Q\times\Names\times Q$, where we
  write $q\trans{a}p$ for $(q,a,p)\in\mathord{\to}$. 
  The $\Names$-language $L(A)=\{w\mid A\text{ accepts }w\}$
  \emph{accepted} by $A$ is defined in the standard way: extend the
  transition relation to words $w\in\Names^*$ as usual, 
  and then say that $A$ \emph{accepts} $w$ if there exist an initial
  state $q$ and a final state $p$ such that $q\trans{w}p$.  A
  \emph{DOFA} is a NOFA with a deterministic transition relation.
\end{defn}
\begin{rem}
  A more succinct equivalent presentation of NOFAs is as orbit-finite
  coalgebras $\gamma: Q \to GQ$ for the functor
  \begin{equation*}
    GX=2\times\powfs (\Names\times X)\hspace{5em} (2=\{\top,\bot\}) 
  \end{equation*}
  on the category $\Nom$ of nominal sets and equivariant maps,
  together with an equivariant subset of initial states. 
\end{rem}
\noindent More precisely speaking, NOFAs are equivalent to \emph{RAs
  with nondeterministic
  reassignment}~\cite{BojanczykEA14,KaminskiZeitlin10}. RAs are
roughly described as having a finite set of registers in which names
from the current word can be stored if they are \emph{locally fresh},
i.e.\ not currently stored in any register; transitions are labeled
with register indices $k$, meaning that the transition accepts the
next letter if it equals the content of register $k$. In the
equivalence with NOFAs, the names currently stored in the registers
correspond to the support of states.

To enable a comparison of RNNAs with NOFAs over $\Names$
(Section~\ref{sec:rnna}), we restrict our attention in the following
discussion to \mbox{RNNAs} that are \emph{closed}, i.e.\ whose initial
state has empty support, and therefore accept equivariant
$\Names$-languages. We can convert a closed RNNA $A$ into a NOFA
$D(A)$ accepting $D(\barlang(A))$ by simply replacing every transition
$q\trans{\scriptnew{a}}q'$ with a transition $q\trans{a}q'$. We show
that the image of this translation is a natural class of NOFAs:

\begin{defn}\label{def:nofa-namedrop}
  A NOFA $A$ is \emph{non-spontaneous} if $\supp(s)=\emptyset$ for
  initial states $s$, and \vspace{-1mm}
  \begin{equation*}
    \supp(q')\subseteq\supp(q) \cup \{a\}\quad\text{whenever}\quad q\trans{a}q'.
  \end{equation*}
  (In words, $A$ is non-spontaneous if transitions $q\trans{a}q'$ in
  $A$ create no new names other than $a$ in $q'$.) Moreover, $A$ is
  \emph{$\alpha$-invariant} if $q\trans{a}q''$ whenever
  $q\trans{b}q'$, $b\fresh q$, and $\eq{a}{q''}=\eq{b}{q'}$ (this
  condition is automatic if $a\fresh q$).  Finally, $A$ is
  \emph{name-dropping} if for each state $q$ and each set
  $N\subseteq\supp(q)$ of names, there exists a state $q|_N$ that
  \emph{restricts $q$ to $N$}, i.e.\ $\supp(q|_N)= N$, $q|_N$ is final
  if $q$ is final, and
  \begin{itemize}
  \item $q|_N$ has at least the same incoming transitions as $q$;
  \item whenever $q\trans{a}q'$, $a\in\supp(q)$, and
    $\supp(q')\cup\{a\}\subseteq N$, then $q|_N\trans{a}q'$;
  \item whenever $q\trans{a}q'$, $a\fresh q$, and
    $\supp(q')\subseteq N\cup\{a\}$, then $q|_N\trans{a}q'$.
  \end{itemize}
\end{defn} 
\begin{propn}\label{prop:nonspont}
  A NOFA is of the form $D(B)$ for some (name-dropping) RNNA $B$ iff
  it is (name-dropping and) non-spontaneous and $\alpha$-invariant.
\end{propn}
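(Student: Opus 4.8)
The plan is to treat both directions of the biconditional, and both the plain and the bracketed (name-dropping) variant, through one canonical construction, matching the three NOFA conditions of Definition~\ref{def:nofa-namedrop} against the RNNA axioms. Throughout, $B$ shares its states, initial state and final states with $A$, and $D$ leaves free transitions $q\trans a q'$ untouched while turning each bound transition $q\trans{\scriptnew a}q'$ into $q\trans a q'$.

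For the ``only if'' direction, assume $A=D(B)$ for an RNNA $B$. Closedness of $B$ gives $\supp(s)=\emptyset$, and for a transition $q\trans a q'$ in $A$ the inclusion $\supp(q')\subseteq\supp(q)\cup\{a\}$ follows by splitting into its free or bound origin in $B$ and applying Lemma~\ref{lem:trans}\itemref{item:free-trans} resp.\ \itemref{item:bound-trans}; hence $A$ is non-spontaneous. For $\alpha$-invariance, a witness $q\trans b q'$ with $b\fresh q$ cannot arise from a free transition of $B$ (that would force $b\in\supp(q)$ by Lemma~\ref{lem:trans}\itemref{item:free-trans}), so it is bound, and $\alpha$-invariance of $B$ applied to $\eq a{q''}=\eq b{q'}$ yields $q\trans{\scriptnew a}q''$ in $B$, hence $q\trans a q''$ in $A$. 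If $B$ is name-dropping, the NOFA restriction states are the RNNA restriction states $q|_N$, and each incoming and outgoing NOFA clause follows from the corresponding RNNA clause of Definition~\ref{def:name-drop} via the same free/bound case split (noting that an $a\fresh q$ transition of $A$ must be bound in $B$, and that the outgoing clause is fed by both the free and the bound RNNA clauses).

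For the ``if'' direction, assume $A$ is non-spontaneous and $\alpha$-invariant and define $B$ by declaring $q\trans a q'$ free in $B$ iff $q\trans a q'$ in $A$ and $a\in\supp(q)$, and taking the bound transitions to be the $\alpha$-closure of $\{q\trans{\scriptnew a}q'\mid q\trans a q'\text{ in }A,\ a\fresh q\}$. The bound relation is $\alpha$-invariant by construction; non-spontaneity yields the support inclusions of Lemma~\ref{lem:trans}, and finite branching up to $\alpha$ holds since the free successors $q'$ (with $\supp(q')\subseteq\supp(q)$) and, after fixing one fresh representative name $c$, the bound successors $q'$ (with $\supp(q')\subseteq\supp(q)\cup\{c\}$) range over ufs, hence finite, subsets of $Q$ by Lemma~\ref{lem:of}. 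The identity $D(B)=A$ is exactly where $\alpha$-invariance of $A$ enters: erasing bars from the added $\alpha$-variants $q\trans{\scriptnew a}q''$ recreates precisely the $A$-transitions $q\trans a q''$, while every $A$-transition is free in $B$ when $a\in\supp(q)$ and bound when $a\fresh q$.

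Finally, if $A$ is name-dropping we must check that $B$ is a name-dropping RNNA, reusing $A$'s restriction states $q|_N$ chosen equivariantly (possible since being a restriction is an equivariant property and there are finitely many orbits of pairs $(q,N)$). The outgoing clauses transfer directly; in particular an outgoing bound transition $q\trans{\scriptnew a}q'$ with $\supp(q')\subseteq N\cup\{a\}$ is realised through a fresh witness $q\trans b q''$ with $\supp(q'')\subseteq N\cup\{b\}$, whence the NOFA clause gives $q|_N\trans b q''$ and $\alpha$-invariance of $B$ returns $q|_N\trans{\scriptnew a}q'$. The main obstacle is the incoming clause for bound transitions: from $p\trans{\scriptnew a}t$ in $B$ we need $p\trans{\scriptnew a}t|_N$, but the fresh $A$-witness $p\trans b t''$ underlying it has $t''=(a\,b)t$ (using Lemma~\ref{lem:trans} to see $b\fresh t$), so one must restrict $t''$ to $N''=(a\,b)N$; equivariance of the restriction family gives $t''|_{N''}=(a\,b)(t|_N)$, NOFA incoming preservation gives $p\trans b (a\,b)(t|_N)$, and the renaming identity $\eq b{(a\,b)(t|_N)}=\eq a{t|_N}$ (valid as $a\fresh(a\,b)(t|_N)$) together with $\alpha$-invariance of $B$ delivers $p\trans{\scriptnew a}{t|_N}$. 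Keeping this fresh-witness bookkeeping and the abstraction-renaming identity consistent is the crux; the remaining clauses are a routine match.
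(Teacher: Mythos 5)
Your proof is correct and follows essentially the same route as the paper's: the same construction of $B$ (free transitions cut down to $a\in\supp(q)$, bound transitions as the $\alpha$-closure of the fresh ones), the same verification of $D(B)=A$ via $\alpha$-invariance, and the same treatment of the name-dropping clauses, including the transported restriction $(a\,b)(t|_N)$ for the incoming bound case (the paper obtains this from its Lemma~\ref{lem:nofa-restrict-equivar} on restrictions being preserved by permutations rather than from an equivariantly chosen family, but that is a cosmetic difference).
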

\begin{propn}\label{prop:alphainv}
  For every non-spontaneous and name-dropping NOFA, there is an
  equivalent non-spontaneous, name-dropping, and $\alpha$-invariant
  NOFA.
\end{propn}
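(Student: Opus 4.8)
The plan is to take the given non-spontaneous, name-dropping NOFA $A$ and simply \emph{close its transition relation under $\alpha$-equivalence of fresh transitions}, leaving the state set, initial state and final states untouched. Concretely, I put $q\trans{a}q''$ in the new automaton $A'$ whenever already $q\trans{a}q''$ in $A$, or there are $b,q'$ with $q\trans{b}q'$ in $A$, $b\fresh q$ and $\eq{a}{q''}=\eq{b}{q'}$. Equivariance of $\to'$ is immediate, since it is defined from the equivariant relation $\to$ by equivariant operations. I would then observe that the only \emph{genuinely new} transitions read an old name: if $a\fresh q$ then non-spontaneity forces $a\fresh q'$, and applying the permutation $(a\,b)$ — which fixes $q$ — to $q\trans{b}q'$ already yields $q\trans{a}q''$ in $A$ by equivariance. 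Hence every added transition $q\trans{a}q''$ has $a\in\supp(q)$, and a short computation with $q''=(a\,b)q'$ gives $\supp(q'')\subseteq\supp(q)$; this shows $A'$ is again non-spontaneous. $\alpha$-invariance holds after a \emph{single} closure step, because every added transition carries the abstraction class $\eq{b}{q'}$ of an original fresh transition, so re-closing produces nothing new.

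Name-dropping of $A'$ is inherited from that of $A$: I reuse the same restriction witnesses $q|_N$ and check the clauses of Definition~\ref{def:nofa-namedrop} against $\to'$. For an added transition generated by a fresh transition $q\trans{b}q'$, the matching transition out of $q|_N$ is obtained by first applying the fresh-read clause of $A$'s name-dropping to $q\trans{b}q'$ (whose target support lands inside $N\cup\{b\}$ by the same $(a\,b)$-computation) and then re-reading the resulting fresh transition as the required $\alpha$-variant; incoming transitions are handled symmetrically. This is mechanical.

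The crux is language preservation. The inclusion $L(A)\subseteq L(A')$ is trivial, so everything rests on $L(A')\subseteq L(A)$. Here I would use the \emph{register-allocation} reading of name dropping: an added transition $q\trans{a}q''$ reads an old name $a$ that is really an $\alpha$-renaming of a fresh read, so in $A$ it should be read \emph{freshly}, which requires $a$ to have been dropped beforehand. Making this precise, I first record a factoring lemma: for each added transition $q\trans{a}q''$ one has, in $A$, a genuinely fresh transition $q|_{\supp(q)\setminus\{a\}}\trans{a}q''$. Its proof applies equivariance to $q\trans{b}q'$ to get $(a\,b)q\trans{a}q''$ and then the identity $((a\,b)q)|_N=q|_N$ for $N=\supp(q)\setminus\{a\}$, valid because $(a\,b)$ fixes $N$ pointwise; for this identity I would first replace $A$ by the canonical name-dropping closure with $q|_N:=\Fix(N)\,q$ (the NOFA analogue of Lemma~\ref{lem:wlog-name-drop}), which is language-equivalent and makes $q\mapsto q|_N$ equivariant.

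Given the factoring lemma, I fix an accepting run $q_0\trans{a_1}\cdots\trans{a_n}q_n$ of $A'$ and define, by backward liveness, sets $M_i\subseteq\supp(q_i)$ holding exactly those names accessed as \emph{old} names later before being dropped — crucially, a name re-read through an added (hence fresh-in-$A$) transition is \emph{not} counted, so the name consumed by an added transition at step $i$ satisfies $a_i\notin M_{i-1}$. I then show by induction that $A$ can follow the same word into the restricted states $q_i|_{M_i}$: at each step I land at the successor restricted to $M_i$ using the incoming-transition clause of name dropping, then drop the source to $M_{i-1}$ using the outgoing clauses (for added steps via the factoring lemma, $a_i$ being fresh for $q_{i-1}|_{M_{i-1}}$), the liveness invariant $M_i\setminus\{a_i\}\subseteq M_{i-1}$ being exactly what the clauses demand. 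Since $q_n$ is final and restrictions of final states are final, $q_n|_{M_n}$ is final, so the word lies in $L(A)$. I expect the liveness bookkeeping together with the equivariance identity for added steps to be the main obstacle; the rest is routine support arithmetic.
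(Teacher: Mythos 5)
Your proposal is correct and follows essentially the same route as the paper's proof: the same closure of the transition relation under $\alpha$-equivalence of fresh transitions, the same observation that genuinely new transitions read a name in $\supp(q)$, the same checks for non-spontaneity and name-dropping using the original restriction witnesses, and the same key step of factoring an added transition $q\trans{a}q''$ through the restriction $q|_{\supp(q)\setminus\{a\}}$ so that $a$ can be read freshly in $A$. The only difference is organizational: where you compute global backward-liveness sets $M_i$, the paper runs a single forward induction on the word maintaining the invariant $|\supp(q)\setminus N|\le 1$ (only the name about to be re-read freshly is ever dropped, just in time), which avoids the liveness bookkeeping you rightly identify as the delicate part.
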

In combination with Lemma~\ref{lem:name-drop-alpha}, these facts imply
\begin{cor}\label{cor:rnna-nofa}
  Under local freshness semantics, RNNAs are expressively equivalent
  to non-spontaneous name-dropping NOFAs.
\end{cor}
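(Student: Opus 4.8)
The plan is to read the corollary off from the two propositions together with the name-dropping machinery, the glue being a single observation about how the NOFA construction $D(-)$ interacts with the local-freshness operator $D(-)$ on bar languages (the same symbol $D$ being overloaded for both, as in the surrounding text). First I would unwind ``expressively equivalent under local freshness'' into the statement that the class of $\Names$-languages $\{D(\barlang(B))\mid B\text{ a closed RNNA}\}$ coincides with the class $\{L(A)\mid A\text{ a non-spontaneous name-dropping NOFA}\}$, and then prove the two inclusions separately.

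The key preliminary step is a bridge observation: for every name-dropping RNNA $B$, the NOFA $D(B)$ accepts exactly $D(\barlang(B))$. This is where Lemma~\ref{lem:name-drop-alpha} does the real work. Since $D(B)$ keeps all free transitions of $B$ and turns each bound transition $q\trans{\scriptnew a}q'$ into a free transition $q\trans{a}q'$ over the same states, initial state, and final states, an accepting run of $D(B)$ reading $w$ lifts to an accepting path $s\trans{v}q$ in $B$ with $\unbar(v)=w$, and conversely; hence $L(D(B))=\{\unbar(v)\mid v\in\litlang(B)\}$. Name-dropping then lets me invoke Lemma~\ref{lem:name-drop-alpha} to rewrite $\litlang(B)=\{v\mid[v]_\alpha\in\barlang(B)\}$, so that $\{\unbar(v)\mid v\in\litlang(B)\}=\{\unbar(v)\mid[v]_\alpha\in\barlang(B)\}$, which is exactly $D(\barlang(B))$ by~\eqref{eq:localFreshness}.

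For the inclusion from RNNAs to NOFAs I would take a closed RNNA $B$; by Lemma~\ref{lem:wlog-name-drop} I may assume $B$ name-dropping without changing $\barlang(B)$ (and hence without changing $D(\barlang(B))$ or the emptiness of $\supp(s)$). Proposition~\ref{prop:nonspont} then certifies that $D(B)$ is a non-spontaneous (and name-dropping and $\alpha$-invariant) NOFA, and the bridge observation shows it accepts $D(\barlang(B))$. For the reverse inclusion I would take a non-spontaneous name-dropping NOFA $A$; Proposition~\ref{prop:alphainv} supplies an equivalent NOFA that is additionally $\alpha$-invariant, so without loss of generality $A$ is non-spontaneous, name-dropping, and $\alpha$-invariant. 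The characterization in Proposition~\ref{prop:nonspont} then yields a name-dropping RNNA $B$ with $A=D(B)$, and the bridge observation gives $L(A)=L(D(B))=D(\barlang(B))$, exhibiting $L(A)$ in the required form.

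The step needing the most care—and the reason all three cited results are genuinely required—is the bridge observation, specifically its reliance on name-dropping. Without it, $\litlang(B)$ need not be closed under $\alpha$-equivalence (witness~\eqref{eq:non-alpha-closure}, where $\newletter a\newletter b$ is accepted but the $\alpha$-equivalent $\newletter a\newletter a$ is not), so $\{\unbar(v)\mid v\in\litlang(B)\}$ could be a proper subset of $D(\barlang(B))$ and the identity $L(D(B))=D(\barlang(B))$ would fail. Lemma~\ref{lem:name-drop-alpha} is precisely what repairs this gap, while Lemma~\ref{lem:wlog-name-drop} and Proposition~\ref{prop:alphainv} are what let me assume name-dropping (resp.\ $\alpha$-invariance) for free on each side so that the characterization of Proposition~\ref{prop:nonspont} applies. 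The remaining checks—the path-lifting correspondence for $D(B)$ and the bookkeeping matching closedness of $B$ with $\supp(s)=\emptyset$ on the NOFA side—are routine. \qed
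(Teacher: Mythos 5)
Your proposal is correct and follows essentially the route the paper intends: the paper gives no explicit proof beyond ``in combination with Lemma~\ref{lem:name-drop-alpha}, these facts imply,'' and your argument is precisely the natural unwinding of Propositions~\ref{prop:nonspont} and~\ref{prop:alphainv} plus the bridge identity $L(D(B))=D(\barlang(B))$ for name-dropping $B$. You also correctly spot that Lemma~\ref{lem:wlog-name-drop} is needed (though unmentioned in the paper's one-line justification) to reduce an arbitrary closed RNNA to a name-dropping one before Proposition~\ref{prop:nonspont} and the bridge observation can be applied.
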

\begin{cor}\label{cor:intersections}
  The class of languages accepted by RNNAs under local freshness
  semantics is closed under finite intersections.
\end{cor}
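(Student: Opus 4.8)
\noindent The plan is to reduce, via Corollary~\ref{cor:rnna-nofa}, to showing that the class of $\Names$-languages accepted by non-spontaneous name-dropping NOFAs is closed under binary intersection (finite intersections then follow by induction). So let $A,B$ be RNNAs. By Corollary~\ref{cor:rnna-nofa} the languages $\datalang(\barlang(A))$ and $\datalang(\barlang(B))$ are accepted by non-spontaneous name-dropping NOFAs $A'=(Q_A,\to_A,I_A,E_A)$ and $B'=(Q_B,\to_B,I_B,E_B)$. I would form the synchronous product $A'\times B'$ with states $Q_A\times Q_B$ (a nominal set under the componentwise action, so that $\supp(p,q)=\supp(p)\cup\supp(q)$), initial states $I_A\times I_B$, final states $E_A\times E_B$, and transitions $(p,q)\trans{a}(p',q')$ iff $p\trans{a}p'$ in $A'$ and $q\trans{a}q'$ in $B'$. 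This is equivariant, orbit-finite (a product of orbit-finite nominal sets is orbit-finite), and a routine induction on word length gives $L(A'\times B')=L(A')\cap L(B')$. Once $A'\times B'$ is shown to be non-spontaneous and name-dropping, Corollary~\ref{cor:rnna-nofa} yields an RNNA $C$ with $\datalang(\barlang(C))=\datalang(\barlang(A))\cap\datalang(\barlang(B))$, as required.

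Non-spontaneity is immediate: initial states $(s_A,s_B)$ satisfy $\supp(s_A,s_B)=\supp(s_A)\cup\supp(s_B)=\emptyset$, and if $(p,q)\trans{a}(p',q')$ then $\supp(p')\subseteq\supp(p)\cup\{a\}$ and $\supp(q')\subseteq\supp(q)\cup\{a\}$, so $\supp(p',q')\subseteq\supp(p,q)\cup\{a\}$. For name-dropping I would restrict the two components separately: given a product state $(p,q)$ and $N\subseteq\supp(p,q)$, put $N_p=N\cap\supp(p)$ and $N_q=N\cap\supp(q)$, and define $(p,q)|_N:=(p|_{N_p},q|_{N_q})$ using the restrictions provided by name-dropping of $A'$ and $B'$. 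Then $\supp((p,q)|_N)=N_p\cup N_q=N\cap(\supp(p)\cup\supp(q))=N$, and finality transfers componentwise. The incoming-transition requirement also transfers componentwise: if $(r,s)\trans{a}(p,q)$ then $r\trans{a}p$ and $s\trans{a}q$, so $r\trans{a}p|_{N_p}$ and $s\trans{a}q|_{N_q}$ by name-dropping of the factors, whence $(r,s)\trans{a}(p,q)|_N$.

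The substantive step is to verify the outgoing-transition clauses of name-dropping for the product, and this is where I expect the only real obstacle. Suppose $(p,q)\trans{a}(p',q')$; I must produce $(p,q)|_N\trans{a}(p',q')$, i.e.\ $p|_{N_p}\trans{a}p'$ and $q|_{N_q}\trans{a}q'$, under the two admissible hypotheses, namely $a\in\supp(p,q)$ with $\supp(p',q')\cup\{a\}\subseteq N$, or $a\fresh(p,q)$ with $\supp(p',q')\subseteq N\cup\{a\}$. The point is that each factor has to be handled by its own case split (whether $a\in\supp(p)$ or $a\fresh p$, and likewise for $q$), and one uses non-spontaneity to control supports: from $\supp(p')\subseteq\supp(p)\cup\{a\}$ one gets $\supp(p')\setminus\{a\}\subseteq N_p$, and if $a\in\supp(p)$ then also $a\in N_p$; this is exactly what the matching name-dropping clause of $A'$ needs to deliver $p|_{N_p}\trans{a}p'$, and symmetrically for $B'$. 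Carrying out this bookkeeping carefully in all combinations is the crux; everything else is routine, and the finite case follows by iterating the binary construction.
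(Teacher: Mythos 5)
Your proposal is correct and follows essentially the same route as the paper: reduce via Corollary~\ref{cor:rnna-nofa} to non-spontaneous name-dropping NOFAs, take the standard product, and verify name-dropping by restricting componentwise with $N_p=N\cap\supp(p)$ and $N_q=N\cap\supp(q)$ --- which is precisely the restriction the paper uses in its own (terse) verification. The case analysis you sketch for the outgoing-transition clauses does go through as you describe, using non-spontaneity of the factors to control supports.
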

\begin{proof}[Sketch]
  Non-spontaneous name-dropping NOFAs are closed under the standard
  product construction. \qed
\end{proof}

\begin{rem}\label{rem:local-expressivity}
  Every DOFA is non-spontaneous. Moreover, RAs are morally
  non-spontaneous according to their original definition, i.e.\ they
  can read names from the current word into the registers but cannot
  guess names nondeterministically~\cite{KaminskiFrancez94,NevenEA04};
  the variant of register automata that is equivalent to
  NOFAs~\cite{BojanczykEA14} in fact allows such
  \emph{nondeterministic reassignment}~\cite{KaminskiZeitlin10}. This
  makes unrestricted NOFAs strictly more expressive than
  non-spontaneous ones~\cite{KaminskiFrancez94,Wysocki13}. 
  Name-dropping restricts expressivity further, as witnessed by the
  language $\{ab\mid a\neq b\}$ mentioned above. In return, it buys
  decidability of inclusion (Section~\ref{sec:inclusion}), while for
  non-spontaneous NOFAs even universality is
  undecidable~\cite{BojanczykEA14,NevenEA04}. DOFAs are incomparable
  to RNNAs under local freshness semantics---the language ``the last
  letter has been seen before'' is defined by the regular bar
  expression $(\newletter b)^*\newletter a(\newletter b)^*a$ but not
  accepted by any DOFA.
\end{rem}

\myparagraph{Name-Dropping Register Automata and FSUBAs}
\label{sec:NameDroppingRA}
In consequence of Corollary~\ref{cor:rnna-nofa} and the equivalence
between RAs and nonspontaneous NOFAs, we have that RNNAs are
expressively equivalent to \emph{name-dropping} RAs, which we just
define as those RAs that map to name-dropping NOFAs under the
translation given in~\cite{BojanczykEA14}. We spend a moment on
identifying a more concretely defined class of \emph{forgetful} RAs
that are easily seen to be name-dropping. We expect that forgetful RAs
are as expressive as name-dropping RAs but are currently more
interested in giving a compact description of a class of name-dropping
RAs to clarify expressiveness.

We use the very general definition of RAs given
in~\cite{BojanczykEA14}: An RA with $n$ registers consists of a set
$C$ of \emph{locations} and for each pair $(c,c')$ of locations a
\emph{transition constraint} $\phi$. \emph{Register assignments}
$w\in R:=(\Names\cup\{\bot\})^n$ determine the, possibly undefined,
contents of the $n$ registers, and \emph{configurations} are elements
of $C\times R$. Transition constraints 
are equivariant subsets $\phi\subseteq R\times\Names\times R$, and
$(w,a,v)\in\phi$ means that from configuration $(c,w)$ the RA can
nondeterministically go to $(c',v)$ under input $a$. Transition
constraints have a syntactic representation in terms of Boolean
combinations of certain equations. The NOFA generated by an RA just
consists of its configurations.

For $w\in R$ and $N\subseteq\Names$ we define $w|_N\in R$ by
$(w|_N)_i=w_i$ if $w_i\in N$, and $(w|_N)_i=\bot$ otherwise. An RA is
\emph{forgetful} if it generates a non-spontaneous NOFA and for every
configuration $(c,w)$ and every $N$, $(c,w|_N)$ restricts $(c,w)$ to
$N$ in the sense of Definition~\ref{def:nofa-namedrop}; this property
is equivalent to evident conditions on the individual transition
constraints. In particular, it is satisfied if all transition
constraints of the RA are conjunctions of the evident non-spontaneity
restriction (letters in the poststate come from the input or the
prestate) with a positive Boolean combination of the following:
\begin{itemize}
\item $\mathsf{cmp}_i=\{(w,a,v)\mid w_i=a\}$ (block unless register
  $i$ contains the input)
\item $\mathsf{store}_i=\{(w,a,v)\mid v_i\in\{\bot,a\}\}$ (store the
  input in register $i$ or forget)
\item $\mathsf{fresh}_i=\{(w,a,v)\mid a\neq w_i\}$
  (block if register $i$ contains the input)
\item $\mathsf{keep}_{ji}=\{(w,a,v)\mid v_i\in\{\bot,w_j\}\}$ (copy
  register $j$ to register $i$, or forget)
\end{itemize}
FSUBAs~\cite{KaminskiTan06} can be translated into name-dropping
RAs. Unlike FSUBAs, forgetful RAs do allow for freshness
constraints. E.g.\ the language $\{aba\mid a\neq b\}$ is accepted by
the forgetful RA
$c_0\xrightarrow{\mathsf{store}_1}c_1\xrightarrow{\mathsf{fresh}_1\land\mathsf{keep}_{11}}c_2\xrightarrow{\mathsf{cmp}_1}c_3$,
with $c_3$ final. Note how $\mathsf{store}$ and $\mathsf{keep}$ will
possibly lose the content of register~$1$ but runs where this happens
will not get past $\mathsf{cmp}_1$.


\section{Deciding Inclusion under Global and Local Freshness }
\label{sec:inclusion}

We next show that under both global and local freshness, the inclusion
problem for bar NFAs (equivalently regular bar expressions) is in
\ExpSpace. For global freshness, this essentially just reproves the
known decidability of inclusion for session automata~\cite{BolligEA14}
(Remark~\ref{rem:session}; the complexity bound is not stated
in~\cite{BolligEA14} but can be extracted), while the result for local
freshness appears to be new. Our algorithm differs
from~\cite{BolligEA14} in that it exploits name dropping; we describe
it explicitly, as we will modify it for local freshness.

\begin{thm}\label{thm:barnfa-expspace}
  The inclusion problem for bar NFAs is in \ExpSpace; more precisely,
  the inclusion $\barlang(A_1)\subseteq \barlang(A_2)$ can be checked
  using space polynomial in the size of $A_1$ and $A_2$ and
  exponential in $\degree(A_2)\log(\degree(A_1)+\degree(A_2)+1)$.
\end{thm}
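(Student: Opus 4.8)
The plan is to decide the inclusion
$\barlang(A_1)\subseteq\barlang(A_2)$ by searching for a witness of
\emph{non}-inclusion, i.e.~a bar string that is accepted by $A_1$ but
whose $\alpha$-class is not accepted by $A_2$, and to carry out that
search in the available space by an on-the-fly subset/determinisation
construction on $A_2$ that is run in synchrony with a nondeterministic
simulation of $A_1$. The essential difficulty is that $\barlang(A_i)$
is defined only up to $\alpha$-equivalence, so we cannot simply compare
the literal languages $\litlang(A_1)$ and $\litlang(A_2)$; a string
$w\in\litlang(A_1)$ counts as a counterexample only if \emph{no}
$\alpha$-variant of $w$ lies in $\litlang(A_2)$. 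To handle this, I
would first reduce to a canonical form for the input words: since every
bar string is $\alpha$-equivalent to a clean one, and by
Lemma~\ref{lem:name-drop-alpha} (after closing $A_2$ under name
dropping via Lemma~\ref{lem:wlog-name-drop}) membership of $[w]_\alpha$
in $\barlang(A_2)$ is equivalent to membership of $w$ in the
$\alpha$-closed literal language, it suffices to let $A_1$ emit clean
bar strings and check literal acceptance of each such string against
the name-dropping closure of $A_2$.

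First I would fix a finite working alphabet. By Lemma~\ref{lem:trans}
and Corollary~\ref{cor:lang-ufs}, along any accepting path of $A_i$ the
support of the current state is bounded by $\degree(A_i)$ free names;
together with the bound names currently ``live'' this means that at any
point only boundedly many distinct names are relevant. I would
therefore work over a fixed finite name set of size on the order of
$\degree(A_1)+\degree(A_2)+1$ (a pool large enough to supply a fresh
letter for each bound transition plus the stored names of both
automata), exactly the quantity appearing in the stated bound. Reusing
names from this bounded pool is justified because cleanliness is only
needed locally: a bound letter must be distinct from the names free
later, and by Lemma~\ref{lem:trans} the set of such names is controlled
by the supports, which the bounded pool can represent.

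The core of the algorithm is a product construction that maintains, at
each step, a \emph{single} state of $A_1$ (guessed nondeterministically
along an accepting run) together with the \emph{set} of states of the
name-dropping closure of $A_2$ reachable on the same bar string read so
far. On reading a free letter $a$ this is the usual subset transition;
on reading a bound letter $\newletter a$, I instantiate it with a name
from the bounded pool and apply the corresponding free/bound transition
rules of Construction~\ref{constr:RNNA} on the $A_2$-side while tracking
which virtual registers (i.e.~subsets $N\subseteq N_q$) are active.
Non-inclusion is witnessed exactly when the $A_1$-component reaches a
final state while the accumulated $A_2$-subset contains no final state.
The number of $A_2$-subsets is at most $2^{|Q_2|\,2^{\degree(A_2)}}$,
but a state of the closure is a pair $(q,\pi F_N)$ whose coset part is
an injective renaming $N\to\Names_{\text{pool}}$; encoding such a
subset up to the symmetry of the bounded pool costs space polynomial in
$|A_1|,|A_2|$ and exponential in $\degree(A_2)\log(\cdots)$, matching
the claim. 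Applying Savitch's theorem (\NExpSpace$=$\ExpSpace) then lets
me run the nondeterministic guessing of the $A_1$-run within the same
space bound.

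The step I expect to be the main obstacle is the careful bookkeeping of
names on the bound transitions so that the subset simulation on $A_2$
stays faithful up to $\alpha$-equivalence while using only the bounded
pool. Concretely, when $A_1$ emits $\newletter a$ I must choose the
instantiating name so that it is genuinely fresh for everything still
live in \emph{both} components, yet I only have boundedly many names;
the correctness argument is that, because both automata are
$\alpha$-invariant and (for $A_2$) name-dropping, any two choices of
fresh instantiation lead to $\alpha$-equivalent continuations and hence
to the same accept/reject verdict, so a deterministic choice from the
pool loses no counterexamples. Making this invariant precise — that the
reachable-subset-on-a-clean-string computed over the bounded pool
coincides with the true $\barlang(A_2)$-membership status — is where the
bulk of the verification lies; once it is established, the space
accounting and the reduction to \ExpSpace are routine.
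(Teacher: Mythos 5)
Your overall architecture is the paper's: guess a run of $A_1$, track the subset of reachable states of the name-dropping RNNA $\bar A_2$ from Construction~\ref{constr:RNNA}, report non-inclusion when $A_1$ accepts but the subset contains no final state, and invoke Savitch. The space accounting via injective renamings $N\to P$ into a set of size $O(\degree(A_1)+\degree(A_2))$ also matches. Where you diverge is the treatment of the letters emitted by $A_1$: you want to rename $A_1$'s output into a (locally) clean bar string over a bounded fresh pool before feeding it to $\bar A_2$, and you correctly flag that justifying this renaming is the unfinished core of your argument. That step is a genuine gap, and moreover it is harder than you suggest: for $\alpha$-equivalence of bar strings, $w\newletter{a}v\alphaeq w\newletter{b}u$ requires $\eq{a}{v}=\eq{b}{u}$, i.e.\ the replacement name must be fresh for the free names of the \emph{suffix} $v$, not merely for the names currently ``live''. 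So a choice made at the moment the bound letter is read needs look-ahead into the rest of the guessed run (something like the sets $N_q=\supp(\barlang(q))$ for the successor state of $A_1$), and your claim that ``any two choices of fresh instantiation lead to $\alpha$-equivalent continuations'' silently assumes exactly the invariant you say remains to be proved.

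The missing observation that makes all of this unnecessary is that Lemma~\ref{lem:name-drop-alpha} already closes $\litlang(\bar A_2)$ under $\alpha$-equivalence, so for any $w\in\litlang(A_1)$ you have $[w]_\alpha\in\barlang(A_2)$ iff $w$ is \emph{literally} accepted by $\bar A_2$. Hence the paper simply guesses a literal transition $q\trans{\alpha}q'$ of $A_1$ and advances $\Xi$ along the very same letter $\alpha$, with no renaming, no cleaning, and no fresh pool. This is simultaneously the correctness argument (via Theorem~\ref{thm:freernna} and Lemma~\ref{lem:name-drop-alpha}) and the source of the space bound: the only names that can ever appear in the coset components of states in $\Xi$ are those occurring literally in $A_1$ or $A_2$, a set $P$ with $\sharp P\le\degree(A_1)+\degree(A_2)$, giving at most $k\cdot(\degree(A_1)+\degree(A_2)+1)^{\degree(A_2)}$ possible states of $\bar A_2$ in $\Xi$. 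If you drop the canonicalization step and run the literal letters of $A_1$ directly against $\bar A_2$, the rest of your proposal goes through as written.
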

The theorem can be rephrased as saying that bar language inclusion
of NFA is in parametrized polynomial space
(para-\PSpace)~\cite{StockhusenTantau13}, the parameter being the degree. 
\begin{proof}[Sketch]
  Let $A_1$, $A_2$ be bar NFAs with initial states $s_1,s_2$. We
  exhibit an \NExpSpace procedure to check that $L_\alpha(A_1)$ is
  \emph{not} a subset of $L_\alpha(A_2)$, which implies the claimed
  bound by Savitch's theorem. It maintains a state $q$ of $A_1$ and a
  set $\Xi$ of states in the name-dropping RNNA $\bar A_2$ generated
  by $A_2$ as described in Construction~\ref{constr:RNNA}, with $q$
  initialized to $s_1$ and $\Xi$ to $\{(s_2,\id F_{N_{s_2}})\}$. It
  then iterates the following:
  \begin{enumerate}
  \item \label{step:left}Guess a transition $q\trans{\alpha} q'$ in
    $A_1$ and
  update $q$ to $q'$.
\item Compute the set $\Xi'$ of all states of
  $\bar A_2$ reachable from states in $\Xi$ via $\alpha$-transitions
  (literally, i.e.\ not up to $\alpha$-equivalence) and update
  $\Xi$ to $\Xi'$.
  \end{enumerate}

  \noindent The algorithm terminates successfully and reports that
  $L_\alpha(A_1)\not\subseteq L_\alpha(A_2)$ if it reaches a final
  state $q$ of $A_1$ while $\Xi$ contains only non-final states.

  Correctness of the algorithm follows from Theorem~\ref{thm:freernna}
  and Lemma~\ref{lem:name-drop-alpha}.
  For space usage, first recall that cosets $\pi F_N$ can be
  represented as injective renamings $N \to \Names$. Note that $\Xi$
  will only ever contain states $(q,\pi F_N)$ such that the image
  $\pi N$ of the corresponding injective renaming is contained in the
  set $P$ of names occurring literally in either $A_1$ or $A_2$. In
  fact, at the beginning, $\id N_{s_2}$ consists only of names
  literally occurring in $A_2$, and the only names that are added are
  those occurring in transitions guessed in Step~\ref{step:left},
  i.e.\ occurring literally in $A_1$.  So states $(q,\pi F_N)$ in
  $\Xi$ can be coded using partial functions $N_q\partialto P$. Since
  $\sharp P\le\degree(A_1)+\degree(A_2)$, there are
  at most
  $k\cdot (\degree(A_1)+\degree(A_2)+1)^{\degree(A_2)}=k\cdot
  2^{\degree(A_2)\log(\degree(A_1)+\degree(A_2)+1)}$ such states,
  where $k$ is the number of states of $A_2$. \qed
\end{proof}


\begin{rem}\label{rem:nka-complexity}
  The translation from NKA expressions to bar NFAs
  (Remark~\ref{rem:nka-non-kleene}) increases expression size
  exponentially but the degree only
  linearly. Theorem~\ref{thm:barnfa-expspace} thus implies the known
  \ExpSpace upper bound on inclusion for NKA
  expressions~\cite{KozenEA15}.  
\end{rem}
\noindent We now adapt the inclusion algorithm to local freshness
semantics.
  We denote by $\sqsubseteq$ the preorder (in fact: order) on
  $\bar\Names^*$ generated by $wav\sqsubseteq w\newletter a v$.

\begin{lem}\label{lem:datalang}
  Let $L_1,L_2$ be bar languages accepted by RNNA. Then
  $D(L_1)\subseteq D(L_2)$ iff for each $[w]_\alpha\in L_1$ there
  exists $w'\datasup w$ such that $[w']_\alpha\in L_2$.
\end{lem}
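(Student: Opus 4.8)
The plan is to reduce both implications to one combinatorial description of the order $\dataord$ and of closed bar strings. First I would record the elementary fact that $w\dataord w'$ holds exactly when $\unbar(w)=\unbar(w')$ and every barred position of $w$ (i.e.\ position carrying a letter $\newletter a$) is also barred in $w'$; the generating step $wav\dataord w\newletter a v$ only turns a free letter into a bound one with the same underlying name, so $\unbar$ is constant along $\dataord$-chains, and $\alpha$-equivalence likewise leaves the set of barred positions (and the free letters) unchanged. Since we are under local freshness with closed automata, Corollary~\ref{cor:lang-ufs} guarantees that every bar string occurring in $L_1$ or $L_2$ is closed, i.e.\ has no free names.

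The implication \(\Leftarrow\) is then immediate: given $d\in D(L_1)$, write $d=\unbar(w)$ with $[w]_\alpha\in L_1$, use the hypothesis to pick $w'\datasup w$ with $[w']_\alpha\in L_2$, and observe $d=\unbar(w)=\unbar(w')\in D(L_2)$.

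For \(\Rightarrow\) I would exploit the freedom in choosing the representative of $[w]_\alpha$. Pick a \emph{clean} representative $w$; then its bound letters are pairwise distinct and, as the automata are closed, $w$ has no free names. Consequently, in the data word $d:=\unbar(w)$ the barred positions of $w$ are precisely the positions carrying the \emph{first} occurrence of their letter, while every unbarred position repeats an earlier letter (it references a preceding binder, whose distinct bound name already appeared at that binder's position). The key observation I would then prove is that \emph{any} closed bar string $u$ with $\unbar(u)=d$ is barred at every first-occurrence position of $d$: an unbarred position carries a free letter, which in a closed string must be preceded by a binder of the same name, contradicting first occurrence. Now $d\in D([w]_\alpha)\subseteq D(L_1)\subseteq D(L_2)$, so some $[u]_\alpha\in L_2$ has a representative $u$ with $\unbar(u)=d$; as $u$ is closed, its barred positions contain all first-occurrence positions of $d$, hence contain the barred positions of $w$. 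Thus $\unbar(u)=\unbar(w)$ and the barred positions of $w$ lie in those of $u$, i.e.\ $u\datasup w$ with $[u]_\alpha\in L_2$, so $w':=u$ is the required string.

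The main obstacle is the forward direction, and within it the step of manufacturing the right data word: I must instantiate the bound names of $w$ so that its barred positions become the \emph{unavoidable} ones (the first occurrences), forcing the witness handed back by $D(L_1)\subseteq D(L_2)$ to bar at least those positions. The supporting first-occurrence lemma relies essentially on closedness of the automata; I would also note that, by equivariance of $L_1$ and $L_2$, the asserted condition is independent of the chosen representative, so proving it for the clean representative suffices. Notably, this argument needs neither the $\alpha$-invariance machinery nor name dropping, only closedness and the $\dataord$/first-occurrence characterisation.
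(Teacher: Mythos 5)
Your characterization of $\dataord$ (same underlying word, more barred positions) and your analysis of clean closed bar strings are both correct, but the forward direction has two genuine gaps. First, you restrict to closed languages, which the lemma does not assume: Corollary~\ref{cor:lang-ufs} only gives $\supp(w)\subseteq\supp(s)$, not $\FN(w)=\emptyset$, and the lemma is applied in Corollary~\ref{cor:weak-cont-expspace} to arbitrary bar NFAs. For open bar strings your first-occurrence argument collapses: an unbarred first occurrence may simply be a free name, so for, say, $w=a\newletter{b}$ the barred positions are not the first-occurrence positions of $\unbar(w)$, and a witness $u$ with $\unbar(u)=ab$ need not bar anything at all.

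Second, and more seriously, you only manufacture $w'\datasup w$ for the \emph{clean} representative of each class, and the appeal to equivariance to cover the other representatives is invalid: equivariance concerns the action of $G$, and $\alpha$-equivalent bar strings need not lie in the same orbit (e.g.\ $\newletter{a}\newletter{b}$ with $a\neq b$ and $\newletter{a}\newletter{a}$ are $\alpha$-equivalent, but no permutation, acting letterwise, maps one to the other). The per-representative statement is exactly what is needed: your own ``if'' direction applies the hypothesis to the specific representative $w$ realizing $d=\unbar(w)$, and the algorithm of Corollary~\ref{cor:weak-cont-expspace} follows an arbitrary literal accepting path of $A_1$. It is also where the real difficulty sits: for $w=\newletter{a}\newletter{a}$ the barred positions strictly exceed the first-occurrence positions of $\unbar(w)=aa$, so the witness $u$ extracted from $D(L_1)\subseteq D(L_2)$ (say $u=\newletter{a}a$) need not satisfy $u\datasup w$, and one must argue that a \emph{different} witness exists. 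The paper closes precisely this gap in Lemma~\ref{lem:weak-inclusion} by an induction on $w$ over a name-dropping RNNA accepting $L_2$, using Lemma~\ref{lem:name-drop-alpha} and finite branching to choose bound names that no reachable state can consume as a free letter; this is also why the RNNA hypothesis, which your argument never uses, cannot simply be dropped. (A smaller point: if one instead reads the hypothesis per class, then your ``if'' direction needs $D(w)\subseteq D(w')$ for $w\dataord w'$, which is Lemma~\ref{lem:dataord} and concerns all $\alpha$-variants; it does not follow from $\unbar(w)=\unbar(w')$ alone.)
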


\begin{cor}\label{cor:weak-cont-expspace}
  Inclusion $D(\barlang(A_1))\subseteq D(\barlang(A_2))$ of bar
  NFAs (or regular bar expressions) under local freshness semantics
  is in para-\PSpace, with parameter
  $\degree(A_2)\log(\degree(A_1)+\degree(A_2)+1)$.
\end{cor}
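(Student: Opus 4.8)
The plan is to reduce the problem, via Lemma~\ref{lem:datalang}, to a \emph{barred} variant of the global-freshness algorithm from the proof of Theorem~\ref{thm:barnfa-expspace}, and then to recycle that proof's space analysis almost verbatim. By Lemma~\ref{lem:datalang}, $D(\barlang(A_1))\not\subseteq D(\barlang(A_2))$ holds iff there is some $[w]_\alpha\in\barlang(A_1)$ for which \emph{no} $w'\datasup w$ satisfies $[w']_\alpha\in\barlang(A_2)$. Since $w'\datasup w$ means precisely that $w'$ arises from $w$ by turning some plain letters $a$ into bound letters $\newletter a$ (the generators of $\dataord$), searching for such a witness amounts to running $A_1$ to produce a representative $w$ while tracking, in the name-dropping RNNA $\bar A_2$ of Construction~\ref{constr:RNNA}, the states reachable along \emph{every} barring $w'\datasup w$ simultaneously. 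As before I would give an \NExpSpace\ procedure for non-inclusion and pass to para-\PSpace\ by Savitch's theorem.

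Concretely, I would keep the same configuration as in Theorem~\ref{thm:barnfa-expspace}: a state $q$ of $A_1$ (initialised to $s_1$) together with a set $\Xi$ of states of $\bar A_2$ (initialised to $\{(s_2,\id F_{N_{s_2}})\}$). The only change is the transition step. Upon guessing a transition $q\trans{\alpha}q'$ in $A_1$ and updating $q$ to $q'$, I update $\Xi$ to the set $\Xi'$ of states reachable from $\Xi$ as follows: if $\alpha=\newletter a$ is bound, along bound $\newletter a$-transitions (literally), exactly as in the global case; if $\alpha=a$ is free, along a free $a$-transition \emph{or} a bound $\newletter a$-transition carrying the same literal name $a$. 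The two options in the free case correspond to leaving this occurrence unbarred versus barring it when forming $w'$; making an independent such choice at each plain letter lets $\Xi$ accumulate exactly the union, over all $w'\datasup w$, of the states reachable in $\bar A_2$ by reading $w'$ literally. The run reports non-inclusion when it reaches a final state $q$ of $A_1$ while $\Xi$ contains only non-final states.

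Correctness is where the argument does its real work, and it is the step I expect to be the main obstacle. Because $\bar A_2$ is name-dropping (Theorem~\ref{thm:freernna}), its literal language is closed under $\alpha$-equivalence (Lemma~\ref{lem:name-drop-alpha}), so $[w']_\alpha\in\barlang(A_2)$ iff $w'\in\litlang(\bar A_2)$; hence $\Xi$ contains a final state iff some $w'\datasup w$ lies in $\barlang(A_2)$. A successful run thus realises exactly the failure condition of Lemma~\ref{lem:datalang}, and conversely each witness of non-inclusion is found by guessing a run of $A_1$ producing an appropriate representative. Here one must check two subtleties: that the relevant property does not depend on the chosen representative $w$ (it does not, since $\alpha$-equivalent bar strings carry bars in the same positions and agree on the free names at the remaining positions, so barring the same positions yields $\alpha$-equivalent $w'$), and that name capture is handled soundly (a later plain $a$ that becomes bound in $w'$ through an earlier barring is still read by $\bar A_2$ as a free transition, which is correct precisely because $\bar A_2$ is run literally and its literal language is $\alpha$-closed).

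Finally, the space bound is immediate from the global case. The data stored, $q$ and $\Xi$, is identical in shape, and the only transitions newly followed are bound $\newletter a$-transitions for names $a$ labelling free transitions of $A_1$; such $a$ still lie in the set $P$ of names occurring literally in $A_1$ or $A_2$, so no new names enter the coding of states $(q,\pi F_N)$ as partial functions $N_q\partialto P$. The number of reachable $\bar A_2$-states, and hence the space for $\Xi$, is therefore bounded as in Theorem~\ref{thm:barnfa-expspace}, yielding para-\PSpace\ with parameter $\degree(A_2)\log(\degree(A_1)+\degree(A_2)+1)$ after Savitch's theorem.
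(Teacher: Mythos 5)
Your proposal is correct and follows essentially the same route as the paper: reduce to Lemma~\ref{lem:datalang} and modify the global-freshness algorithm of Theorem~\ref{thm:barnfa-expspace} so that when $A_1$ reads a free letter $a$, the set $\Xi$ is additionally updated along literal $\newletter a$-transitions of $\bar A_2$, with the space analysis carrying over unchanged. The paper states this in one line; your elaboration of why correctness rests on Lemma~\ref{lem:name-drop-alpha} and why no new names enter the coding of $\Xi$ is a faithful expansion of the same argument.
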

\begin{proof}
  By Lemma~\ref{lem:datalang}, we can use a modification of the above
  algorithm where $\Xi'$ additionally contains states of $\bar A_2$
  reachable from states in $\Xi$ via $\newletter{a}$-transitions in
  case $\alpha$ is a free name~$a$. \qed
\end{proof}
%
%


\section{Conclusions}
We have studied the global and local freshness semantics of
\emph{regular nondeterministic nominal automata}, which feature
explicit name-binding transitions. We have shown that RNNAs are
equivalent to session automata~\cite{BolligEA14} under global
freshness and to \emph{non-spontaneous} and \emph{name-dropping}
nondeterministic orbit-finite automata (NOFAs)~\cite{BojanczykEA14}
under local freshness. Under both semantics, RNNAs are comparatively
well-behaved computationally, and in particular admit inclusion
checking in parameterized polynomial space. While this reproves known
results on session automata under global freshness, decidability of
inclusion under local freshness appears to be new. Via the equivalence
between NOFAs and register automata (RAs), we in fact obtain a
decidable class of RAs that allows unrestricted non-determinism and
any number of
registers. 



\myparagraph{Acknowledgements} We thank Charles Paperman for useful
discussions, and the anonymous reviewers of an earlier version of the
paper for insightful comments that led us to discover the crucial
notion of name dropping. Erwin R.\ Catesbeiana has commented on the
empty bar language.

\bibliographystyle{splncs03}
\bibliography{kleene}
\newpage\appendix

\makeatletter
\edef\thetheorem{\expandafter\noexpand\thesection\@thmcountersep\@thmcounter{theorem}}
\makeatother

\section{Omitted Proofs}

In this appendix we assume that readers are familiar with basic
notions of category theory (see e.g.~\cite{maclane98}), with the
theory of algebras and coalgebras for a functor (see
e.g.~\cite{Rutten00}), and with basic properties of nominal sets (see
e.g.~\cite{Pitts13}).

\subsection{Abstraction in Nominal Sets}

We occasionally use, without express mention, the following
alternative description of equality in the abstraction $[\Names]X$,
which formalizes the usual intuitions about $\alpha$-equivalence:
\begin{lem}\label{lem:abstr}
  Let $a,b\in\Names$ and $x,y\in X$. Then $\eq{a}{x}=\eq{b}{y}$ in
  $[\Names]X$ iff either
  \begin{myrmenumerate}
  \item $(a,x)=(b,y)$, or
  \item $b\neq a$, $b\fresh x$, and $(ab)\cdot x=y$.
  \end{myrmenumerate}
\end{lem}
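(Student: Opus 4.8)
The plan is to prove both implications of the biconditional. Almost all of the mathematical content lies in extracting the freshness condition $b\fresh x$ in the forward direction; everything else is bookkeeping with finite permutations, combined with the basic fact that a transposition of two names both fresh for $x$ lies in $\Fix(\supp(x))\subseteq\fix x$ and hence fixes $x$.

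For the backward direction, case (i) is immediate, since $(a,x)=(b,y)$ yields literally the same $\sim$-class. For case (ii), assume $b\neq a$, $b\fresh x$, and $y=(a\,b)\cdot x$. Choosing a name $c$ fresh for $a,b,x,y$, I would unfold the definition of $\sim$ and verify $(c\,a)\cdot x=(c\,b)\cdot y$ directly. Substituting $y=(a\,b)\cdot x$ and multiplying transpositions gives the identity $(c\,b)(a\,b)=(c\,a)(b\,c)$, so that $(c\,b)\cdot y=(c\,a)(b\,c)\cdot x$. Since $b\fresh x$ and $c\fresh x$, the transposition $(b\,c)$ fixes $x$, whence $(c\,b)\cdot y=(c\,a)\cdot x$, which is exactly the condition $\eq{a}{x}=\eq{b}{y}$.

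For the forward direction, assume $\eq{a}{x}=\eq{b}{y}$ and fix $c$ fresh for $a,b,x,y$, so that $(c\,a)\cdot x=(c\,b)\cdot y$. If $a=b$, applying $(c\,a)^{-1}$ to both sides gives $x=y$, i.e.\ case (i). If $a\neq b$, rearranging yields $y=(c\,b)(c\,a)\cdot x=(a\,b\,c)\cdot x$ and, dually, $x=(a\,c\,b)\cdot y$. The crucial step is to derive $b\fresh x$: applying the equivariant map $\supp$ (Section~\ref{sec:prelim}) to $x=(a\,c\,b)\cdot y$ gives $\supp(x)=(a\,c\,b)\cdot\supp(y)$, and since $(a\,c\,b)(c)=b$ this shows $b\in\supp(x)$ iff $c\in\supp(y)$; as $c$ was chosen fresh for $y$, we conclude $b\fresh x$. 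With $b\fresh x$ and $c\fresh x$ in hand, $(b\,c)$ fixes $x$, and factoring $(a\,b\,c)=(a\,b)(b\,c)$ gives $y=(a\,b\,c)\cdot x=(a\,b)\cdot x$, which is case (ii).

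The only genuinely non-routine ingredient is this derivation of $b\fresh x$, and the essential tool is the equivariance of $\supp$ applied to the relation $x=(a\,c\,b)\cdot y$ together with the freshness of the auxiliary name $c$. All remaining steps reduce to arithmetic with transpositions and the observation that a permutation fixing every name in $\supp(x)$ fixes $x$.
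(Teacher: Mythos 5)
Your proof is correct and follows essentially the same route as the paper's: both directions come down to transposition arithmetic together with the fact that a transposition of two names fresh for $x$ lies in $\Fix(\supp(x))\subseteq\fix x$ and hence fixes $x$. The only point of divergence is the derivation of $b\fresh x$ in the forward direction: the paper infers it from $\supp(x)\subseteq\{a\}\cup\supp(\eq{a}{x})=\{a\}\cup\supp(\eq{b}{y})$ together with $b\fresh\eq{b}{y}$, whereas you apply equivariance of $\supp$ to $x=(a\,c\,b)\cdot y$ and use that $c$ was chosen fresh for $y$ --- both arguments are valid, and yours has the mild advantage of not needing any facts about supports of abstraction elements.
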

\begin{proof}
  \emph{`If':} the case where (i) holds is trivial, so assume
  (ii). Let $c$ be fresh; we have to show $(ca)\cdot x=(cb)\cdot y$.
  But $(cb)\cdot y=(cb)\cdot(ab)\cdot x=(acb)\cdot x = (ca)\cdot x$,
  where we use in the last step that $b,c$ are both fresh for $x$ so that 
  $(ca)^{-1}(acb)=(ca)(acb)=(bc)$ fixes $x$.

  \emph{`Only if':} We assume $(a,x)\neq (b,y)$ and prove (ii). We
  first show $a\neq b$: Assume the contrary. Let $c$ be fresh; by the
  definition of abstraction, we then have $(ca)\cdot x = (cb)\cdot y$,
  so $y=(cb)(ca)\cdot x=(ca)(ca)\cdot x=x$, contradiction.  We have
  $\supp(x)\subseteq\{a\}\cup\supp(\eq{a}{x})=\{a\}\cup\supp(\eq{b}{y})$,
  whence $b\fresh x$ since $a\neq b$ and $b\fresh\eq{b}{y}$. Finally,
  with $c$ as above
  $y=(cb)^{-1}(ca)\cdot x=(cb)(ca)\cdot x=(acb)\cdot x=(ab)\cdot x$,
  again because $(ab)^{-1}(abc)=(ab)(abc)=(bc)$ and $b,c$ are fresh
  for $x$.
\end{proof}

As an easy consequence we obtain:
\begin{cor}
  Let $X$ be a nominal set, $a \in \Names$ and $x \in X$. Then $\supp(\eq{a}{x}) = \supp(x) - \{a\}$. 
\end{cor}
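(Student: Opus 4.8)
The plan is to prove the two inclusions $\supp(\eq{a}{x}) \subseteq \supp(x) - \{a\}$ and $\supp(x) - \{a\} \subseteq \supp(\eq{a}{x})$ separately, using Lemma~\ref{lem:abstr} as the main tool throughout. The upper bound $\supp(\eq{a}{x}) \subseteq \{a\}\cup\supp(x)$ is automatic from equivariance of the quotient map $\Names\times X \to [\Names]X$, but the point of the statement is the sharp bound, namely that $a$ itself is always removed; this is what the argument below must deliver.

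For the inclusion $\subseteq$, I would show directly that $S := \supp(x) - \{a\}$ supports $\eq{a}{x}$, i.e.\ that every $\pi \in \Fix(S)$ satisfies $\pi\cdot\eq{a}{x} = \eq{a}{x}$, equivalently $\eq{\pi(a)}{\pi(x)} = \eq{a}{x}$. The easy case is $\pi(a) = a$: then $\pi$ fixes all of $\supp(x)$ pointwise, so $\pi(x) = x$ and the equality is immediate. The delicate case is $\pi(a) = b$ with $b \neq a$; here I would invoke clause (ii) of Lemma~\ref{lem:abstr}, which reduces the goal to checking $b \fresh x$ and $(ab)\cdot x = \pi(x)$. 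Both follow from $\pi \in \Fix(S)$: since $\pi$ is injective and already fixes $S$ pointwise while sending $a$ to $b$, the name $b$ lies outside $S \cup \{a\} = \supp(x)$, giving $b \fresh x$; and a short computation shows that $(ab)\pi$ fixes $\supp(x)$ pointwise (it fixes $a$, and fixes each $c \in S$ because $c \notin \{a,b\}$), hence $(ab)\pi(x) = x$, which is exactly $(ab)\cdot x = \pi(x)$.

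For the reverse inclusion, I would argue by contradiction: suppose some $c \in \supp(x)$ with $c \neq a$ fails to lie in $\supp(\eq{a}{x})$. Choosing a name $d$ fresh for $x$, for $a$, and for $\eq{a}{x}$, freshness of $c$ and $d$ for $\eq{a}{x}$ gives $(cd)\cdot\eq{a}{x} = \eq{a}{x}$; since $c,d \neq a$ this rewrites as $\eq{a}{(cd)\cdot x} = \eq{a}{x}$, and clause (i) of Lemma~\ref{lem:abstr} (the first names agree, so the pairs must coincide) forces $(cd)\cdot x = x$. This contradicts the standard fact that swapping a supporting name $c$ with a fresh name $d$ moves $x$, seen via equivariance of $\supp$: $(cd)\supp(x)$ would contain $d$ but not $c$. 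Hence every such $c$ lies in $\supp(\eq{a}{x})$.

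The main obstacle will be the case $\pi(a)\neq a$ in the first inclusion: it is the only point where one must actually produce the $\alpha$-renaming witness demanded by Lemma~\ref{lem:abstr}(ii) and verify the identity $(ab)\cdot x = \pi(x)$, so the bookkeeping about which names $\pi$ and the transposition $(ab)$ move is where care is needed. Everything else reduces to routine freshness arguments.
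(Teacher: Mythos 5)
Your proof is correct, and it follows exactly the route the paper intends: the paper states this corollary without proof as an ``easy consequence'' of Lemma~\ref{lem:abstr}, and your argument is a correct, careful write-out of that derivation (the inclusion $\subseteq$ via showing $\supp(x)-\{a\}$ supports $\eq{a}{x}$ using clause (ii), and $\supseteq$ via a fresh transposition and clause (i)). Both the case analysis on $\pi(a)$ and the verification that $(ab)\pi$ fixes $\supp(x)$ pointwise check out.
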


\myparagraph{Proof of Lemma~\ref{lem:of}}
    Firstly, any finite set $S\subseteq X$ is ufs, because $\bigcup_{y\in S}
    \supp(y)$ is a finite union of finite sets. Secondly, for any ufs $S\subseteq
    X$, we have $\supp(S) = \bigcup_{y\in S} \supp(y)$, which is a finite union
    (because $X$ is orbit-finite) of again finite sets. \qed

\subsection{Proofs and Lemmas for Section~\ref{sec:barstrings}}

\begin{lem}\label{lem:global-injective}
  The operator $N$ is injective on closed bar languages.
\end{lem}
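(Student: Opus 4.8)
The plan is to exhibit an explicit left inverse to $N$ on closed bar languages, from which injectivity follows at once. For a data word $v\in\Names^*$, let $\barize(v)\in\bar\Names^*$ denote the \emph{canonical barring} of $v$, obtained by replacing the first occurrence of each name in $v$ with its barred version $\newletter{a}$ and leaving every later occurrence unbarred. By construction $\barize(v)$ is clean (its bound letters are pairwise distinct, one per name occurring in $v$) and closed (the first occurrence of each name is bound), and clearly $\unbar(\barize(v))=v$. I will show that for every closed bar language $L$,
\begin{equation*}
  L = \{\, [\barize(v)]_\alpha \mid v \in N(L) \,\},
\end{equation*}
which recovers $L$ from $N(L)$ and hence proves injectivity immediately.

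The crux is the reconstruction identity $\barize(\unbar(w))=w$ for every \emph{clean closed} bar string $w$. First I would show that in such a $w$ the positions carrying a bound letter $\newletter a$ are exactly the positions at which a name occurs for the first time in $\unbar(w)$. For the forward inclusion: if position $i$ carries $\newletter a$ but some earlier position also maps to the name $a$ under $\unbar$, then, since $w$ is clean, $\newletter a$ is the unique bound occurrence of $a$, so that earlier occurrence is an unbarred $a$ lying to the left of the only $\newletter a$, making $a$ a free name and contradicting closedness. For the backward inclusion: if position $i$ is the first occurrence of $a$ in $\unbar(w)$ but carries an unbarred $a$, then closedness forces a bound occurrence $\newletter a$ somewhere; it cannot precede position $i$ (that would contradict $i$ being first) nor coincide with it (it is unbarred), so it follows position $i$, again exhibiting an unbarred $a$ to the left of $\newletter a$ and contradicting closedness. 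Since $\barize$ bars precisely the first occurrences, this coincidence of positions yields $\barize(\unbar(w))=w$.

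With the reconstruction identity in hand, the two inclusions of the recovery formula are routine. For $\supseteq$: given $[w]_\alpha\in L$, replace $w$ by a clean representative $w'$ (which exists, and is closed since $\FN$ is invariant under $\alpha$-equivalence and $L$ is closed); then $\unbar(w')\in N(L)$ and $\barize(\unbar(w'))=w'$, so $[\barize(\unbar(w'))]_\alpha=[w]_\alpha$. For $\subseteq$: any $v\in N(L)$ equals $\unbar(w)$ for some clean $w$ with $[w]_\alpha\in L$, which is closed, so $\barize(v)=w$ and $[\barize(v)]_\alpha=[w]_\alpha\in L$.

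I expect the reconstruction identity to be the main obstacle, since it is the only place where both hypotheses on $w$ — cleanness and closedness — are genuinely used, and it is exactly where injectivity fails without closedness: for an open word such as $w=a$ one has $\barize(\unbar(a))=\newletter a\neq a$, matching the collapse $N(\{a,\newletter a\})=N(\{\newletter a\})$ noted in the open case. The remaining set-theoretic bookkeeping for the recovery formula is then immediate.
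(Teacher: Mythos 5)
Your proof is correct and rests on the same key fact as the paper's: a clean closed bar string is uniquely determined by its unbarring (the paper's Lemma on injectivity of $\unbar$, proved there by a leftmost-difference argument, is exactly your reconstruction identity $\barize(\unbar(w))=w$ in contrapositive-free form). Packaging this as an explicit left inverse $L\mapsto\{[\barize(v)]_\alpha\mid v\in N(L)\}$ is a slightly more constructive presentation, but the argument is essentially the paper's.
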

The proof of Lemma~\ref{lem:global-injective} relies on the following
simple fact:
\begin{lem}\label{lem:ub-injective}
  The operator $\unbar$ is injective on closed clean bar strings.
\end{lem}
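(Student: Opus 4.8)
The plan is to show that a closed clean bar string is completely determined by its underlying data word $\unbar(w)$ via an explicit reconstruction: the barred positions of $w$ are exactly those carrying the \emph{first} occurrence of each name. Since $\unbar$ preserves length and the underlying name at every position (replacing $\newletter a$ and $a$ alike by $a$), any two bar strings $w,w'$ with $\unbar(w)=\unbar(w')$ already agree on the plain-name content at each position; it therefore suffices to show that, for closed clean bar strings, the pattern of bars is determined by $\unbar(w)$.

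First I would establish the key characterization: if $w$ is closed and clean and $u:=\unbar(w)$, then for each position $i$ the letter $w_i$ is barred (i.e.\ $w_i=\newletter a$ with $a=u_i$) if and only if $i$ is the first position at which the name $u_i$ occurs in $u$. For the forward direction, suppose $w_i=\newletter a$. Cleanness makes this the unique occurrence of $\newletter a$ in $w$, so any earlier occurrence of $a$ in $u$ would have to arise from a plain letter $w_j=a$ with $j<i$; but then $a$ would occur to the left of its only binder, making $a$ a free name and contradicting $\FN(w)=\emptyset$. Hence $i$ is the first occurrence of $a$. For the converse, suppose $w_i=a$ is plain. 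By closedness $a$ is not free, so this plain occurrence must be preceded by a binder, i.e.\ $w_j=\newletter a$ for some $j<i$; then $u_j=a=u_i$ with $j<i$, so $i$ is not the first occurrence of $a$ in $u$.

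With this characterization in hand, injectivity is immediate. Given closed clean bar strings $w,w'$ with $\unbar(w)=\unbar(w')=u$, the two strings have equal length and equal underlying name $u_i$ at each position $i$; and by the characterization $w_i$ is barred iff $i$ is the first occurrence of $u_i$ in $u$, which holds iff $w'_i$ is barred. Thus $w_i=w'_i$ at every position, so $w=w'$.

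I expect the main (still modest) obstacle to be the forward direction of the characterization, where one must combine cleanness (to rule out a second binder yielding an earlier occurrence) with closedness (to rule out an earlier plain occurrence witnessing freeness); everything else is bookkeeping about positions under the length-preserving map $\unbar$.
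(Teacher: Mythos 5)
Your proof is correct, and it rests on exactly the two facts the paper uses: closedness forces every plain occurrence of $a$ to be preceded by some $\newletter{a}$, and cleanness forbids a second occurrence of $\newletter{a}$. The packaging differs, though. The paper argues by contradiction at the \emph{leftmost position where two preimages differ}: writing $w_1=uav_1$ and $w_2=u\newletter{a}v_2$ with a common prefix $u$, closedness of $w_1$ forces $\newletter{a}$ to appear in $u$, which immediately contradicts cleanness of $w_2$. You instead prove an intrinsic \emph{reconstruction} statement -- in a closed clean bar string, position $i$ is barred iff it carries the first occurrence of the name $u_i$ in $u=\unbar(w)$ -- and deduce injectivity as a one-line corollary. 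Your route is marginally longer but buys more: it exhibits an explicit inverse of $\unbar$ on the set of closed clean bar strings (barring exactly the first occurrences), which makes it evident not only that $\unbar$ is injective there but also how to compute the preimage; the paper's minimal-counterexample argument is shorter but yields only the injectivity statement itself. Both are sound.
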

\begin{proof}
  Let $w_1,w_2$ be closed clean bar strings, and let
  $\unbar(w_1)=\unbar(w_2)$. Assume for a contradiction that
  $w_1\neq w_2$. Picking the leftmost position where $w_1$ and $w_2$
  differ, we have w.l.o.g.\ $u,v_1,v_2\in\bar\Names^*$ and
  $a\in\Names$ such that
  \begin{equation*}
    w_1=uav_1\qquad \text{and}\qquad w_2=u\newletter av_2. 
  \end{equation*}
  Since $w_1$ is closed, $u$ must contain $\newletter a$, in
  contradiction to $w_2$ being clean. \qed
\end{proof}

\begin{proof}[Lemma~\ref{lem:global-injective}] Let $L_1,L_2$
  be closed bar languages such that $N(L_1)=N(L_2)$, and let
  $[w]_\alpha\in L_1$. We have to show $[w]_\alpha\in L_2$. We have
  that $w$ is closed, and w.l.o.g.\ $w$ is clean. Then
  $\unbar(w)\in N(L_1)$, and hence $\unbar(w)\in N(L_2)$, so there
  exists a clean and closed $w'$ such that $\unbar(w')=\unbar(w)$ and
  $[w']_\alpha\in L_2$. By Lemma~\ref{lem:ub-injective}, $w=w'$, so
  that $[w]_\alpha=[w']_\alpha\in L_2$ as required. \qed
\end{proof}

\subsection{Proofs and Lemmas for Section~\ref{sec:rnna}}

\begin{defn}
  Given a state $q$ in an RNNA $A$ we write $\litlang(q)$ and
  $\barlang(q)$ for the literal language and the bar language,
  respectively, accepted by the automaton obtained by making $q$ the
  initial state of $A$.
\end{defn}

\begin{lem}\label{lem:lang-equivar}
  In an RNNA, the map $q\mapsto \barlang(q)$ is equivariant.
\end{lem}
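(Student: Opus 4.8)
The plan is to prove equivariance by first handling the literal language and then descending to the $\alpha$-equivalence quotient. Concretely, I will show that for every $\pi\in G$ and every state $q$ we have $\litlang(\pi q)=\pi\cdot\litlang(q)$ as subsets of $\bar\Names^*$, and then observe that passing to $\alpha$-equivalence classes is compatible with the $G$-action.

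First I would lift the equivariance of the single-step transition relation to entire runs. Since $\mathord{\to}\subseteq Q\times\bar\Names\times Q$ is equivariant, we have $q\trans{\alpha}q'$ iff $\pi q\trans{\pi\alpha}\pi q'$ for every $\pi\in G$ and every letter $\alpha\in\bar\Names$. A straightforward induction on the length of a bar string $w=\alpha_1\cdots\alpha_n$, using that the letter-wise action satisfies $\pi w=(\pi\alpha_1)\cdots(\pi\alpha_n)$, then yields $q\trans{w}q'$ iff $\pi q\trans{\pi w}\pi q'$. Combining this with equivariance of the set $F$ of final states (so that $q'\in F$ iff $\pi q'\in F$), we obtain $w\in\litlang(q)$ iff $\pi w\in\litlang(\pi q)$, i.e.\ $\litlang(\pi q)=\pi\cdot\litlang(q)$.

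It remains to descend to $\bar M=\bar\Names^*/{\alphaeq}$. The key point is that $\alpha$-equivalence is $G$-invariant: if $w\alphaeq w'$ then $\pi w\alphaeq\pi w'$, since $\alphaeq$ is generated by an equivariant relation (the defining condition $\eq{a}{v}=\eq{b}{u}$ in the abstraction set $[\Names]\bar\Names^*$ is equivariant, and the letter-wise action is equivariant). Hence the $G$-action descends to a well-defined action on $\bar M$ with $\pi[w]_\alpha=[\pi w]_\alpha$, and quotienting commutes with the action. Applying this to the equivariance of $\litlang$ just established gives $\barlang(\pi q)=\{[\pi w]_\alpha\mid w\in\litlang(q)\}=\pi\cdot\{[w]_\alpha\mid w\in\litlang(q)\}=\pi\cdot\barlang(q)$, as required. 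That $\barlang(q)$ really lives in the nominal set $\powfs(\bar M)$, so that the equivariance statement is meaningful, follows from Corollary~\ref{cor:lang-ufs} applied with $q$ as the initial state.

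I expect no serious obstacle: the argument is a routine equivariance chase. The only point requiring a little care is the well-definedness of the induced $G$-action on $\bar M$, i.e.\ the $G$-invariance of $\alphaeq$; everything else is bookkeeping with the letter-wise action and the equivariance of $\mathord{\to}$ and $F$.
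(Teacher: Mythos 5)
Your proof is correct and follows essentially the same route as the paper's: establish equivariance of the literal language from equivariance of the transition relation and final states, then push this through the quotient $[-]_\alpha$ using $\pi[w]_\alpha=[\pi w]_\alpha$. The only (immaterial) difference is that the paper derives equivariance of $[-]_\alpha$ from an initial-algebra argument, whereas you argue directly that $\alphaeq$ is generated by an equivariant relation so the $G$-action descends to $\bar M$.
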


\begin{proof}
  Note first that the set of bar strings $\bar \Names^*$ is the
  initial algebra for the functor
  $SX = 1 + \Names \times X +\Names \times X$ on $\Nom$. And the set
  $\bar M$ of bar strings modulo $\alpha$-equivalence is the intial
  algebra for the functor
  $S_\alpha X = 1 + \Names \times X + [\Names]X$ on $\Nom$. The
  functor $S_\alpha$ is a quotient of the functor $S$ via the natural
  transformation $q: S \epito S_\alpha$ given by the canonical
  quotient maps $\Names \times X \epito [\Names]X$. The canonical
  quotient map $[-]_\alpha: \bar\Names^* \epito \bar M$ that maps
  every bar string to its $\alpha$-equivalence class is obtained
  inductively, i.e.~$[-]_\alpha$ is the unique equivariant map such
  that the following square commutes:
  \[
  \begin{tikzcd}
    S\bar\Names^*
    \arrow{d}[swap]{S[-]_\alpha}
    \arrow{rr}{\iota}
    &&
    \bar\Names^*
    \arrow{d}{[-]_\alpha}
    \\
    S\bar M 
    \arrow{r}{q_{\bar M}}
    & S_\alpha\bar M
    \arrow{r}{\iota_{\alpha}}
    & \bar M 
  \end{tikzcd}
  \]
  where $\iota: S\bar\Names^* \to \bar\Names^*$ and
  $\iota_\alpha: S_\alpha\bar M \to \bar M$ are the structures of the
  initial algebras, respectively. Since the map $[-]_\alpha$ is
  equivariant we thus have $\pi[w]_\alpha = [\pi w]_\alpha$ for every
  $w \in \bar\Names^*$.

  Now we prove the statement of the lemma. Since both free and bound
  transitions are equivariant, the literal language $\litlang(-)$ is
  equivariant. It follows that the bar language $\barlang(-)$ is
  equivariant: If $m\in\barlang(q)$ then there is $w\in\litlang(q)$
  such that $[w]_\alpha=m$. For $\pi\in G$, it follows that
  $\pi\cdot w\in\litlang(\pi q)$, and hence
  $[\pi\cdot w]_\alpha\in\barlang(\pi q)$. But
  $[\pi\cdot w]_\alpha=\pi[w]_\alpha$, so $\pi m\in\barlang(\pi q)$.
  \qed
\end{proof}

\myparagraph{Proof of Lemma~\ref{lem:trans}}
\begin{itemize}

\item[\ref{item:free-trans}.] Consider the ufs set $Z=\{(a,q')\mid q\trans{a}q'\}$. Then we have
$\supp(q')\cup\{a\}=\supp(a,q')\subseteq\supp(Z)\subseteq\supp(q)$
where the second inclusion holds because $Z$ is ufs, and the third because $Z$ depends equivariantly on $q$.

\item[\ref{item:bound-trans}.] Consider the ufs set $Z=\{[a]q'\mid q\trans{\scriptnew{a}}q'\}$. Then we have 
$\supp(q')\subseteq\supp([a]q')\cup\{a\}\subseteq\supp(Z)\cup\{a\}\subseteq\supp(q)\cup\{a\}$
where the second inclusion holds because $Z$ is ufs, and the third because $Z$ depends equivariantly on $q$. 
\qed
\end{itemize}

\begin{rem}\label{rem:paths}
  Given an RNNA $A$ with the state set $Q$ the paths in $A$ form the
  initial algebra for the functor $Q \times S(-)$, where $S$ is the
  functor in the proof of Lemma~\ref{lem:lang-equivar}. Paths in $A$
  modulo $\alpha$-equivalence then form the initial algebra for
  $Q \times S_\alpha(-)$ and the canonical quotient map $[-]_\alpha$
  mapping a path to its $\alpha$-equivalence class is obtained by
  initiality similarly as the canonical quotient map in
  Lemma~\ref{lem:lang-equivar}.
\end{rem}

\myparagraph{Proof of Lemma~\ref{lem:name-drop-alpha}}

\begin{lem}\label{lem:lang-restrict} 
  Let $A$ be a name-dropping RNNA, and let $q|_N$ restrict a state $q$
  in $A$ to $N\subseteq\supp(q)$. Then
  $\{w\in\litlang(q)\mid\FN(w)\subseteq N\}\subseteq\litlang(q|_N)$.
\end{lem}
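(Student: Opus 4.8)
The plan is to prove the inclusion by induction on the length of the bar string $w$, quantifying uniformly over all states and all admissible restriction-sets. Throughout I will use the two defining clauses of name dropping in tandem: the restricted state $q_1|_{N_1}$ inherits all \emph{incoming} transitions of $q_1$, while $q|_N$ retains those \emph{outgoing} transitions of $q$ whose target support and label still fit inside $N$. I will also use, without further comment, the fact recorded after Lemma~\ref{lem:trans} that any bar string literally accepted from a state $p$ satisfies $\FN(w)\subseteq\supp(p)$.

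For the base case $w=\epsilon$, membership $\epsilon\in\litlang(q)$ merely says that $q$ is final; since $\FN(\epsilon)=\emptyset\subseteq N$ and $q|_N$ is final whenever $q$ is, we get $\epsilon\in\litlang(q|_N)$. For the inductive step I write $w=\alpha w'$ and fix the first transition $q\trans{\alpha}q_1$ of an accepting run, so that $w'\in\litlang(q_1)$. In both cases the strategy is the same: choose the subset $N_1:=\supp(q_1)\cap(N\cup\{a\})\subseteq\supp(q_1)$, where $a$ is the name carried by $\alpha$; apply the induction hypothesis to $(q_1,N_1,w')$ to obtain $w'\in\litlang(q_1|_{N_1})$; and then manufacture a single transition $q|_N\trans{\alpha}q_1|_{N_1}$, which prepends to the run for $w'$ to witness $w\in\litlang(q|_N)$.

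The key checks are the containments that license these moves. If $\alpha=a$ is free, then $\FN(w)=\{a\}\cup\FN(w')$, so $\FN(w)\subseteq N$ yields $a\in N$ and $\FN(w')\subseteq N$; together with $\FN(w')\subseteq\supp(q_1)$ this gives $\FN(w')\subseteq N_1$, so the induction hypothesis applies. To build the transition I first invoke the incoming-transition clause to upgrade $q\trans{a}q_1$ to $q\trans{a}q_1|_{N_1}$, and then the free outgoing clause at $q$, applicable because $\supp(q_1|_{N_1})\cup\{a\}=N_1\cup\{a\}\subseteq N$, to obtain $q|_N\trans{a}q_1|_{N_1}$. If $\alpha=\newletter a$ is bound, the bookkeeping is dual: now $\FN(w)=\FN(w')\setminus\{a\}$, so $\FN(w')\subseteq N\cup\{a\}$ and again $\FN(w')\subseteq N_1$; the incoming clause gives $q\trans{\scriptnew a}q_1|_{N_1}$, and the bound outgoing clause at $q$ applies because $\supp(q_1|_{N_1})=N_1\subseteq N\cup\{a\}$, yielding $q|_N\trans{\scriptnew a}q_1|_{N_1}$.

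The only genuinely delicate point is the interplay between the two name-dropping clauses. One cannot in general reach the original $q_1$ from $q|_N$, since $\supp(q_1)$ may exceed $N$; the original transition must first be rerouted to the \emph{already restricted} state $q_1|_{N_1}$ via the incoming-transition guarantee, and only afterwards may names be dropped at the source via the outgoing clause. Verifying that the single choice $N_1=\supp(q_1)\cap(N\cup\{a\})$ simultaneously satisfies $\FN(w')\subseteq N_1\subseteq\supp(q_1)$ and validates both name-dropping side conditions is the crux of the argument; the remainder is routine induction.
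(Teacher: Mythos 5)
Your proof is correct and follows essentially the same route as the paper's: induction on the length of $w$, rerouting the first transition to the restricted successor via the incoming-transition clause and then applying the outgoing clause at the source. The only (immaterial) difference is your choice of restriction set $N_1=\supp(q_1)\cap(N\cup\{a\})$ where the paper takes the minimal choice $\FN(w')$; both satisfy the required side conditions.
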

\begin{proof}
  Induction on the length of $w\in\litlang(q)$ with
  $\FN(w)\subseteq N$, with the base case immediate from the finality
  condition in Definition~\ref{def:name-drop}. So let $w=\alpha v$
  with $\alpha\in\bar\Names$, accepted via a path
  $q\trans{\alpha}q'\trans{v}p$, and let $q'|_{N_v}$ restrict $q'$ to
  $N_v:=\FN(v)$.  By the induction hypothesis, $v\in\litlang(q'|_{N_v})$.  Moreover,
  $q\trans{\alpha}q'|_{N_v}$. We are done once we show that
  $q|_N\trans{\alpha}q'|_{N_v}$. If $\alpha$ is free, then we have to
  show $N_v\cup\{\alpha\}\subseteq N$, and if $\alpha=\newletter a$
  is bound, we have to show $N_v\subseteq N\cup\{a\}$. In both cases,
  the requisite inclusion is immediate from
  $\FN(\alpha v)\subseteq N$.
  \qed
\end{proof}

\begin{proof}[Lemma~\ref{lem:name-drop-alpha}]
  We use induction on the word length. It suffices to show that
  $\litlang(A)$ is closed under single $\alpha$-conversion steps. So
  let $v\newletter{a}w\in\bar\Names^*\in\litlang(A)$, via a path
  $s=q_0\trans{v} q_1\trans{\scriptnew{a}} q_2\trans{w}q_3$ (with
  $q_3$ final), let $b\neq a$ with $b\notin\FN(w)$, and let $w'$ be
  obtained from $w$ by replacing free occurrences of $a$ with $b$. We
  have to show that $v\newletter{b}w'\in\litlang(A)$; it suffices to
  show that $\newletter{b}w'\in\litlang(q_1)$. Put
  $N=\supp(q_2)-\{b\}$, and let $q_2|_N$ restrict $q_2$ to $N$. By
  Lemma~\ref{lem:lang-restrict}, $w\in\litlang(q_2|_N)$, and hence
  $[w]_\alpha\in\barlang(q_2|_N)$. By Lemma~\ref{lem:lang-equivar}, it
  follows that $[w']_\alpha\in\barlang((ab)(q_2|_N))$, so by the
  induction hypothesis, $w'\in\litlang((ab)(q_2|_N))$. We clearly have
  $q_1\trans{\scriptnew a}q_2|_N$, and by $\alpha$-invariance,
  $q_1\trans{\scriptnew b}(ab)(q_2|_N)$ because $b\fresh(q_2|_N)$.
Thus, $\newletter b w'\in\litlang(q_1)$ as required. \qed
\end{proof}

\myparagraph{Proof of Lemma~\ref{lem:wlog-name-drop}}

\begin{defn}
  Given the transition data of an RNNA $A$ (not necessarily assuming
  any finiteness and invariance conditions) and a state $q$ in $A$, we
  denote by $\freesuc(q)$ the set
  \begin{equation*}
    \freesuc(q)=\{(a,q')\mid q\trans{a}q'\}
  \end{equation*}
  of free transitions of $q$, and by $\boundsuc(q)$ the set
  \begin{equation*}
    \boundsuc(q)=\{\eq{a}{q'}\mid q\trans{\scriptnew a}q'\}
  \end{equation*}
  of bound transitions of $q$ modulo
  $\alpha$-equivalence. 
\end{defn}

Note that under $\alpha$-invariance of transitions we have $\eq a q'
\in \boundsuc(q)$ if and only if $q \trans{\scriptnew a} q'$. 

Before we proceed to the proof of the lemma we note the following general fact about nominal sets: 
for the value of $\pi\cdot x$, it matters only what $\pi$
does on the atoms in $\supp(x)$:
\begin{lem}\label{onlySuppMatters}
    For $x\in (X,\cdot)$ and any $\pi,\sigma \in G$ with $\pi(v) =
    \sigma(v)$ for all $v\in \supp(x)$, we have $\pi\cdot x = \sigma\cdot x$.
\end{lem}
\begin{proof}
  Under the given assumptions, $\pi^{-1}\sigma\in\Fix(\supp(x))\subseteq\fix(x)$.\qed
\end{proof}

\begin{proof}[Lemma~\ref{lem:wlog-name-drop}]
  Let $A$ be an RNNA with set $Q$ of states. 

  (1)~We construct an
  equivalent name-dropping RNNA $A'$ as follows. As states, we take
  pairs
  \begin{equation*}
    q|_N := \Fix(N) q 
  \end{equation*}
  where $q\in Q$, $N\subseteq\supp(q)$, and $\Fix(N) q$ denotes the
  orbit of $q$ under $\Fix(N)$.  We define an action of $G$ on states
  by $\pi\cdot (q|_N)=(\pi q)|_{\pi N}$.  To see well-definedness, let
  $\pi'\in \Fix(N)$ (i.e.\ $(\pi' q)|_N=q|_N$); we have to show
  $(\pi\pi' q)|_{\pi N}=(\pi q)|_{\pi N}$.  Since
  $(\pi\pi'\pi^{-1})\pi q=\pi\pi' q$, this follows from
  $\pi\pi'\pi^{-1}\in \Fix(\pi N)$. The map $(q,N)\mapsto q|_N$ is
  equivariant, which proves the bound on the number of orbits
  in~$A'$. A state $q|_N$ is final if $q$ is final in $A$; this clearly yields an equivariant subset of states of $A'$. 
  The initial state of $A'$ is $s|_{\supp(s)}$
  where $s$ is the initial state of $A$. We have
  \begin{equation}\label{eq:name-drop-supp}
    \supp(q|_N)= N;
  \end{equation}
  in particular, the states of $A'$ form a nominal set. To see
  `$\subseteq$' in~\eqref{eq:name-drop-supp}, it suffices to show that
  $N$ supports $q|_N$. So let $\pi\in\Fix(N)$. Then
  $\pi\cdot(q|_N)= (\pi q)|_{\pi N}=q|_N$, as required. For
  `$\supseteq$', let $a\in N$; we have to show that $N-\{a\}$ does not
  support $q|_N$. Assume the contrary. Pick $b\fresh q$. Then
  $(ab)\in\Fix(N-\{a\})$, so
  $(ab)\cdot(q|_N)=\Fix((ab)\cdot N)(ab)\cdot q =\Fix(N) q= q|_N$.  In
  particular, $q\in\Fix (ab)\cdot N)(ab)\cdot q$, i.e.~there is
  $\rho\in\Fix((ab)\cdot N)$ such that $\rho(ab)\cdot q=q$. By
  equivariance of $\supp$, it follows that
  $\rho(ab)\cdot\supp(q)=\supp(q)$. Now $b\in(ab)\cdot N$, so
  $\rho(b)=b$. Since $a\in\supp(q)$, it follows that
  $b\in\rho(ab)\cdot\supp(q)$; but $b\notin\supp(q)$, contradiction.

  As transitions of $A'$, we take
  \begin{itemize}
  \item $q|_N\trans{a}q'|_{N'}$ whenever
    $q\trans{a}q'$, $N'\subseteq N$, and $a\in N$, and
  \item $q|_N\trans{\scriptnew a} q'|_{N'}$ whenever
    $q\trans{\scriptnew b}q''$,
    $N''\subseteq\supp(q'')\cap(N\cup\{b\})$, and
    $\eq{a}{ (q'|_{N'})}=\eq{b}{(q''|_{N''})}$.
  \end{itemize}
  (We do not require the converse implications. E.g.\
  $ q|_N\trans{a} q'|_{N'}$ need not imply that $q\trans{a}q'$, only
  that $\pi q\trans{a}q'$ for some $\pi\in \Fix(N)$; see
  also~\eqref{eq:free-suc-char} below.) Transitions are clearly
  equivariant. Moreover, bound transitions are, by construction,
  $\alpha$-invariant.
  \begin{fact}\label{fact:name-drop-bound}
    By construction, every bound transition in $A'$ is
    $\alpha$-equivalent\footnote{Recall that a transition $q \trans{\scriptnew a} q'$ is $\alpha$-equivalent to a transition $r \trans{\scriptnew b} r'$ if $q = r$ and $\langle a \rangle q' = \langle b \rangle r'$.} to one of the form
    $q|_N\trans{\scriptnew a} q'|_{N'}$ where
    $q\trans{\scriptnew a}q'$ and
    $N'\subseteq\supp(q')\cap(N\cup\{a\})$.
  \end{fact}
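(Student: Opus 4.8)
The plan is to read the statement directly off the definition of the bound transitions of $A'$, exploiting the fact that this definition already closes the transition relation under $\alpha$-equivalence. Concretely, every bound transition of $A'$ exists precisely because it is $\alpha$-equivalent to a \emph{canonical} one that comes straight from a bound transition of $A$, and that canonical transition is exactly of the shape demanded in the statement. So the work is just to make this observation explicit.

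First I would fix an arbitrary bound transition $q|_N \trans{\scriptnew a} q'|_{N'}$ of $A'$. By the defining clause for bound transitions of $A'$, its mere existence means that there are a bound transition $q \trans{\scriptnew b} q''$ in $A$ and a set $N'' \subseteq \supp(q'') \cap (N \cup \{b\})$ such that $\eq{a}{(q'|_{N'})} = \eq{b}{(q''|_{N''})}$. I would pull exactly this data out of the definition.

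Next I would exhibit the intended canonical representative, namely $q|_N \trans{\scriptnew b} q''|_{N''}$, and check that it is itself a bound transition of $A'$: instantiate the defining clause with the same underlying transition $q \trans{\scriptnew b} q''$ and the same $N''$, and observe that the required identity of abstractions becomes the trivial $\eq{b}{(q''|_{N''})} = \eq{b}{(q''|_{N''})}$. By construction this transition has the required form, since its underlying $A$-transition is $q \trans{\scriptnew b} q''$ and its target restriction satisfies $N'' \subseteq \supp(q'') \cap (N \cup \{b\})$.

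Finally I would verify that the original transition and this canonical one are $\alpha$-equivalent in the sense of the footnote: they share the source $q|_N$, and their target abstractions agree by the identity $\eq{a}{(q'|_{N'})} = \eq{b}{(q''|_{N''})}$ extracted in the first step. I do not expect any genuine obstacle here; the only point that needs care is to recognise that the defining condition for the bound transitions of $A'$ is \emph{literally} the $\alpha$-equivalence-of-transitions condition, so the canonical representative is handed to us for free by the construction rather than requiring a separate $\alpha$-renaming argument.
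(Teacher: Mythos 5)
Your proposal is correct and is essentially the argument the paper leaves implicit behind the words ``by construction'': the defining clause for bound transitions of $A'$ already packages each such transition together with an underlying $A$-transition $q\trans{\scriptnew b}q''$ and a set $N''\subseteq\supp(q'')\cap(N\cup\{b\})$ via the abstraction identity $\eq{a}{(q'|_{N'})}=\eq{b}{(q''|_{N''})}$, and the canonical representative $q|_N\trans{\scriptnew b}q''|_{N''}$ is itself a transition of $A'$ (trivial instance of the same clause) that is $\alpha$-equivalent to the given one. No gap.
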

  (2)~To see ufs branching, let $q|_N$ be a state in $A'$. For free
  transitions, we have to show that the set
  \begin{equation*}
    \freesuc(q|_N)
    =\{(a,q'|_{N'})\mid N'\subseteq N, a\in N, \pi q\trans{a} q'
    \text{ for some $\pi\in\Fix(N)$}\}
  \end{equation*}
  of free successors of $q|_N$ is ufs. But for
  $\pi\in\Fix(N)$, $N'\subseteq N$, and $a\in N$, we have
  $\pi q\trans{a} q'$ iff $q\trans{a}\pi^{-1}q'$, and then moreover
  $\pi^{-1}\in\Fix(N')$ so $\Fix(N')\pi^{-1}q'=\Fix(N')q'$, i.e.~$q'|_{N'} = (\pi^{-1}q')|_{N'}$. We thus
  have
  \begin{align}
    \freesuc(q|_N) & = \{(a, (\pi^{-^1}q')|_{N'} \mid N' \subseteq N, a \in N, q \trans{a} \pi^{-1} q'\} \nonumber \\
    & = \{(a,q'|_{N'})\mid N'\subseteq N, a\in N, q\trans{a} q'\}, \label{eq:free-suc-char}
  \end{align}
  which is ufs. 

  We proceed similarly for the bound transitions: We need to show that
  the set $\boundsuc(q|_N)$ of bound successors of $q|_N$ is ufs. By
  Fact~\ref{fact:name-drop-bound}, a bound transition
  $q|_N\trans{\scriptnew a}q'|_{N'}$ arises from $\pi\in\Fix(N)$ and
  $N'\subseteq \supp(q')\cap(N\cup\{a\})$ such that
  $\pi q\trans{\scriptnew a} q'$. Then
  $q\trans{\scriptnew{(\pi^{-1}a)}}\pi^{-1}q'$. Moreover, we claim
  that
  \begin{equation}\label{eq:name-drop-bound-suc}
    \eq{a}{(q'|_{N'})}=\eq{\pi^{-1}a}{(\pi^{-1}(q'|_{N'}))}.
  \end{equation}
  To see \eqref{eq:name-drop-bound-suc}, we distinguish two cases: If
  $\pi^{-1}(a)=a$ then the two sides are equal because $\pi^{-1}$
  fixes the support of $q'|_{N'}$. If $\pi^{-1}(a)\neq a$ then
  $\pi^{-1}a\notin N$ because $\pi^{-1}$ fixes $N$, so
  $\pi^{-1}a\notin N\cup\{a\}$ and therefore
  $\pi^{-1}a\notin N' = \supp(q'|_{N'})$. This means that we can
  $\alpha$-equivalently rename $a$ into $\pi^{-1}a$ in
  $(\newletter{a},q'|_{N'})$; since $\pi^{-1}$ fixes $N$, the result
  of this renaming equals $(\newletter\pi^{-1}a,\pi^{-1}q'|_{N'})$. Since
  $\pi^{-1}(q'|_{N'})=(\pi^{-1}q)|_{\pi^{-1}N'}$ and
  $\pi^{-1}N'\subseteq\supp(\pi^{-1}q')\cap(N\cup\{\pi^{-1}(a)\})$
  (recall $\pi\in\Fix(N)$),~\eqref{eq:name-drop-bound-suc} proves
  \begin{align}\label{eq:bound-suc-char}
     \boundsuc(q|_N)& = \{\eq{\pi^{-1}a}{((\pi^{-1}q')|_{\pi^{-1}N'})}\mid \nonumber\\
    & \hspace{4em}\pi\in\Fix(N), \pi^{-1}N'\subseteq\supp(\pi^{-1}q')\cap(N\cup\{\pi^{-1}a\}),\\
    & \hspace{4em}q\trans{\scriptnew \pi^{-1}a} \pi^{-1}q'\} \nonumber\\
    &=\{\eq{a}{(q'|_{N'})}\mid N'\subseteq\supp(q')\cap(N\cup\{a\}), 
    q\trans{\scriptnew a} q'\}.
  \end{align}
  By~\eqref{eq:bound-suc-char}, $\boundsuc(q|_N)$ is ufs; indeed, we have $\supp(\eq{a}{(q'_{N'})}) = N' - \{a\}$ so the support of every element of $\boundsuc(q|_N)$ is a subset of $N$. (Note
  that~\eqref{eq:bound-suc-char} is not the same as
  Fact~\ref{fact:name-drop-bound}, as in~\eqref{eq:bound-suc-char} we
  use a fixed representative $q$ of $\Fix(N)q$.)
 
  (3)~We show next that $A'$ is name-dropping. So let $q|_N$ be a
  state in $A'$, and let $N'\subseteq\supp(q|_N) = N$. We show that
  $q|_{N'}$ restricts $q|_N$ to~$N'$. We first establish that
  $q|_{N'}$ has at least the same incoming transitions as $q|_N$. For
  the free transitions, let $\pi\in\Fix(N)$, $q'\trans{a}\pi q$ and
  $a\in N''\supseteq N$, so that $q'|_{N''}\trans{a}(\pi
  q)|_N=q|_N$.
  Then also $\pi\in\Fix(N')$ and $N'\subseteq N''$, so
  $q'|_{N''}\trans{a}\pi q|_{N'}=q|_{N'}$ as required. For the bound
  transitions, let $\pi\in\Fix(N)$, let
  $\eq{a}{(q|_N)}=\eq{b}{((ab)\cdot(q|_N))}$, let
  $q'\trans{\scriptnew b}\pi(ab)\cdot q$ where $\pi\in\Fix((ab)N)$,
  and let $(ab)\cdot N\subseteq N''\cup\{b\}$, so that
  $q'|_{N''}\trans{\scriptnew a}q|_N$. We have to show that
  $q'|_{N''}\trans{\scriptnew a}q|_{N'}$. From
  $q'\trans{\scriptnew b}\pi(ab)\cdot q$ we have
  $q|_{N''}\trans{\scriptnew
    b}((\pi(ab)q)|_{(ab)N'})=(((ab)q)|_{(ab)N'})$,
  because $(ab)N'\subseteq(ab)N\subseteq N''\cup\{b\}$ and
  $\pi\in\Fix((ab)N)\subseteq\Fix((ab)N')$. If $b=a$, we are done. So
  assume $b\neq a$. Since $A'$ is $\alpha$-invariant, it remains only
  to show that
  \begin{equation*}
    \eq{b}{((ab)q|_{(ab)N'})}=\eq{a}{(q|_{N'})},
  \end{equation*}
  i.e.\ that $b\notin\supp(q|_{N'})$; but since $b\neq a$ and
  $\eq{a}{(q|_N)}=\eq{b}{((ab)\cdot (q|_N))}$, we even have
  $b\notin\supp(q|_N)\supseteq N'\supseteq\supp(q|_{N'})$.

  Next, we show that $q|_{N'}$ has the requisite outgoing
  transitions. For the free transitions, let $q|_N\trans{a}q'|_M$
  where $\supp(q'|_M)\cup\{a\}=M\cup\{a\}\subseteq N'$. We have to
  show $q|_{N'}\trans{a}q'|_M$. By~\eqref{eq:free-suc-char}, we have
  $q\trans{a}\pi q'$ for some $\pi\in\Fix M$. By construction of $A'$,
  $q|_{N'}\trans{a}(\pi q')|_M=q'|_M$, as required. For the bound
  transitions, we proceed as follows. By~\eqref{eq:bound-suc-char}, a
  given outgoing bound transition of $q|_N \trans{\scriptnew a} q'|_M$
  yields a state $r|_S$ of $A'$ and $b\in \Names$ such that
  $\eq a {q'|_M} = \eq b {r|_S}$,
  $S \subseteq \supp(r) \cap (N \cup \{a\})$ and
  $q \trans{\scriptnew b} r$.

  Now if $M \subseteq N' \cup\{a\}$ this
  yields a transition $q|_{N'}\trans{\scriptnew a}q'|_M$ by
  construction of $A'$; indeed, we already have
  $q \trans{\scriptnew b} r$, $S \subseteq \supp(r)$ and
  $\eq a {q'|_M} = \eq b {r|_S}$, so it remains to show that
  $S \subseteq N' \cup \{b\}$. By Lemma~\ref{lem:abstr},
  $\eq a {q'|_M} = \eq b {r|_S}$ iff either $a = b$ and $q'|_M = r|_S$
  (and the latter yields $M = S$, thus we are done), or $a \neq b$,
  $a \fresh r|_S$ and $(b a) (r|_S) = q'|_M$. It follows that
  $a \not\in S$ and $M = (b a)S$. Since
  $(b a) S \subseteq N' \cup \{a\}$ we have equivalently
  $S \subseteq (ba) N' \cup \{b\}$. This implies
  $S \subseteq N' \cup \{b\}$ using that $a \not\in S$.


  (4)~It remains to show that $\barlang(A')=\barlang(A)$. To show
  `$\subseteq$', we show that $[w]_\alpha\in\barlang(q)$ for every
  state $q|_N$ in $A'$ and every $w\in\litlang(q|_N)$, by induction on
  $w$: for the empty word, the claim follows from the definition of
  final states in $A'$. For $w=\alpha v$, let
  $q|_N\trans{\alpha}q'|_{N'}\trans{v}t$ be an accepting path
  in~$A'$. Then we have $[v]_\alpha \in\barlang(q')$ by induction hypothesis and 
  $\FN(v)\subseteq N'$ by Corollary~\ref{cor:lang-ufs}. By~\eqref{eq:free-suc-char}
  and~\eqref{eq:bound-suc-char}, we have $q\trans{\alpha}\pi q'$ for
  some $\pi\in\Fix N'$. It follows that $\pi \cdot v = v$ and therefore 
  $[v]_\alpha = \pi[v]_\alpha\in\barlang(\pi q')$ by the equivariance of $L_\alpha$ (see Lemma~\ref{lem:lang-equivar}). Hence
  $[\alpha v]_\alpha\in\barlang(q)$.

  To see $\barlang(A')\supseteq\barlang(A)$, it suffices to note that
  $A$ is included as a subautomaton in $A'$ via the map that takes
  $q$ to $q|_{\supp(q)}$, i.e.\ $q\trans{\alpha}q'$ in $A$ implies
  $q|_{\supp(q)}\trans{\alpha}q'|_{\supp(q')}$ in $A'$.
\end{proof}

\begin{lem}\label{lem:nfba-ufs}
  Let $q$ be a state in a bar NFA; then $\barlang(q)$ is ufs.
\end{lem}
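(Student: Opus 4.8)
The plan is to bound the free names that can occur in any bar string accepted from $q$ by the finitely many names that label the (finitely many) transitions of the bar NFA, and then invoke the identification of the support of an $\alpha$-class with its free names.

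First I would recall that for a bar string $w$ the support of its $\alpha$-equivalence class is exactly its set of free names, $\supp([w]_\alpha)=\FN(w)$; this is the general fact about $\bar M$ already used in the paragraph following Lemma~\ref{lem:trans}, and it holds for arbitrary bar strings, not just those arising in RNNAs. Consequently
\[
\bigcup_{[w]_\alpha\in\barlang(q)}\supp([w]_\alpha)
= \bigcup_{w\in\litlang(q)}\FN(w),
\]
so by the definition of \emph{ufs} it suffices to prove that the union on the right is finite.

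The key step is the following observation. Let $P\subseteq\Names$ be the set of names $a$ that occur as the label of some free transition $p\trans{a}p'$ in $A$. Since $A$ is a \emph{finite} automaton over $\bar\Names$, the set $P$ is finite (indeed $\size{P}\le\degree(A)$). I claim $\FN(w)\subseteq P$ for every $w\in\litlang(q)$: if $a\in\FN(w)$, then $a$ has a free, i.e.\ unbarred, occurrence in $w$, so reading off an accepting run $q=q_0\trans{\alpha_1}q_1\cdots\trans{\alpha_n}q_n$ with $q_n$ final and $\alpha_1\cdots\alpha_n=w$, that occurrence is produced by a \emph{free} transition $q_{i-1}\trans{a}q_i$ of $A$, whence $a\in P$.

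Combining the two displays, $\bigcup_{w\in\litlang(q)}\FN(w)\subseteq P$ is finite, so $\barlang(q)$ is ufs. I do not expect a genuine obstacle here; the argument is essentially the remark that a finite automaton has only finitely many transition labels. The only point that warrants care is the reading of the definition of free names---that every member of $\FN(w)$ necessarily has an unbarred occurrence, which is precisely what lets us locate it on a free transition---together with the standard identification $\supp([w]_\alpha)=\FN(w)$ in $\bar M$.
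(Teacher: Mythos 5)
Your proof is correct and takes essentially the same route as the paper's (one-line) proof: the finitely many transitions of the bar NFA mention only finitely many letters, and the support of each $\alpha$-class in $\barlang(q)$, i.e.\ its set of free names, lies inside that finite set. Your version is merely more explicit (and marginally sharper, restricting to free-transition labels), but there is no difference in substance.
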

\begin{proof}
  The finitely many transitions of $A$ only mention letters from a
  finite subset of $\bar\Names$, and
  $\bigcup_{w \in \barlang(q)} \supp(w)$ is contained in that finite
  subset.
\end{proof}

\myparagraph{Details for Theorem~\ref{thm:freernna}}

As indicated in the text, we split the construction into two parts,
and first construct a plain RNNA~$\tilde A$.  The states of $\tilde A$
are pairs
\begin{equation*}
  (q,\pi H_q)\quad\text{where $H_q=\Fix(\supp(\barlang(q)))$}
\end{equation*}
consisting of a state $q$ in $A$ and a left coset $\pi H_q$, where the
action of $G$ is as on $\bar A$: 
\[
\pi_1 \cdot (q, \pi_2H_q) = (q, \pi_1\pi_2H_q).
\]
We continue to write
$N_q=\supp(\barlang(q))$ (note $H_q=F_{N_q}$ in the notation
used in the construction of $\bar A$). The initial state of $\tilde A$
is $(s,H_s)$ where $s$ is the initial state of $A$; a state
$(q,\pi H_q)$ is final in $\tilde A$ iff $q$ is final in $A$. Free
transitions in $\tilde A$ are of the form
\begin{equation*}
  (q,\pi H_q)\trans{\pi(a)}(q',\pi H_{{q'}})\quad\text{where } q\trans{a}q'\text{ and }a\in N_q,
\end{equation*}
(where the condition $a\in N_q$ is automatic unless
$\barlang(q')=\emptyset$) and bound transitions are of the form
\begin{equation*}
  (q,\pi H_q)\trans{\scriptnew{a}}(q',\pi'H_{{q'}})\;\;\text{where}
  \;\; q\trans{\scriptnew{b}}q'\;\;\text{and} \;\; \eq{a}{\pi' H_{{q'}}} = \eq{\pi(b)}{\pi H_{{q'}}}.
\end{equation*}
\begin{rem}\label{rem:coset}
  \begin{enumerate}
  \item Note that by Lemma~\ref{lem:supp},
    $N_q=\supp(\barlang(q)) = \bigcup_{w \in \barlang(q)} \supp(w)$, i.e.\
    $N_q$ is the set of names that appear free in some word
    $w \in \barlang(q)$.
  \item\label{item:name-inj} Observe that $\pi H_q = \pi' H_q$ iff
    $\pi'(v) = \pi(v)$ for all $ v \in N_q$: $\pi H_q = \pi' H_q$ iff
    $\pi^{-1}\pi'\in H_q$ iff $\pi^{-1}\pi'(v)=v$ for all $v\in N_q$ iff
    $\pi'(v) = \pi(v)$ for all $v \in N_q$. 
  \item For a coset $\pi H_q$, we have 
    \begin{equation*}
      \supp(q, \pi H_q) = \supp(\pi H_q)=\pi N_q
    \end{equation*}
    so the set $\tilde Q$ of states of $\tilde A$ is a nominal set. This
    is by Item~(\ref{item:name-inj}): for $\pi'\in G$, we have
    $\pi'\pi H_q=\pi H_q$ iff $\pi'\pi(a)=\pi(a)$ for all $a\in N_q$ iff
    $\pi'\in\Fix(\pi N_q)$.
  \item Note that $\eq a {(\pi H_q)} = \eq b {(\pi' H_q)}$ implies $\eq a {(q,\pi H_q)} = \eq b {(q,\pi' H_q)}$ since the action of $G$ on states of $\tilde A$ is trivial in the first component.
\end{enumerate}
\end{rem}

\begin{rem}\label{rem:ext}
  Left cosets for $H_q$ are in one-to-one correspondence with injections $N_q \to \Names$. Indeed, in the light of Remark~\ref{rem:coset}(\ref{item:name-inj}) it suffices to prove that every injection $i: N_q \to \Names$ can be extended to a finite permutation. Define $\pi$ by
    \[
      \pi(a) = 
      \begin{cases}
         i(a) & a\in N_q \\
           i^{-n}(a) & \text{else, for $n\geq 0$ minimal s.t.~$i^{-n}(a) \not\in i[N_q]$}  
      \end{cases}
    \]
    For the proof that $\pi$ is a indeed a finite permutation see~\cite[Corollary 2.4]{msw16}.
\end{rem}

Transitions from a given state $(q,\pi H_q)$ can be
characterized as follows.
\begin{lem}\label{lem:freernna-trans}
  Let $(q,\pi H_q)$ be a state in $\tilde A$. Then
  \begin{equation}\label{eq:freernna-free-char}
    \freesuc(q,\pi H_q)  = \{(\pi(a),(q',\pi H_{q'}))\mid
    q\trans{a}q', a\in N_q\}
  \end{equation}
  and
  \begin{equation}\label{eq:freernna-bound-char}
    \boundsuc(q,\pi H_q) = \{\eq{\pi(a)}{(q',\pi H_{q'})}\mid
    q\trans{\scriptnew a}q'\}.
  \end{equation}
\end{lem}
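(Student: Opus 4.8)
The plan is to read both equalities directly off the definition of the transition relation of $\tilde A$, handling the free and bound cases separately. The free-successor formula~\eqref{eq:freernna-free-char} is essentially a verbatim unfolding of the definition, whereas the bound-successor formula~\eqref{eq:freernna-bound-char} requires one genuine observation: the $\alpha$-closure built into the definition of bound transitions collapses a whole family of syntactically distinct target states to a single element of $\boundsuc$. I therefore expect the bound case to be the only place that needs an argument beyond bookkeeping.

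For~\eqref{eq:freernna-free-char} I would prove both inclusions at once. By construction the free transitions out of $(q,\pi H_q)$ are exactly those of the form $(q,\pi H_q)\trans{\pi(a)}(q',\pi H_{q'})$ with $q\trans{a}q'$ in $A$ and $a\in N_q$; hence every free successor is a pair $(\pi(a),(q',\pi H_{q'}))$ of the stated shape, and conversely each such $q\trans{a}q'$ yields precisely this transition. To confirm that the right-hand side does not depend on the chosen representative $\pi$ of the coset (so that it genuinely describes the intrinsic set $\freesuc(q,\pi H_q)$), I would note that $q\trans{a}q'$ with $a\in N_q$ forces $\{a\}\cup N_{q'}\subseteq N_q$: for $w\in\barlang(q')$ we have $aw\in\barlang(q)$, so $\{a\}\cup\FN(w)\subseteq N_q$, and taking the union over all such $w$ (using that $\barlang(q')$ is ufs by Lemma~\ref{lem:nfba-ufs}, together with Lemma~\ref{lem:supp}) gives $N_{q'}\subseteq N_q$. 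Consequently, replacing $\pi$ by $\pi\rho$ with $\rho\in H_q=\Fix(N_q)\subseteq\Fix(N_{q'})=H_{q'}$ leaves both the label $\pi(a)$ and the coset $\pi H_{q'}$ unchanged.

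For~\eqref{eq:freernna-bound-char} I would again show two inclusions. By definition a bound transition out of $(q,\pi H_q)$ has the form $(q,\pi H_q)\trans{\scriptnew c}(q',\sigma H_{q'})$ where $q\trans{\scriptnew b}q'$ in $A$ and $\eq{c}{\sigma H_{q'}}=\eq{\pi(b)}{\pi H_{q'}}$, and it contributes $\eq{c}{(q',\sigma H_{q'})}$ to $\boundsuc(q,\pi H_q)$. The key step is the last item of Remark~\ref{rem:coset}: because $G$ acts trivially on the first component of states of $\tilde A$, the equation $\eq{c}{\sigma H_{q'}}=\eq{\pi(b)}{\pi H_{q'}}$ lifts to $\eq{c}{(q',\sigma H_{q'})}=\eq{\pi(b)}{(q',\pi H_{q'})}$. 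Hence every such contribution in fact equals $\eq{\pi(b)}{(q',\pi H_{q'})}$, which is of the stated form; this gives ``$\subseteq$''. For ``$\supseteq$'', given $q\trans{\scriptnew a}q'$ I would take $c:=\pi(a)$, $\sigma:=\pi$, and $b:=a$, so that the side condition $\eq{c}{\sigma H_{q'}}=\eq{\pi(b)}{\pi H_{q'}}$ holds trivially, producing the transition $(q,\pi H_q)\trans{\scriptnew{\pi(a)}}(q',\pi H_{q'})$ whose contribution to $\boundsuc$ is exactly $\eq{\pi(a)}{(q',\pi H_{q'})}$.

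The only real obstacle is this collapsing phenomenon in the bound case: one must be confident that the $\alpha$-equivalence side condition in the definition of bound transitions, transported along Remark~\ref{rem:coset}, identifies all admissible targets $\eq{c}{(q',\sigma H_{q'})}$ with the single canonical representative $\eq{\pi(a)}{(q',\pi H_{q'})}$ indexed by the underlying bound transition $q\trans{\scriptnew a}q'$ of $A$. Once that identification is in place, both inclusions are immediate and the remaining steps are purely routine.
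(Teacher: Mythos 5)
Your treatment of the free case~\eqref{eq:freernna-free-char} is correct and matches the paper's argument: you check that the description is independent of the chosen coset representative by observing $N_{q'}\cup\{a\}\subseteq N_q$, hence $H_q\subseteq H_{q'}$, which is exactly the paper's remark that $\pi$ and $\pi'$ agree on $N_{q'}\cup\{a\}$ whenever they agree on $N_q$ (you even supply the proof of the inclusion, which the paper only asserts). The ``$\supseteq$'' direction of the bound case is also fine.

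There is, however, a genuine gap in the ``$\subseteq$'' direction of~\eqref{eq:freernna-bound-char}. You describe a bound transition out of $(q,\pi H_q)$ as one satisfying the side condition $\eq{c}{\sigma H_{q'}}=\eq{\pi(b)}{\pi H_{q'}}$ \emph{for the fixed representative $\pi$}. But the transition relation of $\tilde A$ is a relation on states, and the state $(q,\pi H_q)$ is equal to $(q,\pi' H_q)$ for every $\pi'$ with $\pi' H_q=\pi H_q$; consequently $\boundsuc(q,\pi H_q)$ also contains the elements $\eq{\pi'(a)}{(q',\pi' H_{q'})}$ generated from the defining clause instantiated at any other representative $\pi'$ of the coset. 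To land these in the right-hand side of~\eqref{eq:freernna-bound-char} you must prove $\eq{\pi'(a)}{(q',\pi' H_{q'})}=\eq{\pi(a)}{(q',\pi H_{q'})}$ whenever $\pi' H_q=\pi H_q$ and $q\trans{\scriptnew a}q'$ --- and this is precisely the part that does not follow from bookkeeping. It is immediate only when $\pi(a)=\pi'(a)$ (then $\pi,\pi'$ agree on $N_{q'}\subseteq N_q\cup\{a\}$). When $\pi(a)\neq\pi'(a)$, which can happen since $a$ need not lie in $N_q$, the two abstractions have different bound names and different underlying cosets, and one has to argue via Lemma~\ref{lem:abstr}: show that $(\pi(a)\,\pi'(a))\pi$ and $\pi'$ agree on $N_{q'}$, so $(\pi(a)\,\pi'(a))\pi H_{q'}=\pi' H_{q'}$, and that $\pi'(a)\notin\supp(q',\pi H_{q'})=\pi N_{q'}\subseteq \pi'N_q\cup\{\pi(a)\}$. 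You carried out the analogous representative-independence check in the free case but omitted it in the bound case, where it is the only nontrivial step; Remark~\ref{rem:coset}(4) alone does not deliver it.
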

\begin{proof}
  For the free transitions, we have by definition
  \begin{equation*}
    \freesuc(q,\pi H_q) = \{(\pi'(a),(q',\pi' H_{q'}))\mid
    \pi' H_q=\pi H_q, q\trans{a}q',a\in N_q\}.
  \end{equation*}
  Now if $\pi' H_q=\pi H_q$ and $q\trans{a}q'$, then $\pi$ and $\pi'$
  agree on $N_q$ and hence on $N_{q'}\cup\{a\}$ (as
  $N_{q'}\cup\{a\}\subseteq N_q$), so
  $(\pi'(a),(q',\pi' H_{q'}))=(\pi(a),(q',\pi H_{q'}))$. This
  shows~\eqref{eq:freernna-free-char}.

  For the bound transitions, we have by definition and using Remark~\ref{rem:coset}(4)
  \begin{equation*}
    \boundsuc(q,\pi H_q) = \{\eq{\pi'(a)}{(q',\pi' H_{q'})}\mid
    \pi' H_q=\pi H_q, q\trans{\scriptnew a}q'\}.
  \end{equation*}
  So let $\pi' H_q=\pi H_q$ and $q\trans{\scriptnew a}q'$. The
  claim~\eqref{eq:freernna-bound-char} follows from
  \begin{equation}\label{eq:freernna-bound-trans}
    \eq{\pi'(a)}{(q',\pi' H_{q'})}=\eq{\pi(a)}{(q',\pi H_{q'})},
  \end{equation}
  which we now prove. By Remark~\ref{rem:coset}(\ref{item:name-inj}) we know that $\pi$ and $\pi'$ agree on $N_q$.  
  In order to prove~\eqref{eq:freernna-bound-trans}, we distinguish two cases:
  if $\pi(a)=\pi'(a)$ then $\pi$ and $\pi'$ agree on $N_{q'} \subseteq N_q \cup\{a\}$, i.e.\
  $\pi'H_{q'}=\pi H_{q'}$, so the two sides
  of~\eqref{eq:freernna-bound-trans} are literally equal. Otherwise,
  $a\notin N_q$, and $\pi'$, $\pi$ differ on $N_{q'}$ only w.r.t.\ their
  value on $a$. It follows that $(\pi(a)\,\pi'(a))\pi$ and $\pi'$ agree on $N_{q'} = \supp(H_{q'})$. Therefore $(\pi(a)\,\pi'(a))\pi H_{q'} = \pi' H_{q'}$ by Lemma~\ref{onlySuppMatters}. So, by Lemma~\ref{lem:abstr}, to show~\eqref{eq:freernna-bound-trans} it suffices to show that $\pi'(a)\notin\supp(q',\pi H_{q'})$. But
  $\supp(q',\pi H_{q'})=\pi N_{q'}\subseteq \pi N_q\cup\pi(a)=\pi'
  N_q\cup\pi(a)$,
  and $\pi'(a)\notin\pi' N_q\cup\pi(a)$ because $a\notin N_q$ and
  $\pi'(a)\neq\pi(a)$.
\end{proof}
The key ingredient in the proof that $\tilde A$ accepts the same bar
language as $A$ will be a normalization result on paths that uses an
obvious notion of $\alpha$-equivalence on paths in an RNNA (see
Remark~\ref{rem:paths}); explicitly:
\begin{defn} $\alpha$-equivalence of paths in an RNNA is defined
  inductively by
  \[
    q_0\trans{a}q_1\trans{\alpha_2}q_2\trans{\alpha_3}\cdots \trans{\alpha_n} q_n
    \
    \text{is $\alpha$-equivalent to}
    \
    q_0\trans{a}q'_1\trans{\alpha'_2}q'_2\trans{\alpha_3'}\cdots\trans{\alpha_n'} q'_n
  \]
  if $q_1\trans{\alpha_2}q_2\trans{\alpha_3}\cdots \trans{\alpha_n}q_n$ is $\alpha$-equivalent to
  $q'_1\trans{\alpha'_2}q'_2\trans{\alpha_3}\cdots \trans{\alpha_n'} q'_n$, and
  \[
    q_0\trans{\scriptnew{a}}q_1\trans{\alpha_2}q_2\trans{\alpha_3}\cdots \trans{\alpha_n} q_n
    \ 
    \text{is $\alpha$-equivalent to}
    \ 
    q_0\trans{\scriptnew{b}}q'_1\trans{\alpha'_2}q'_2\trans{\alpha_3'}\cdots\trans{\alpha_n'} q'_n
  \] if $\eq a {[q_1\trans{\alpha_2}q_2\trans{\alpha_3} \cdots\trans{\alpha_n} q_n]_\alpha} = \eq b {[q'_1\trans{\alpha'_2}q'_2\trans{\alpha_3'} \cdots\trans{\alpha_4'} q'_n]_\alpha}$, 
  where we use $[-]_\alpha$ to denote $\alpha$-equivalence classes of paths. 
\end{defn}

\begin{lem}\label{lem:pathsalpha}
  The set of paths of an RNNA is closed under $\alpha$-equivalence.
\end{lem}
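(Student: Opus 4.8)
The plan is to prove the statement in the form: if $P$ is a genuine path of $A$ and $P\alphaeq P'$, then $P'$ is a path of $A$ (this is exactly closure of the set of paths under $\alphaeq$). I would proceed by induction on the length of $P$, following the two recursive clauses defining $\alpha$-equivalence on paths. Recall that $\alpha$-equivalent paths always start in the same state; in particular $P$ and $P'$ share their initial state $q_0$, and each clause relates the tails of the two paths, which again begin in a common state. The empty path is $\alpha$-equivalent only to itself, so the base case is immediate.

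For the inductive step I split on the type of the first transition of $P$. If it is free, say $q_0\trans{a}q_1$, then only the free clause can produce the relation $P\alphaeq P'$, so $P'$ reads the same letter $a$ first; since its tail $T'$ is $\alpha$-equivalent to the tail $T$ of $P$ (which starts in $q_1$), it also starts in $q_1$. Hence the first transition of $P'$ is literally the genuine transition $q_0\trans{a}q_1$, while $T'$ is a path by the induction hypothesis applied to $T\alphaeq T'$; therefore $P'$ is a path.

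The bound case is where the work lies. Suppose the first transition of $P$ is $q_0\trans{\scriptnew a}q_1$ with tail $T$, so that $P'=q_0\trans{\scriptnew b}q_1'$ followed by a tail $T'$ and, by the bound clause, $\langle a\rangle[T]_\alpha=\langle b\rangle[T']_\alpha$ in the abstraction set of $\alpha$-classes of paths. I would analyse this equation via Lemma~\ref{lem:abstr}. If $a=b$ and $[T]_\alpha=[T']_\alpha$, then $T\alphaeq T'$ forces $q_1'=q_1$, the first transition of $P'$ coincides with that of $P$, and $T'$ is a path by induction. Otherwise $b\neq a$, $b\fresh[T]_\alpha$, and $[T']_\alpha=(ab)\cdot[T]_\alpha=[(ab)\cdot T]_\alpha$, using equivariance of the quotient map on paths (Remark~\ref{rem:paths}). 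Equivariance of the transition relation makes $(ab)\cdot T$ a genuine path, and it is $\alpha$-equivalent to $T'$, so the induction hypothesis gives that $T'$ is a path; comparing initial states yields $q_1'=(ab)\cdot q_1$.

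It then remains to recover the first transition $q_0\trans{\scriptnew b}(ab)\cdot q_1$ of $P'$ as a transition of $A$, and this is the main obstacle. Here I would invoke $\alpha$-invariance of $\to$: since $q_0\trans{\scriptnew a}q_1$ is a transition, it suffices to verify $\langle a\rangle q_1=\langle b\rangle (ab)\cdot q_1$, which by Lemma~\ref{lem:abstr} amounts (for $a\neq b$) to $b\fresh q_1$. This last freshness follows from $b\fresh[T]_\alpha$ because taking the initial state of a path is an equivariant operation (well defined on $\alpha$-classes precisely because $\alpha$-equivalent paths share their first state), whence $\supp(q_1)\subseteq\supp([T]_\alpha)$ and so $b\notin\supp(q_1)$. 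With the first transition in hand, $P'$ is a path, completing the induction. Note that the two RNNA axioms, equivariance and $\alpha$-invariance of $\to$, are used precisely in this bound case.
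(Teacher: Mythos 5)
Your proof is correct, and it uses exactly the two ingredients the paper's proof relies on: equivariance of $\to$ (so that $(ab)\cdot T$ is again a path) and $\alpha$-invariance of $\to$ (to recover the renamed first bound transition), together with the freshness transfer $b\fresh[T]_\alpha\Rightarrow b\fresh q_1$ via the equivariant initial-state projection (cf.\ Remark~\ref{rem:eqpath}). The organization differs slightly: the paper first observes that it suffices to check closure under a \emph{single} $\alpha$-conversion step (since the prefix before the converted bound transition is untouched), and then disposes of such a step in one line, because a single conversion step is precisely ``apply the transposition $(a\,b)$ to the entire suffix'', so the suffix stays a path by pointwise equivariance and only the one bound transition needs $\alpha$-invariance. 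You instead induct on path length following the recursive clauses of the definition of $\alpha$-equivalence of paths, analysing the abstraction equation $\eq{a}{[T]_\alpha}=\eq{b}{[T']_\alpha}$ with Lemma~\ref{lem:abstr} at each bound step. Your version is somewhat longer but is arguably the more faithful argument relative to how $\alpha$-equivalence of paths is actually defined (inductively), since the paper's reduction tacitly identifies that inductive relation with the equivalence generated by single conversion steps. Both arguments are sound; no gap.
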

\begin{proof}
  Observe that by equivariance, $G$ acts pointwise on paths.  It
  suffices to show closure under single $\alpha$-conversion steps. So
  let
  $q_0\trans{\scriptnew a}q_1\trans{\alpha_2}\dots\trans{\alpha_n}q_n$
  be path in an RNNA $A$, denote the path from $q_1$ onwards by $P$,
  and let $\eq{a}{P}=\eq{b}{P'}$, so $P'=(ab)\cdot P$. Then by
  $\alpha$-invariance of $\to$, we have
  $q_0\trans{\scriptnew b}(ab)q_1$, and by equivariance, $(ab)\cdot P$
  is a path from $(ab)q_1$.
\end{proof}
\begin{lem}\label{lem:path-alpha}
  Let
  $P=q_0\trans{\scriptnew{a}}q_1\trans{\alpha_2}\dots\trans{\alpha_n}q_n$
  be a path in an RNNA $A$, and let
  $\langle a\rangle q_1=\langle{b}\rangle q_1'$. Then there exists a
  path in $A$ of the form
  $q_0\trans{\scriptnew{b}}q_1'\trans{\alpha_2'}\dots\trans{\alpha_n'}q_n'$
  that is $\alpha$-equivalent to $P$.
\end{lem}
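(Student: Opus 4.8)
The plan is to feed the hypothesis $\langle a\rangle q_1=\langle b\rangle q_1'$ into the characterisation of abstraction equality from Lemma~\ref{lem:abstr} and to treat the two resulting cases separately. If $(a,q_1)=(b,q_1')$, there is nothing to prove: the path $P$ itself already has the required form and is trivially $\alpha$-equivalent to itself. So the substantive case is $a\neq b$, $b\fresh q_1$, and $q_1'=(ab)\cdot q_1$, and I would handle it as follows.

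Write $Q$ for the tail $q_1\trans{\alpha_2}\cdots\trans{\alpha_n}q_n$ of $P$. First I would build the witness path explicitly by transporting $Q$ along the transposition $(ab)$: by equivariance of the transition relation, $(ab)\cdot Q$ is a path starting in $(ab)\cdot q_1=q_1'$, with states $q_i':=(ab)\cdot q_i$ and labels $\alpha_i':=(ab)\cdot\alpha_i$. For the leading transition, $\alpha$-invariance of $\to$ applied to $q_0\trans{\scriptnew{a}}q_1$ together with $\langle a\rangle q_1=\langle b\rangle q_1'$ gives $q_0\trans{\scriptnew{b}}q_1'$. Concatenating yields a path of exactly the required shape $q_0\trans{\scriptnew{b}}q_1'\trans{\alpha_2'}\cdots\trans{\alpha_n'}q_n'$.

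It then remains to check that this path is $\alpha$-equivalent to $P$. By the definition of $\alpha$-equivalence of paths this reduces to the single identity $\langle a\rangle[Q]_\alpha=\langle b\rangle[(ab)\cdot Q]_\alpha$ in the abstraction set over $\alpha$-classes of paths; since $[-]_\alpha$ is equivariant (Remark~\ref{rem:paths}) the right-hand side equals $\langle b\rangle\,(ab)\cdot[Q]_\alpha$, so by Lemma~\ref{lem:abstr} it suffices to verify $a\neq b$ (given) and $b\fresh[Q]_\alpha$.

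The one point requiring genuine work, and which I expect to be the main obstacle, is this freshness claim $b\fresh[Q]_\alpha$. I would obtain it from the path-level analogue of the support bound recorded after Lemma~\ref{lem:trans}, namely that every path $Q$ starting in a state $q$ satisfies $\supp([Q]_\alpha)\subseteq\supp(q)$. This is proved by induction on the length of $Q$ along the inductive description of $\alpha$-classes of paths via the functor $Q\times S_\alpha(-)$ (Remark~\ref{rem:paths}): the empty path has support $\supp(q)$; for a leading free transition $q\trans{c}q''$ with tail $Q''$ one gets $\supp([Q]_\alpha)\subseteq\supp(q)\cup\{c\}\cup\supp([Q'']_\alpha)\subseteq\supp(q)$ by the inductive hypothesis $\supp([Q'']_\alpha)\subseteq\supp(q'')$ and Lemma~\ref{lem:trans}(\ref{item:free-trans}); and for a leading bound transition $q\trans{\scriptnew{c}}q''$ one gets $\supp([Q]_\alpha)\subseteq\supp(q)\cup(\supp([Q'']_\alpha)\setminus\{c\})\subseteq\supp(q)$, using $\supp(\langle c\rangle x)=\supp(x)\setminus\{c\}$ and Lemma~\ref{lem:trans}(\ref{item:bound-trans}). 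Applying this with $q=q_1$ and using $b\fresh q_1$ yields $b\notin\supp([Q]_\alpha)$, which closes the argument.
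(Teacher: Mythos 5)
Your proof is correct and takes essentially the same route as the paper's: the paper also obtains the witness path by renaming $a$ into $b$ in the tail and justifies the resulting $\alpha$-equivalence via the fact that the support of the $\alpha$-class of the tail is contained in $\supp(q_1)$ (Remark~\ref{rem:eqpath}), so that $b\fresh q_1$ gives the required freshness. Your induction establishing $\supp([Q]_\alpha)\subseteq\supp(q_1)$ is precisely the argument that the paper delegates to that remark, so you have merely spelled out the details it leaves implicit.
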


\begin{proof}
  Since $A$ is an RNNA, the support of the $\alpha$-equivalence class
  of $q_1\trans{\alpha_2}\dots\trans{\alpha_n}q_n$ is $\supp(q_1)$
  (Remark~\ref{rem:eqpath}), so we obtain an $\alpha$-equivalent path
  $q_0\trans{\scriptnew{b}}q_1'\trans{\alpha_2'}\dots\trans{\alpha_n'}q_n'$
  by renaming $a$ into $b$ in
  $q_1\trans{\alpha_2}\dots\trans{\alpha_n}q_n$.
\end{proof}
\begin{rem}\label{rem:eqpath}
  Note that the support of the $\alpha$-equivalence class of a path in
  an RNNA is the support of its starting state. Indeed, let
  $[P]_\alpha$ be such an equivalence class and let $q$ be the
  starting state of $P$. The inclusion
  $\supp(q) \subseteq \supp([P]_\alpha)$ holds because we have a
  well-defined equivariant projection from paths to their initial
  states. The converse inclusion is shown by induction, using
  Lemma~\ref{lem:trans}.
\end{rem}

In the proof that $\tilde A$ accepts $\barlang(A)$, the following
normalization result for paths is crucial.
\begin{defn}
  A path in $\tilde A$ is \emph{$\pi$-literal} for $\pi\in G$ if all
  transitions in it are of the form
  $(q,\pi H_q)\trans{\pi\alpha}(q',\pi H_{q'})$ where
  $\alpha\in\bar\Names$ and $q\trans{\alpha}q'$.
\end{defn}
Intuitively, a $\pi$-literal path is one that uses the same pattern of
name reusage for free and bound names as the underlying path in $A$,
up to a joint renaming $\pi$ of the free and bound names.
\begin{lem}\label{lem:literal-path}
  Let $P$ be a path in $\tilde A$ beginning at
  $(q_0,\pi_0H_{{q_0}})$. Then $P$ is $\alpha$-equivalent to a
  $\pi_0$-literal path.
\end{lem}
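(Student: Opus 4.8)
The plan is to prove the statement by induction on the length $n$ of the path $P$. The base case $n=0$ is trivial, since an empty path starting at a given state is vacuously $\pi_0$-literal. For the inductive step I would write the first transition of $P$ as $(q_0,\pi_0 H_{q_0})\trans{\alpha}(q_1,\pi_1 H_{q_1})$ and split into the free and the bound case.

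If the first transition is free, then the characterization of free successors in Lemma~\ref{lem:freernna-trans}, equation~\eqref{eq:freernna-free-char}, gives $\alpha=\pi_0(a)$ and $\pi_1 H_{q_1}=\pi_0 H_{q_1}$ for some $a\in N_{q_0}$ with $q_0\trans{a}q_1$ in $A$. Thus the first transition is already $\pi_0$-literal and its target is the $\pi_0$-state $(q_1,\pi_0 H_{q_1})$. Applying the induction hypothesis to the remaining path, which begins at $(q_1,\pi_0 H_{q_1})$, yields an $\alpha$-equivalent $\pi_0$-literal continuation; prepending the unchanged free transition preserves both $\alpha$-equivalence and $\pi_0$-literality, since for a fixed leading free transition two paths are $\alpha$-equivalent exactly when their tails are.

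The bound case is the crux. Here $\alpha=\newletter a$, and equation~\eqref{eq:freernna-bound-char} of Lemma~\ref{lem:freernna-trans} describes the bound successors modulo $\alpha$-equivalence as precisely the classes $\eq{\pi_0(b)}{(q',\pi_0 H_{q'})}$ with $q_0\trans{\scriptnew b}q'$ in $A$. Hence there is some $q_0\trans{\scriptnew b}q_1'$ with $\eq{a}{(q_1,\pi_1 H_{q_1})}=\eq{\pi_0(b)}{(q_1',\pi_0 H_{q_1'})}$. I would then invoke Lemma~\ref{lem:path-alpha} to rewrite the opening bound transition: it produces a path $\alpha$-equivalent to $P$ whose first transition is $(q_0,\pi_0 H_{q_0})\trans{\newletter{\pi_0(b)}}(q_1',\pi_0 H_{q_1'})$, i.e.\ the $\pi_0$-literal transition induced by $q_0\trans{\scriptnew b}q_1'$, and whose continuation now begins in the $\pi_0$-state $(q_1',\pi_0 H_{q_1'})$. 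Applying the induction hypothesis to this continuation and prepending the $\pi_0$-literal bound transition completes the step; prepending a fixed leading bound transition to $\alpha$-equivalent tails again yields $\alpha$-equivalent paths, directly from the inductive definition of $\alpha$-equivalence of paths.

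The main obstacle is exactly the bookkeeping in the bound case: one must rechoose the coset representative so that the leading bound transition becomes $\pi_0$-literal and, crucially, so that its target again carries the representative $\pi_0$, which is what lets the induction hypothesis apply to the tail. The two ingredients that make this work are equation~\eqref{eq:freernna-bound-char}, which guarantees that every bound successor of $(q_0,\pi_0 H_{q_0})$ is $\alpha$-equivalent to a $\pi_0$-literal one, and Lemma~\ref{lem:path-alpha}, which propagates this $\alpha$-renaming of the leading bound letter through the entire tail while staying inside $\tilde A$ (closure of the set of paths under $\alpha$-equivalence being Lemma~\ref{lem:pathsalpha}).
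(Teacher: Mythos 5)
Your proof is correct and follows essentially the same route as the paper's: induction on the path length, with the free case handled directly via \eqref{eq:freernna-free-char} and the bound case handled by combining \eqref{eq:freernna-bound-char} with Lemma~\ref{lem:path-alpha} to rewrite the leading bound transition into the $\pi_0$-literal form before applying the induction hypothesis to the tail. The only cosmetic difference is that the paper observes (via the triviality of the $G$-action on the first component) that the target state of the rewritten bound transition is again $q_1$, where you more cautiously write $q_1'$; both are fine.
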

\begin{proof}
  We prove the statement by induction over the path length. The base
  case is trivial. For the inductive step, let
  $P=(q_0,\pi_0
  H_{{q_0}})\trans{\alpha_1}(q_1,\pi_1H_{{q_1}})\trans{\alpha_2}\cdots\trans{\alpha_n}
  (q_n, \pi_n H_{{q_n}})$
  be a path of length $n>0$. If $\alpha_1$ is a free name then
  $\pi_1 H_{{q_1}}=\pi_0 H_{{q_1}}$ by~\eqref{eq:freernna-free-char};
  by induction, we can assume that the length-$(n-1)$ path from
  $(q_1,\pi_0H_{{q_1}})$ onward is $\pi_0$-literal, and hence the
  whole path is $\pi_0$-literal. If $\alpha_1=\newletter{a}$ then
  by~\eqref{eq:freernna-bound-char} we have
  $\eq{a}{(q_1,\pi_1 H_{{q_1}})}=\eq{\pi_0(b)}{(q_1,\pi_0H_{{q_1}})}$
  for some transition $q_0\trans{\scriptnew{b}}q_1$ in $A$. By
  Lemma~\ref{lem:path-alpha}, this induces an $\alpha$-equivalence of
  $P$ with a path
  $(q_0,\pi_0H_{{q_0}})\trans{\scriptnew{\pi_0(b)}}(q_1,\pi_0H_{{q_1}})\trans{}\dots$;
  by the induction hypothesis, we can transform the length-$(n-1)$
  path from $(q_1,\pi_0H_{{q_1}})$ onward into a $\pi_0$-literal one,
  so that the whole path becomes $\pi_0$-literal as desired.
\end{proof}
\begin{lem}\label{lem:livereg}
  $\tilde A$ is an RNNA, with as many orbits as $A$ has states, and
  accepts the bar language $\barlang(A)$.
\end{lem}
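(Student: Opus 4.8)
The plan is to verify the three claims—that $\tilde A$ is an RNNA, that it has exactly as many orbits as $A$ has states, and that $\barlang(\tilde A)=\barlang(A)$—in that order, treating the language equivalence as the substantive part. For the RNNA conditions most of the work is already done: Remark~\ref{rem:coset}(3) gives $\supp(q,\pi H_q)=\pi N_q$, which is finite because $\barlang(q)$ is ufs (Lemma~\ref{lem:nfba-ufs}) and hence $N_q=\supp(\barlang(q))$ is finite, so the states form a nominal set. Equivariance of the transition relation and of the set of final states is immediate from the construction, since $G$ acts trivially on the first component and finality depends only on that component. The $\alpha$-invariance of the bound transitions holds by construction, as they are defined through the abstraction equation $\eq{a}{\pi'H_{q'}}=\eq{\pi(b)}{\pi H_{q'}}$, which is intrinsically modulo $\alpha$-equivalence; formally this uses Remark~\ref{rem:coset}(4) to transfer abstraction equality of states to abstraction equality of cosets and back, using that the first component is untouched. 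Finite branching up to $\alpha$-equivalence follows from the explicit descriptions of $\freesuc(q,\pi H_q)$ and $\boundsuc(q,\pi H_q)$ in Lemma~\ref{lem:freernna-trans}: each is the image of the set of outgoing transitions of $q$ in $A$, which is finite since $A$ is a finite automaton.

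The orbit count is then straightforward: the projection $(q,\pi H_q)\mapsto q$ is equivariant (as $G$ fixes the first component), and for fixed $q$ any two cosets $\pi H_q$, $\pi'H_q$ lie in one orbit, so the orbits of $\tilde A$ are in bijection with $Q$; in particular $\tilde A$ is orbit-finite.

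For the language equivalence I would prove both inclusions by reducing to $\id$-literal paths. For $\barlang(A)\subseteq\barlang(\tilde A)$, take a literally accepting path $s=q_0\trans{\alpha_1}q_1\trans{}\cdots\trans{}q_n$ in $A$ with $q_n$ final, and lift it to $(q_0,H_{q_0})\trans{\alpha_1}(q_1,H_{q_1})\trans{}\cdots\trans{}(q_n,H_{q_n})$ in $\tilde A$. Each lifted bound transition exists trivially (the defining abstraction equation becomes a tautology for $\pi=\id$, $b=a$), and each lifted free transition $q_{i-1}\trans{a}q_i$ satisfies the side condition $a\in N_{q_{i-1}}$: since the path continues to the final state $q_n$, we have $\barlang(q_i)\neq\emptyset$, so some $au$ is accepted from $q_{i-1}$ with $a\in\FN(au)=\supp([au]_\alpha)\subseteq N_{q_{i-1}}$ (using that $\barlang(q_{i-1})$ is ufs). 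Thus the same bar string is accepted by $\tilde A$, and $[w]_\alpha\in\barlang(\tilde A)$.

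The harder direction, $\barlang(\tilde A)\subseteq\barlang(A)$, is where the main work lies. Given an accepting path $P$ in $\tilde A$ from $(s,H_s)$ reading a bar string $w$, Lemma~\ref{lem:literal-path} (with $\pi_0=\id$) yields an $\id$-literal path $P'$ that is $\alpha$-equivalent to $P$ and reads some $w'\alphaeq w$. The crux of the argument is the observation that, because $G$ acts trivially on the first component of states, $\alpha$-equivalence of paths in $\tilde A$ leaves the sequence of $Q$-components unchanged: inspecting the single-step $\alpha$-conversion in the proof of Lemma~\ref{lem:pathsalpha}, the tail is renamed by a transposition which acts trivially on the $Q$-components. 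Hence $P'$ ends at a state whose $Q$-component equals the final state in which $P$ ends, and, being $\id$-literal, $P'$ projects onto a genuine accepting path in $A$ labelled by $w'$. Therefore $w'\in\litlang(A)$ and $[w]_\alpha=[w']_\alpha\in\barlang(A)$. I expect the only delicate point to be this preservation-of-$Q$-components claim, together with correctly choosing the $\id$-literal representative; everything else reduces to the already-established transition characterizations of Lemma~\ref{lem:freernna-trans}.
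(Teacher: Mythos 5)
Your proposal is correct and follows essentially the same route as the paper's proof: the RNNA conditions and orbit count via Remark~\ref{rem:coset} and Lemma~\ref{lem:freernna-trans}, the inclusion $\barlang(A)\subseteq\barlang(\tilde A)$ via the embedding $q\mapsto(q,\id H_q)$, and the reverse inclusion via normalization to an $\id$-literal path using Lemma~\ref{lem:literal-path}. You spell out two points the paper leaves implicit (the side condition $a\in N_q$ along accepting paths, and the invariance of the $Q$-components under $\alpha$-equivalence of paths), but the argument is the same.
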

\begin{proof}
  The free and bound transitions of $\tilde A$ are equivariant, and
  the bound transitions are $\alpha$-invariant by construction of the
  transition relation on $\tilde A$ (note that all states in the orbit
  of $(q,\pi H_q)$ have the form $(q,\pi'H_q)$). Every orbit of
  $\tilde A$ contains a state of the form $(q,\id H_q)$. This proves
  the claim on the number of orbits, which implies that $\tilde A$ is
  orbit-finite.  Finite branching is immediate from
  Lemma~\ref{lem:freernna-trans}. Thus, $\tilde A$ is an RNNA.
  
  It remains to show that $\barlang(\tilde A)=\barlang(A)$. The
  inclusion `$\supseteq$' is clear because $A$ is a
  subautomaton in $\tilde A$ via the inclusion map $f$ taking a state $q$ to
  $(q,\id H_q)$; i.e.\ $q\trans{\alpha}q'$ in $A$ implies
  $(q,\id H_q)\trans{\alpha}(q',\id H_{q'})$ in $\tilde A$. For the
  reverse inclusion, note that by Lemma~\ref{lem:literal-path}, every
  accepting path of $\tilde A$ is $\alpha$-equivalent to an
  $\id$-literal accepting path starting at the initial state
  $(s,\id H_s)$ of~$\tilde A$; such a path comes from an accepting
  path in~$A$ for the same bar string via the map $f$.
\end{proof}

\noindent
\textit{Proof of Theorem~\ref{thm:freernna}.}
By combining the above construction of $\tilde A$ with that of Lemma~\ref{lem:wlog-name-drop};
i.e.\ we show that $(\tilde A)'$ is isomorphic, and in fact equal, to $\bar A$:
A state in $(\tilde A)'$ has the form
\begin{equation*}
  (q,\pi H_q)|_N = (\Fix N)\cdot(q,\pi H_q)=(q,(\Fix N)\pi H_q)
\end{equation*}
for $N\subseteq \supp(q,\pi H_q)=\pi N_q$ (hence
$\pi^{-1}N\subseteq N_q$).  We claim that
\begin{equation}\label{eq:name-drop-state}
  (\Fix N)\pi H_q = \pi\Fix(\pi^{-1}N)(=\pi F_{\pi^{-1}N}).
\end{equation}
To see `$\subseteq$', let $\rho\in\Fix N$ and $\sigma\in H_q$. Then
$\pi^{-1}\rho\pi\in\Fix(\pi^{-1}N)$ and, since
$\pi^{-1}N\subseteq N_q$, $\sigma\in\Fix(\pi^{-1}N)$, so
$\pi^{-1}\rho\pi\sigma\in\Fix(\pi^{-1}N)$ and therefore
$\rho\pi\sigma=\pi\pi^{-1}\rho\pi\sigma\in\pi\Fix(\pi^{-1}N)$. 

For `$\supseteq$', let $\rho\in\Fix(\pi^{-1}N)$. Then
$\pi\rho\pi^{-1}\in\Fix(N)$, so to show $\pi\rho\in\Fix(N)\pi H_q$ it
suffices to show $(\pi\rho\pi^{-1})^{-1}\pi\rho\in\pi H_q$. But
$(\pi\rho\pi^{-1})^{-1}\pi\rho=\pi\in\pi H_q$.

This proves equality of the state sets. It remains to show that the
transitions in $(\tilde A)'$ and $\bar A$ are the same. The free
transitions in $(\tilde A)'$ are of the form
$(q,\pi H_q)|_{N}\trans{\pi(a)} (q',\pi H_{q'})|_{N'}$ where
$q\trans{a}q'$, $N'\subseteq\pi N_{q'}$, and
$N'\cup\{a\}\subseteq N\subseteq\pi N_q$;
by~\eqref{eq:name-drop-state}, they thus have, up to
$\alpha$-equivalence, the form
$(q,\pi F_{\pi^{-1}N})\trans{\pi(a)} (q',\pi F_{\pi^{-1}N'})$ where
$\pi^{-1}N'\subseteq N_{q'}$
$\pi^{-1}N'\cup\{a\}\subseteq \pi^{-1}N\subseteq N_q$, and hence are
the same as in $\bar A$.

The bound transitions in $(\tilde A)'$ are, up to
$\alpha$-equivalence, those of the form
$(q,\pi H_q)|_{N}\trans{\scriptnew\pi(a)} (q',\pi H_{q'})|_{N'}$ where
$q\trans{\scriptnew a}q'$, $N'\subseteq \{a\}\cup N$,
$N'\subseteq\pi N_{q'}$; and $N\subseteq\pi N_q$;
by~\eqref{eq:name-drop-state}, they thus have the form
$(q,\pi F_{\pi^{-1}N})\trans{\scriptnew\pi(a)} (q',\pi
F_{\pi^{-1}N'})$
where $\pi^{-1}N'\subseteq\{a\}\cup\pi^{-1}N$,
$\pi^{-1}N'\subseteq N_{q'}$, and $\pi^{-1}N\subseteq N_q$, and hence
again are the same as in $\bar A$. \qed

\myparagraph{Proof of Theorem~\ref{thm:nom-bar}}

We have to show that every accepting path in $A$ is
$\alpha$-equivalent to an accepting path in $A_0$. Note that $Q_0$
is closed under free transitions in $A$, so by
Lemma~\ref{lem:path-alpha}, it suffices to show that for every bound
transition $q\trans{\scriptnew{b}}q'$ in $A$ with $q\in Q_0$ we find
an $\alpha$-equivalent transition $q\trans{\scriptnew a}q''$ in
$A_0$. We distinguish the following cases.
\begin{itemize}
\item If already $b\in\Names_0$ then
  $\supp(q')\subseteq\supp(q)\cup\{b\}\subseteq\Names_0$, so
  $q'\in Q_0$ and we are done.
\item If $b\notin\Names_0$ and $b\notin\supp(q')$ then
  $\supp(q')\subseteq\supp(q)\subseteq\Names_0$. In particular, $q'$
  is already in $Q_0$ and $*$ is fresh for $q'$, so we can rename
  $b$ into $*$ and obtain an $\alpha$-equivalent transition
  $q\trans{|*}q'$ in $A_0$.
\item If $b\notin\Names_0$ and $b\in\supp(q')$ then
  $|\supp(q')\cap\Names_0|<k$, so that we can pick a name
  $a\in\Names_0$ that is fresh for $q'$. We put $q''=(ab)q'$; then
  $\langle b\rangle q'=\langle a\rangle q''$, and $q''\in Q_0$
  because
  $\supp(q'')=\{a\}\cup(\supp(q')-\{b\})\subseteq\{a\}\cup\supp(q)\subseteq\Names_0$;
  thus, $q\trans{\scriptnew{a}}q''$ is a transition in $A_0$.
\end{itemize}









\subsection{Proofs and Lemmas for Section~\ref{sec:name-dropping}}

\myparagraph{NOFAs as coalgebras} We show that the standard
description of NOFAs as repeated at the beginning of
Section~\ref{sec:name-dropping} is equivalent to the one as
$F$-coalgebras for $FX=2\times\powfs (\Names\times X)$. For the
direction from the standard description to $F$-coalgbras, recall that
the transition relation is assumed to be equivariant; therefore, the
map taking a state $q$ to $\{(a,q')\mid q \xrightarrow{a} q'\}$ is
equivariant, hence preserves supports and therefore ends up in $FQ$
where $Q$ is the set of states. Conversely, let $\xi:Q\to FQ$ be an
$F$-coalgebra with components $f:Q\to 2$,
$g:Q\to\powfs(\Names\times Q)$. Define the transition relation on $Q$
by $q\trans{a}q'$ iff $(a,q')\in g(Q)$, and make $q$ final iff
$f(q)=\top$. Then finality is equivariant by equivariance of $f$. To
see that the transition relation is equivariant let $q\trans{a}q'$ and
$\pi\in G$. Then $(\pi a,\pi q')\in\pi(g(q))=g(\pi(q))$ by
equivariance of $g$, i.e.\ $\pi q\trans{\pi a}\pi q'$. \qed

\begin{lem}\label{lem:nofa-restrict-equivar} 
  If $\pi\in G$ and $q|_N$ restricts a state $q$ in a NOFA to
  $N\subseteq\supp(q)$, then $\pi(q|_N)$ restricts $\pi q$ to $\pi N$.
\end{lem}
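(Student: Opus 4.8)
The plan is to verify each clause of Definition~\ref{def:nofa-namedrop} for the candidate restriction $\pi(q|_N)$ of $\pi q$ to $\pi N$, in every case by transporting the hypothesis back along $\pi^{-1}$, invoking the corresponding property of $q|_N$ as a restriction of $q$, and then transporting the conclusion forward along $\pi$. The only facts needed are that the transition relation and the set of final states are equivariant, and that $\supp$ is equivariant (so that $\supp(\pi x)=\pi\,\supp(x)$ and $a\fresh\pi x$ iff $\pi^{-1}a\fresh x$).

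First I would settle the easy clauses. The support condition is immediate: $\supp(\pi(q|_N))=\pi\,\supp(q|_N)=\pi N$. For finality, if $\pi q$ is final then $q$ is final since the final states form an equivariant set, hence $q|_N$ is final, hence so is $\pi(q|_N)$. For the incoming transitions, suppose $p\trans{a}\pi q$; applying $\pi^{-1}$ gives $\pi^{-1}p\trans{\pi^{-1}a}q$ by equivariance, so $\pi^{-1}p\trans{\pi^{-1}a}q|_N$ because $q|_N$ inherits the incoming transitions of $q$, and applying $\pi$ yields $p\trans{a}\pi(q|_N)$.

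Next I would treat the two outgoing-transition clauses, which are the substantive ones. For the ``stored'' clause, suppose $\pi q\trans{a}p$ with $a\in\supp(\pi q)=\pi\,\supp(q)$ and $\supp(p)\cup\{a\}\subseteq\pi N$. Applying $\pi^{-1}$ gives $q\trans{\pi^{-1}a}\pi^{-1}p$, and the two side conditions translate to $\pi^{-1}a\in\supp(q)$ and $\supp(\pi^{-1}p)\cup\{\pi^{-1}a\}=\pi^{-1}\bigl(\supp(p)\cup\{a\}\bigr)\subseteq N$; the restriction property of $q|_N$ then yields $q|_N\trans{\pi^{-1}a}\pi^{-1}p$, and applying $\pi$ gives $\pi(q|_N)\trans{a}p$. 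The ``fresh'' clause is identical except that $a\fresh\pi q$ translates to $\pi^{-1}a\fresh q$ and the support constraint becomes $\supp(\pi^{-1}p)\subseteq N\cup\{\pi^{-1}a\}$, after which the third bullet of Definition~\ref{def:nofa-namedrop} applies verbatim.

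I do not expect a genuine obstacle here, since the argument is a mechanical application of equivariance; the only point requiring care is bookkeeping of the side conditions (membership in $\supp$, freshness, and the containment in $N$) under conjugation by $\pi$, and in particular using $a\fresh\pi q\iff\pi^{-1}a\fresh q$ correctly in the fresh clause. I would simply remark that all clauses follow ``by equivariance,'' spelling out one of the two outgoing cases in detail and noting that the other three clauses are analogous.
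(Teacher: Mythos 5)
Your proposal is correct and follows essentially the same route as the paper's proof: verify each clause of the restriction definition by transporting along $\pi^{-1}$, applying the restriction property of $q|_N$, and transporting back along $\pi$, using equivariance of transitions, final states, and $\supp$. If anything you are slightly more careful than the paper, which handles the two outgoing-transition clauses in a single case; your explicit separation of the ``stored'' and ``fresh'' clauses and the translation $a\fresh\pi q\iff\pi^{-1}a\fresh q$ is exactly the bookkeeping required.
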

\begin{proof}
  We have $\pi N\subseteq\pi\supp(q)=\supp(\pi q)$, so the claim is
  well-formed.
  
  For the support of $\pi(q|_N)$, we have
  $\supp(\pi(q|_N))=\pi\supp(q|_N) = \pi N$ as required.

  By the equivariance of final states we have that $\pi(q|_N)$ is final if
  $q$ is final. 
  
  For incoming transitions, let $p\trans{a}\pi q$. Then
  $\pi^{-1}p\trans{\pi^{-1}a}q$ by equivariance, hence
  $\pi^{-1}p\trans{\pi^{-1}a}q|_N$ so that $p\trans{a}\pi(q|_N)$.
  
  For outgoing transitions, let $\pi q\trans{a}q'$ where
  $\supp(q')\subseteq \pi N\cup\{a\}$. Then
  $q\trans{\pi^{-1}a}\pi^{-1}q'$ by equivariance, and
  $\supp(\pi^{-1}q')\subseteq N\cup\pi^{-1}a$, so
  $q|_N\trans{\pi^{-1}a}\pi^{-1}q'$ and hence, by equivariance,
  $\pi(q|_N)\trans{a}q'$. 
  \qed
\end{proof}

\begin{proof}[Proposition~\ref{prop:nonspont}]
  In the first claim, `only if' is immediate by
  Lemma~\ref{lem:trans}. To see `if', let $A$ be a non-spontaneous and
  $\alpha$-invariant NOFA. We construct an RNNA $B$ with the same
  states as $A$, as follows.
  \begin{itemize}
  \item $q\trans{a}q'$ in $B$ iff $q\trans{a}q'$ in $A$ and
    $a\in\supp(q)$.
  \item $q\trans{\scriptnew{a}}q'$ in $B$ iff $q\trans{b}q''$ in $A$
    for some $b,q''$ such that $b\fresh q$ and 
    $\eq b q''=\eq a q'$.
  \end{itemize}
  The transition relation thus defined is clearly equivariant and
  $\alpha$-invariant. That for every $q$ the sets $\{(a,q') \mid q \trans{a} q'\}$
  and $\{\eq a q' \mid q \trans{\scriptnew a} q'\}$ are ufs (whence
  finite) easily follows from non-spontaneity.

  It remains to verify that $D(B)=A$, i.e.\ that
  \begin{equation*}
    q\trans{a}q'\text{ in }A\quad\text{iff}\quad (q\trans{a}q'\text{ or }q\trans{\scriptnew{a}}q'\text{ in }B).
  \end{equation*}
  To see the `only if' direction, let $q\trans{a}q'$ in $A$. If
  $a\in\supp(q)$ then $q\trans{a}q'$ in $B$. Otherwise, $a\fresh q$
  and hence $q\trans{\scriptnew a}q'$. For the `if' direction, we have
  two cases; the case where $q\trans{a}q'$ in $B$ is immediate by
  construction of $B$.  So let $q\trans{\scriptnew a}q'$ in $B$, that
  is, we have $q\trans{b}q''$ in $A$ for some $b,q''$ such that
  $\langle b\rangle q''=\langle a\rangle q'$ and $b\fresh q$. Then by
  $\alpha$-invariance of $A$, $q\trans{a}q'$.
  
  We proceed to prove the second claim, beginning with `only if'. So
  let $C$ be a name-dropping RNNA, let $q$ be a state, let
  $N\subseteq\supp(q)$, and let $q|_N$ restrict $q$ to $N$ in $C$. We
  show that $q|_N$ restricts $q$ to $N$ in $D(C)$. The condition
  $\supp(q|_N)\subseteq N$ is clear, as the nominal set of states is
  not changed by $D$. Since $q|_N$ has at least the same incoming
  transitions as $q$ in $C$, the same holds in $D(C)$. For the
  outgoing transitions, first let $q\trans{a}q'$ in $D(C)$ where
  $a\in\supp(q)$ and $\supp(q')\cup\{a\}\subseteq N$. Then either
  $q\trans{a}q'$ or $q\trans{\scriptnew a}q'$ in $C$. In the first
  case, $q|_N\trans{a}q'$ in $C$ and hence also in $D(C)$. In the
  second case, we have $\supp(q')\subseteq N\subseteq \{a\}\cup N$ and
  therefore $q|_N\trans{\scriptnew a}q'$ in $C$, so $q|_N\trans{a}q'$
  in $D(C)$. Second, let $q\trans{a}q'$ in $D(C)$ where $a\fresh q$
  and $\supp(q')\subseteq N\cup\{a\}$. By
  Lemma~\ref{lem:trans}.\ref{item:free-trans} we know that
  $q\trans{\scriptnew a}q'$ in $C$, so $q|_N\trans{\scriptnew a}q'$ in
  $C$ and hence $q|_N\trans{a}q'$ in $D(C)$.

  For the `if' direction of the second claim, let $A$ be a
  non-spontaneous, name-dropping, and $\alpha$-invariant NOFA. We
  construct $B$ such that $D(B)=A$ as for the first claim, and show
  additionally that $B$ is name-dropping. Let $q$ be a state, let
  $N\subseteq\supp(q)$, and let $q|_N$ restrict $q$ to $N$ in $A$. We
  claim that $q|_N$ also restricts $q$ to $N$ in $B$. We first show
  that $q|_N$ has at least the same incoming transitions as $q$ in
  $B$. For the free transitions, let $p\trans{a}q$ in $B$. Then by
  construction of $B$, $p\trans{a}q$ in $A$ and $a\in\supp(p)$, so
  since $A$ is name-dropping,
  $p\trans{a}q|_N$ in $A$ and hence $p\trans{a}q|_N$ in $B$. For the
  bound transitions, let $p\trans{\scriptnew a}q$ in $B$, i.e.\ we
  have $p\trans{b}q'$ in $A$ with $b\fresh p$ and
  $\eq{a}{q}=\eq{b}{q'}$, in particular $q'=(ab)q$. If $a=b$ then $p
  \trans{a} q$ in $A$, and since $A$ is name-dropping $p \trans{a}
  q|_N$ in $A$ whence in $B$. Otherwise, $b\fresh q$. By
  Lemma~\ref{lem:nofa-restrict-equivar}, $(ab)(q|_N)$ restricts $q'$
  to $(ab)N$ in $A$, so $p\trans{b}(ab)(q|_N)$ in $A$. Since
  $\supp(q|_N)\subseteq\supp(q)$, we have $b\fresh q|_N$, so
  $\eq{b}{((ab)(q|_N))}=\eq{a}{(q|_N)}$. By construction of $B$ we
  have $q \trans{\scriptnew a} q|_N$ in $B$, as required.

  For the outgoing transitions, first let $q\trans{a}q'$ in $B$ where
  $\supp(q')\cup\{a\}\subseteq N$. Then $q\trans{a}q'$ in $A=D(B)$, so
  $q|_N\trans{a}q'$ in $A$; since $a\in N=\supp(q|_N)$, it follows by
  construction of $B$ that $q|_N\trans{a}q'$ in $B$. Second, let
  $q\trans{\scriptnew a}q'$ in $B$ where
  $\supp(q')\subseteq N\cup\{a\}$. Pick $b\fresh (q,a)$ (so
  $b\notin N$); then $\eq{b}{((ab)q')}=\eq{a}{q'}$ and therefore
  $q\trans{\scriptnew b}(ab)q'$ in $B$ by $\alpha$-invariance. Thus,
  $q\trans{b}(ab)q'$ in $A=D(B)$, and
  $\supp((ab)q')=(ab)\supp(q')\subseteq (ab)(N\cup\{a\})= N - \{a\}
  \cup \{b\} \subseteq N\cup\{b\}$, where the last but one equation
  holds since $b \not\in N$. Therefore, $q|_N\trans{b}(ab)q'$ in
  $A$ since $A$ is name-dropping. By construction of $B$, it follows that
  $q|_N\trans{\scriptnew a}q'$ in $B$.
  \qed
\end{proof}

\begin{proof}[Proposition~\ref{prop:alphainv}]
  Let $A$ be a non-spontaneous and name-dropping NOFA. We construct a
  NOFA $\bar A$ by closing $A$ under $\alpha$-equivalence of
  transitions; that is, $\bar A$ has the same states as $A$ (in
  particular is orbit-finite), and its transitions are given by
  \begin{align*}
    q\trans{a}q'\text{ in $\bar A$ iff } 
    &  q\trans{a}q'\text{ in }A\text{ or}\\
    & \text{there exist } b,q''\text{ such
      that }q\trans{b}q''\text{ in $A$}, b\fresh q,\text{ and }\eq{a}{q'}=\eq{b}{q''}. 
  \end{align*}
  We say that a transition $q\trans{a}q'$ in $\bar A$ is \emph{new} if
  it is not in $A$.
  \begin{fact}\label{fact:alphainv-new-trans}
    If $q\trans{a}q'$ is new then $a\in\supp(q)$ and there exist
    $b,q''$ such that $q\trans{b}q''$ in $A$, $b\fresh q$ (so
    $a\neq b$), and $\eq{a}{q'}=\eq{b}{q''}$.
  \end{fact}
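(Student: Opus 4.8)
The plan is to trace the single new transition back through the definition of $\bar A$ and then use equivariance of $A$ to rule out the remaining case. First I would note that a \emph{new} transition $q\trans{a}q'$ lies in $\bar A$ but not in $A$, so it cannot be produced by the first disjunct of the definition of $\bar A$; it must therefore come from the second disjunct. This immediately supplies witnesses $b,q''$ with $q\trans{b}q''$ in $A$, $b\fresh q$, and $\eq{a}{q'}=\eq{b}{q''}$, which is precisely the existential part of the claim.

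Next I would establish $a\neq b$. Supposing $a=b$, the equality $\eq{a}{q'}=\eq{a}{q''}$ has matching first components, so only clause~(i) of Lemma~\ref{lem:abstr} can apply, forcing $q'=q''$. But then $q\trans{b}q''$ is literally the transition $q\trans{a}q'$ in $A$, contradicting that it is new.

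The substantive step is $a\in\supp(q)$, which I would prove by contradiction. Assume $a\fresh q$. From $a\neq b$ together with $\eq{a}{q'}=\eq{b}{q''}$, clause~(ii) of Lemma~\ref{lem:abstr} gives $(ab)\cdot q'=q''$, hence $(ab)\cdot q''=q'$. Applying the transposition $(ab)$ to $q\trans{b}q''$ and using equivariance of the transition relation of $A$ yields $(ab)q\trans{a}(ab)q''$. Since both $a\fresh q$ and $b\fresh q$, the transposition lies in $\Fix(\supp(q))\subseteq\fix(q)$, so $(ab)q=q$; and $(ab)q''=q'$ from the previous line. Thus $q\trans{a}q'$ already belongs to $A$, contradicting newness. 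Therefore $a\in\supp(q)$.

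I expect no real obstacle here: the only care needed is to invoke the correct clause of Lemma~\ref{lem:abstr} in each of the two cases and to keep track of which of $q',q''$ is the image of the other under $(ab)$. The equivariance argument is routine and is exactly the manifestation of the parenthetical remark in Definition~\ref{def:nofa-namedrop} that the $\alpha$-invariance condition ``is automatic if $a\fresh q$'': it is freshness of $a$ that would collapse the putatively new transition back into $A$.
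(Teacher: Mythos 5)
Your proof is correct and follows the argument the paper intends: the paper states this as a \emph{Fact} without an explicit proof, treating it as immediate from the construction of $\bar A$, and your write-up supplies exactly the expected details (second disjunct yields the witnesses; $a=b$ collapses via clause~(i) of Lemma~\ref{lem:abstr}; $a\fresh q$ collapses via clause~(ii) plus equivariance and $(ab)\in\Fix(\supp(q))$). No gaps.
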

  We check that $\bar A$ has the requisite properties. First, the
  transition relation is clearly equivariant. Moreover, $\bar A$ is
  $\alpha$-invariant by construction.

  \emph{$\bar A$ is non-spontaneous:} It suffices to check new
  transitions $q\trans{a}q'$. By Fact~\ref{fact:alphainv-new-trans},
  we have $a\in\supp(q)$ and $b,q''$ such that $q\trans{b}q''$ in $A$,
  $b\fresh q$, and $\eq{a}{q'}=\eq{b}{q''}$. Since $A$ is
  non-spontaneous, $\supp(q'')\subseteq\supp(q)\cup\{b\}$. Let
  $c\in\supp(q')$ and $c\neq a$; we have to show $c\in\supp(q)$. Now
  $q'=(ab)\cdot q''$, so
  $\supp(q')=(ab)\cdot\supp(q'')\subseteq(ab)\cdot(\supp(q)\cup\{b\})$. Since
  $\eq{a}{q'}=\eq{b}{q''}$, we have $b\fresh q'$, so $c\notin\{a,b\}$;
  thus, $c\in(ab)\cdot(\supp(q)\cup\{b\})$
  implies $c\in\supp(q)\cup\{a\}$, hence $c\in\supp(q)$.

  \emph{$\bar A$ is name-dropping:} Let $N\subseteq\supp(q)$ for a
  state $q$, and let $q|_N$ restrict $q$ to $N$ in $A$; we show that
  $q|_N$ also restricts $q$ to $N$ in $\bar A$. The support of $q|_N$
  stays unchanged in $\bar A$, so we only have to check that $q|_N$
  retains the requisite transitions. Throughout, it suffices to check
  new transitions.

  For incoming transitions, let $p\trans{a}q$ in $\bar A$ be new,
  i.e.\ by Fact~\ref{fact:alphainv-new-trans} we have $a\in\supp(p)$,
  $p\trans{b}q'$ in $A$, $b\fresh p$ (hence $a\neq b$), and
  $\eq{b}{q'}=\eq{a}{q}$. Then $(ab)\cdot q=q'$. Therefore,
  $(ab)\cdot (q|_N)$ restricts $q'$ to $(ab)\cdot N$ in $A$ by
  Lemma~\ref{lem:nofa-restrict-equivar}. It follows that
  $p\trans{b}(ab)\cdot(q|_N)$ in $A$. Since $a\neq b$ and
  $\eq{b}{q'}=\eq{a}{q}$, we have $a\fresh q'$ and therefore
  $a\fresh((ab)\cdot(q|_N))$, so
  $\eq{b}{((ab)\cdot(q|_N))}=\eq{a}{(q|_N)}$ and therefore
  $p\trans{a}q|_N$ by construction of $\bar A$.

  For outgoing transitions, let $q\trans{a}q'$ be new in $\bar A$;
  i.e.\ by Fact~\ref{fact:alphainv-new-trans} we have $a\in\supp(q)$,
  $q\trans{b}q''$ in $A$, $b\fresh q$ (hence $a\neq b$) and
  $\eq{b}{q''}=\eq{a}{q'}$. Since $a\in\supp(q)$, we have to show that
  $q|_N\trans{a}q'$ in $\bar A$, assuming
  $\supp(q')\cup\{a\}\subseteq N$. From $b\fresh q$ we have
  $b\fresh q|_N$, so by construction of $\bar A$, it suffices to show
  $q|_N\trans{b}q''$ in $A$, which will follow once we show
  $\supp(q'')\subseteq N \cup \{b\}$. So let $b\neq c\in\supp(q'')$; we
  have to show $b\in N$. Now $a\neq b$ and $\eq{b}{q''}=\eq{a}{q'}$
  imply $a\fresh q''$, so $c\neq a$ and hence
  $c\notin\{a,b\}$. Therefore
  $c\in(ab)\cdot\supp(q'')=\supp((ab)\cdot q'')=\supp(q')\subseteq N$,
  as required.

  \emph{$\bar A$ is equivalent to $A$:} $L(A)\subseteq L(\bar A)$ is
  immediate as $A\subseteq\bar A$ by construction. For the reverse
  inclusion, we show that\footnote{For greater clarity we write $L(A, q)$ for $L(q)$ where
    $q$ is a state in $A$.}
  \begin{quote}
    $(*)$\quad whenever $w\in L(\bar A,q)$ then there exists
    $N\subseteq\supp(q)$ such that if $q|_N$ restricts $q$ to $N$ in
    $A$ then $w\in L(A,q|_N)$
  \end{quote}
  (in fact, $N$ will be such that $|\supp(q)-N|\le 1$). Since
  $\supp(s)=\emptyset$ for the initial state~$s$, this implies that
  $L(\bar A)\subseteq L(A)$.

  We prove $(*)$ by induction on $w$, with trivial induction base. So
  let $w=av$ and $q\trans{a}q'$ in $\bar A$ where $v\in L(\bar
  A,q')$.
  By induction, there is $N\subseteq\supp(q')$ such that
  $v\in L(A,q'|_N)$ whenever $q'|_N$ restricts $q'$ to $N$ in $A$. If
  $q\trans{a}q'$ in $A$ then $q\trans{a}q'|_N$ in $A$, so that
  $av\in L(A,q)$. The remaining case is that $q\trans{a}q'$ is new. By
  Fact~\ref{fact:alphainv-new-trans}, we have $a\in\supp(q)$ and
  $b,q''$ such that $q\trans{b}q''$ in $A$, $b\fresh q$ (so
  $a\neq b$), and $\eq{a}{q'}=\eq{b}{q''}$. We claim that whenever
  $q|_{N_a}$ restricts $q$ to $N_a:=\supp(q)-\{a\}$ in $A$ then
  $av\in L(A,q|_{N_a})$. It suffices to show
  \begin{equation}\label{eq:key-trans-alpha}
    q|_{N_a}\trans{a}q'|_N
    \text{ in $A$}.
  \end{equation}
  Since $a\neq b$ and $\eq{a}{q'}=\eq{b}{q''}$, we have $a\fresh q''$
  so from $q\trans{b}q''$ in $A$ we obtain
  $\supp(q'')\subseteq\{b\}\cup N_a$ by non-spontaneity of $A$. By the
  definition of restriction, it follows that $q|_{N_a}\trans{b}q''$
  in~$A$ (recall that $b\fresh q$). Since $a\notin\supp(q|_{N_a}) = N_a$, we
  obtain by equivariance of transitions that $q|_{N_a}\trans{a}q'$,
  which implies~\eqref{eq:key-trans-alpha} by the definition of
  restriction: we have $q' = (ab)\cdot q''$ which implies
  \[
    \supp(q') = (ab) \cdot \supp(q'') \subseteq \{a\} \cup (ab) \cdot
    N_a = \{a\}  \cup N_a,
  \]
  where the last step holds since $a, b \not\in N_a$. 
  \qed
\end{proof}

\begin{proof}[Additional proof details for
  Corollary~\ref{cor:intersections}]
  It is straightforward to verify that non-spontaneous name-dropping
  NOFAs are closed under the standard product construction;
  specifically, given a state $(q_1,q_2)$ in a product automaton and
  $N\subseteq\supp(q_1,q_2)=\supp(q_1)\cup\supp(q_2)$, one checks
  readily that if $q_i|_{N_i}$ restricts $q_i$ to
  $N_i:=N\cap\supp(q_i)$ for $i=1,2$, then $(q_1|_{N_1},q_2|_{N_2})$
  restricts $(q_1,q_2)$ to $N$.
  \qed
\end{proof}

\myparagraph{Details for Remark~\ref{rem:local-expressivity}} We
  show that the data language
  \begin{equation*} 
    L=\{wava \mid w,v\in\Names^*,a\in\Names\} 
  \end{equation*}
  is not accepted by any DOFA. Assume for a contradiction that $A$ is
  a DOFA that accepts $L$. Let $n$ be the maximal size of a support of
  a state in $A$. Let $w=a_1\dots a_{n+1}$ for distinct $a_i$, and let
  $q$ be the state reached by $A$ after consuming $w$. Then there is
  $i\in\{1,\dots,n+1\}$ such that $a_i\notin\supp(q)$. Pick a fresh
  name $b$. Then $\delta(a_i,q)$ is final and $\delta(b,q)$ is not;
  but since $\delta(a_i,q) =(a_ib)\cdot\delta(b,q)$, this is in
  contradiction to equivariance of the set of final states. \qed

  \myparagraph{Details on Name-Dropping Register Automata.}  By
  definition, the full class of name-dropping RAs is characterized by
  the existence, for every configuration $(c,w)$ and subset $N$ of the
  names appearing in $w$, of a configuration $(c,w)|_N$ that restricts
  $(c,w)$ to $N$. A priori, nothing excludes the possibility that
  $(c,w)|_N$ uses a location other than $c$. We do not anticipate that
  any expressivity can be gained from that (we refrain from proving
  this formally as we currently wish to give lower rather than upper
  estimates for the expressivity of name-dropping RAs) and therefore
  concentrate on name-dropping RAs where restrictions use the same
  location; these are the \emph{forgetful} RAs mentioned in
  Section~\ref{sec:name-dropping}. Since there is only one transition
  constraint between any two locations, forgetful RAs are given by a
  condition concerning the individual transition constraints
  $c\xrightarrow{\phi}c'$; recall that $\phi$ is an equivariant subset
  of $R\times\Names\times R$ where $R=(\Names\cup\{\bot\})^n$ is the
  set of register assignments. For a register assignment
  $w=(w_1,\dots,w_n)$ write $|w|=\{w_1,\dots,w_n\}\cap\Names$, and for
  $N\subseteq\Names$ let $w|_N$ be defined as in
  Section~\ref{sec:name-dropping}, i.e.\ by $(w|_N)_i=w_i$ for $w_i\in N$ and
  $(w|_N)_i=\bot$ otherwise. Explicitly, an RA is forgetful iff every
  transition constraint $c\xrightarrow{\phi}c'$ in it satisfies
\begin{myenumerate}
\item \emph{Non-spontaneity:} $(w,a,v)\in\phi$ implies
  $|v|\subseteq\{a\}\cup|w|$.
\item\label{item:contraint-nd} \emph{Name-dropping:} For all
  $(w,a,v)\in\phi$,
  \begin{enumerate}
  \item\label{item:contraint-nd-in} $(w,a,v|_N)\in\phi$ for all
    $N\subseteq|v|$;
  \item\label{item:contraint-nd-out-free} if
    $\{a\}\cup|v|\subseteq N\subseteq |w|$ then $(w|_N,a,v)\in\phi$;
  \item\label{item:contraint-nd-out-bound} if $a\notin |w|\supseteq N$
    and $|v|\subseteq\{a\}\cup N$ then $(w|_N,a,v)\in\phi$.
  \end{enumerate}
\end{myenumerate}
Constraints satisfying
Conditions~\ref{item:contraint-nd-in}--\ref{item:contraint-nd-out-bound}
are clearly closed under unions and intersections. To see that
positive Boolean combinations of the four types of basic constraints
$\mathsf{cmp}_i$, $\mathsf{store}_i$, $\mathsf{fresh}_i$,
$\mathsf{keep}_{ji}$ given in Section~\ref{sec:name-dropping} satisfy these
conditions, it thus suffices to show that the basic constraints
satisfy them. This is immediate for
Condition~\ref{item:contraint-nd-in} as $\mathsf{store}_i$ and
$\mathsf{keep}_{ji}$ both allow
$v_i=\bot$. Conditions~\ref{item:contraint-nd-out-free}
and~\ref{item:contraint-nd-out-bound} are immediate for
$\mathsf{store}_i$ and $\mathsf{keep}_{ji}$ (in the case of
$\mathsf{keep}_{ji}$, again because it allows $v_i=\bot$). To see that
they hold for $(w,a,v)\in\mathsf{fresh}_i$, just note that
$(w|_N)_i\in\{w_i,\bot\}$ and
$a\neq\bot$. Condition~\ref{item:contraint-nd-out-bound} does not
apply to $(w,a,v)\in\mathsf{cmp}_i$. Finally,
Condition~\ref{item:contraint-nd-out-free} holds for
$(w,a,v)\in\mathsf{cmp}_i$ because $a\in N$ and $a=w_i$ imply that
$(w|_N)_i=w_i=a$.

\myparagraph{Translation of FSUBAs into RNNAs.}
We now compare RNNAs to \emph{finite-state unification-based automata
  (FSUBAs)}~\cite{KaminskiTan06,TalMScThesis}. A particular feature
of FSUBAs is that they distinguish a finite subset $\Theta$ of the
alphabet that is \emph{read-only}, i.e.\ cannot be written into the
registers. We have no corresponding feature, therefore restrict to
$\Theta=\emptyset$ in the following discussion. An FSUBA then consists of finite
sets $Q$ and $r$ of states and registers, respectively, a transition
relation $\mu \subseteq Q×r×\powf(r)×Q$, an initial state $q_0 \in Q$,
a set $F \subseteq Q$ of final states, and an initial register
assignment $u$. Register assignments are partial maps
$v:r\partialto \A$, which means a register $k \in r$ can be empty
($v(k) = \bot$) or hold a name from $\A$. An \emph{FSUBA
  configuration} is a pair $(q, v)$, where $q \in Q$ and $v$ is a
register assignment. The initial configuration is $(q_0, u)$. A
transition $(q,k,S,p)\in \mu$ applies to a configuration with state
$q$ for an input symbol $a\in \A$ if register $k$ is empty or holds
$a$; the resulting configuration has state $p$, with the input $a$
first written into register $k$ and the register contents from $S$
cleared afterwards. A word is accepted if there is a sequence of
transitions from $(q_0,u)$ to a configuration with a final state.

As the name \emph{unification-based} suggests, FSUBAs can check
equality of input symbols, but not inequality (except with respect to the
read-only letters); in other words, they have no notion of
freshness. Thus the above-mentioned language $\{aba\mid a\neq b\}$
cannot be accepted by an FSUBA~\cite{KaminskiTan06}.


We proceed to show that for every FSUBA $A$ with empty read-only
alphabet its configurations form an RNNA that accepts the same
$\Names$-language as $A$ under local freshness semantics; that is,
RNNAs are strictly more expressive than FSUBAs with empty read-only
alphabet.

Let $A$ be an FSUBA with set $Q$ of state, set $r$ of registers,
initial state $q_0$, set $F$ of final states, transition relation
$\mu \subseteq Q×r×\powf(r)×Q$, and initial register assignment $u$;
we restrict the read-only alphabet $\Theta$ to be empty. We denote the
$\Names$-language accepted by $A$ by $L(A)$.  We construct an
equivalent RNNA $R(A)$ as follows. The states of $R(A)$ are the
configurations of $A$, which form a nominal set $C$ under the group
action $\pi\cdot(q,v) = (q,\pi\cdot v)$. The transitions of $R(A)$ are
given by
\begin{align}
    \op{fsuc}(q,v) &=
        \{ (v(k),(p,\op{erase}_S(v))) \mid
            (q,k,S,p)\in \mu
        \}
        \label{eq:fsubaRead}
        \\&\cup
        \ %
        \{
        \label{eq:fsubaWriteUsed}
            (a,(p,\op{erase}_S(v[k\mapsto a]))) \mid
            (q,k,S,p)\in \mu,
            a\in \supp(v),
            v(k) = \bot
        \}
        \\
    \op{bsuc}(q,v) &=
        \{
        \label{eq:fsubaWriteFresh}
            \eq{a}{(p,\op{erase}_S(v[k\mapsto a]))}\mid
            (q,k,S,p)\in \mu,
            a\fresh v, v(k) =\bot
        \}
\end{align}
where $\op{erase}_S$ clears the contents of the registers in $S$.

This RNNA $R(A)$ behaves, under local freshness semantics, like the
FSUBA $A$:
\begin{lem}\label{fsubaToRnna}
  The transitions between configurations of $A$ are precisely given by
  $(q,v)\trans{\unbar(\alpha)} (p,w)$, where
  $(q,v)\trans{\alpha} (p,w)$, $\alpha \in \bar \A$, is a transition
  in $R(A)$.
\end{lem}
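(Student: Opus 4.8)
The plan is to prove the asserted equality of transition relations by matching the three defining clauses~\eqref{eq:fsubaRead}, \eqref{eq:fsubaWriteUsed}, \eqref{eq:fsubaWriteFresh} of $R(A)$ against the firing modes of $A$. Recall that a transition $(q,k,S,p)\in\mu$ applies to a configuration $(q,v)$ on input $a\in\Names$ exactly when $v(k)=a$ or $v(k)=\bot$, and then yields the successor configuration $(p,\op{erase}_S(v[k\mapsto a]))$; in the first case $v[k\mapsto a]=v$, so the successor is $(p,\op{erase}_S(v))$.

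First I would show that every firing of $A$ arises as the $\unbar$-image of an $R(A)$-transition. Fix a firing $(q,v)\to(p,w)$ via $(q,k,S,p)\in\mu$ on input $a$. If $v(k)=a$ (so $v(k)\in\Names$), then $w=\op{erase}_S(v)$ and clause~\eqref{eq:fsubaRead} supplies the free transition $(q,v)\trans{v(k)}(p,w)$ with $\unbar(v(k))=v(k)=a$. Otherwise $v(k)=\bot$ and $w=\op{erase}_S(v[k\mapsto a])$; here I distinguish whether $a\in\supp(v)$, so that clause~\eqref{eq:fsubaWriteUsed} supplies the free transition $(q,v)\trans{a}(p,w)$, or $a\fresh v$, so that clause~\eqref{eq:fsubaWriteFresh} supplies the bound transition $(q,v)\trans{\scriptnew a}(p,w)$. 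In all three subcases $\unbar(\alpha)=a$, and since either $v(k)=a$, or $v(k)=\bot$ with $a$ stored, or $v(k)=\bot$ with $a$ fresh, these subcases are exhaustive.

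For the converse I would run the same case distinction backwards over the clauses defining $\op{fsuc}$ and $\op{bsuc}$, reading off the FSUBA firing recovered by $\unbar$. The free clauses are immediate: \eqref{eq:fsubaRead} contributes only when $v(k)\in\Names$ and then models reading the register content $v(k)$, while \eqref{eq:fsubaWriteUsed} models writing a stored name into the empty register $k$. The one delicate point, and the main obstacle, is the bound clause, because the transition relation of $R(A)$ is $\alpha$-closed: a general bound transition has the form $(q,v)\trans{\scriptnew b}(p,w')$ with $\eq{b}{(p,w')}=\eq{a}{(p,\op{erase}_S(v[k\mapsto a]))}$ for some $a\fresh v$ and $v(k)=\bot$, where $b$ is an arbitrary representative. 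Using Lemma~\ref{lem:abstr} I would reduce to the case $b\neq a$, $b\fresh(p,\op{erase}_S(v[k\mapsto a]))$, and $w'=(ab)\cdot\op{erase}_S(v[k\mapsto a])$, and then verify the key identity
\begin{equation*}
  (ab)\cdot\op{erase}_S(v[k\mapsto a])=\op{erase}_S(v[k\mapsto b]).
\end{equation*}
This holds because $a\fresh v$ together with the freshness of $b$ for the surviving register contents forces $(ab)$ to fix every non-erased entry $v(j)$; note that $b$ may still coincide with a name of $v$ lying in an erased register, but this is harmless, since $v(k)=\bot$ lets $A$ write $b$ without any freshness test. Hence $(q,v)\to(p,w')$ is precisely the firing of $(q,k,S,p)$ on input $b=\unbar(\newletter b)$, which closes the argument.
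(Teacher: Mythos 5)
Your proof is correct and follows essentially the same route as the paper's: both directions are handled by the same three-way case split on the defining clauses of $\op{fsuc}$/$\op{bsuc}$ (respectively on whether $v(k)$ is filled, and whether the input is stored or fresh), and the only delicate point---the $\alpha$-closed bound transitions---is resolved in both via Lemma~\ref{lem:abstr} and the observation that the transposition $(a\,b)$ fixes all surviving register entries, yielding the same identity $(ab)\cdot\op{erase}_S(v[k\mapsto a])=\op{erase}_S(v[k\mapsto b])$ that the paper computes with the roles of $a$ and $b$ interchanged. Your explicit remark that $b$ may coincide with an erased name but that this is harmless because $v(k)=\bot$ imposes no freshness test is a welcome clarification of a point the paper leaves implicit.
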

\begin{proof}
  Let $(q,v) \trans{\alpha} (p,w)$ be a transition in the RNNA
  $R(A)$. We distinguish cases:
    \begin{itemize}
    \item For \eqref{eq:fsubaRead}, we have an FSUBA transition
      $(q,k,S, p) \in \mu$ with $\alpha = v(k)\in \A$, and $w =
    \op{erase}_S(v)$. Hence we have a transition $(q,v)
    \trans{\op{ub}(\alpha)} (p,w)$ between FSUBA configurations. 

  \item For \eqref{eq:fsubaWriteUsed}, we have an FSUBA transition
    $(q,k,S,p) \in \mu$ and $v(k) =\bot$, $\alpha = v(i) \in \A$ for
    some $i\in r$, and $w = \op{erase}_S(v[k \mapsto v(i)])$.  Hence,
    from the FSUBA configuration $(q,v)$ the input $v(i)$ is read into
    register $k$ and then the registers in $S$ are cleared,
    i.e.~$(q,v) \trans{\alpha} (p,w)$ is a transition of FSUBA
    configurations.

    \item For \eqref{eq:fsubaWriteFresh}, i.e.~for $\alpha =
      \newletter{a}$, we have an FSUBA transition
      $(q,k,S,p') \in \mu$ and $v(k) =\bot$ and some $b \fresh v$ with
      $\eq a {(p,w)} = \eq b {(p',\op{erase}_S(v[k \mapsto b]))}$. It
      follows that $(ab) (p,w) = (p',\op{erase}_S(v[k \mapsto b]))$,
      and equivalently, $p = p'$ and $(ab) w = \op{erase}_S(v[k
      \mapsto b])$. The latter implies that $w = \op{erase}_S(v[k
      \mapsto a])$. Thus, we obtain a transition of FSUBA
      configurations $(q,v) \trans{a} (p,w)$ as desired. 
    \end{itemize}

    \noindent Conversely, consider a transition
    $(q,v) \trans{a} (p,w)$ of FSUBA configurations admitted by
    $(q,k,S,p) \in \mu$.
    \begin{itemize}
    \item If $v(k)\neq \bot$, then $v(k) = a$. Hence $(q,v) \xrightarrow{a}
           (p,w)$ is a transition in $R(A)$ by \eqref{eq:fsubaRead}.
    \item If $v(k) = \bot$ and $a\in \supp(v)$, then $(q,v) \xrightarrow{a}
           (p,w)$ is a transition in $R(A)$ by \eqref{eq:fsubaWriteUsed}.
         \item If $v(k) = \bot$ and $a\fresh v$, then
           $w = \op{erase}_S(v[k \mapsto a])$ and
           $\eq a {(p,w)} \in \op{bsuc}(q,v)$. By $\alpha$-invariance,
           this implies $(q,v) \trans{\scriptnew a} (p,w)$ in $R(A)$.
           \qedhere
    \end{itemize}
\end{proof}

\noindent Using Lemma~\ref{fsubaToRnna}, one shows by induction on $w$
that $L(A)=\{\unbar(w)\mid w\in L_0(R(A))\}$.  The RNNA $R(A)$ in
general fails to be name-dropping, but for any
$[\newletter{a}w]_\alpha \in \barlang(q,v)$, $w\in \bar \A^*$, we have
\begin{equation}\label{eq:fsuba-trans-alternative}
    (q,v) \xrightarrow{\scriptnew{a}} (p,v'),
    w\in \barlang(p,v')
    \quad\text{or}\quad
    (q,v) \xrightarrow{a} (p,v'),
    w\in \barlang(p,v'):
  \end{equation}
  Since $[\newletter{a}w]_\alpha \in \barlang(q,v)$, we have some
  transition $(q,v) \xrightarrow{\scriptnew{b}} (p',v'')$ in $R(A)$
  such that $\eq{a}{w}=\eq{b}{w'}$ for some $w'\in\barlang(p',v'')$;
  if we cannot $\alpha$-equivalently rename the
  $\newletter b$-transition into an $\newletter a$-transition to
  obtain the left alternative in \eqref{eq:fsuba-trans-alternative},
  then $b\neq a\in\supp(v'')$ and hence $a\in\supp(v)$, so by
  construction of $R(A)$ we obtain the right alternative in
  \eqref{eq:fsuba-trans-alternative}. By induction on $w$, it follows
  that $\{\unbar(w)\mid w\in L_0(q_0,u)\} = D(\barlang(q_0,u))$, so
  that $L(A)=D(\barlang(R(A))$, as claimed. \qed

  \myparagraph{Relationship to Unambigous Register Automata}
  \noindent We have seen that the languages accepted by RNNA under the
  local freshness semantics RNNA are a proper subclass of languages
  accepted by RAs. An important subclass of RAs are the unambigous
  ones. An RA $A$ is called \emph{unambigous} if every input word has
  at most one accepting run in $A$. The class of languages accepted by
  unambigous RA lies strictly between those accepted by deterministic
  and by arbitrary (non-deterministic) RAs. Unambigous RA are closed
  under complement and have decidable universality, language
  containment, and equivalence problems. Moreover, a language is
  accepted by an unambigous RA iff both itself and its complement are
  accepted by an RA (these results were announced in~\cite[Theorems~12
  and~13]{Colcombet15}).

  It follows that unambigous RA are incomparable to RNNA under local
  freshness semantics. Indeed, as we mentioned in the introduction,
  the language
\begin{equation}\label{eq:ab}
  L = \{\,ab \mid a,b\in \Names, a \neq b\,\}
\end{equation}
is not accepted by any RNNA but can clearly be accepted by an
unambigous (even deterministic) RA with only one
register. 

Now consider the language $L_2$ of all words in which some letter occurs at
least twice, which is defined by the bar expression
\[
(\newletter a)^*\newletter a (\newletter b)^* a \newletter a
\]
and is therefore accepted by an RNNA (hence also by an RA).
The complement of $L_2$,
\begin{equation}\label{eq:comp}
  \{ a_1\cdots a_n \mid n \in \Nat, \text{$a_i$ pairwise distinct}\,\},
\end{equation}
is not accepted by any RA~\cite{BojanczykEA14}. If the conjectured closure under
complement holds (by~\cite[Theorem~13]{Colcombet15}), then $L_2$ can not be
accepted by an unambiguous RA. \footnote{Since no proof of the claimed
  closure has been provided or referenced in op.cit.~as of January 2021, we prove 
  the independence explicitly in Proposition~\ref{unambL2}.}

In the following, we prove explicitly that $L_2$ is not accepted by any
unambiguous RA. As a lemma, we use the following:

\begin{lem} \label{lemDegreeClique}
Given a number $r$ and a directed graph $G = (V, \to)$ such that $|V| > 2\cdot r
+ 1$ and every vertex in $V$ has out-degree at most $r$, there exist
two distinct vertices that are not connected by an edge (i.e. no edge
in either direction).
\end{lem}
\begin{proof}
  Let $n := |V|$. In order to connect every (distinct) pair of
  vertices (in at least one direction), we need at least
  \[
      \frac{n \cdot (n-1)}{2}
  \]
  edges. By the assumption that $n > 2\cdot r + 1$, we have:

  \[
      \frac{n \cdot (n-1)}{2}
      > \frac{n \cdot ((2\cdot r + 1) - 1)}{2}
      = \frac{n \cdot 2\cdot r}{2} = n \cdot r
  \]
  Due to the bound on the out-degree, we know that the graph has at most
  $n \cdot r$ edges, which is strictly less than then minimum number of edges
  required to connect every pair of morphisms. Hence, there is some pair
  of distinct vertices with no direct edge between them.
  \qed
\end{proof}
Now we are ready finish the independence proof:
\begin{proposition} \label{unambL2}
  There is no unambiguous RA that accepts the language
  \[
  L_2 = \{w\in \A^*\mid \text{some letter occurs at least twice in $w$}\}.
  \]
\end{proposition}
\begin{proof}[by contraposition]
  Let $A$ be a register automaton with $r$ registers that accepts $L_2$. Put $n
  := 2\cdot r + 2$ and consider the set $S$ of configurations that are reached
  by the prefix
  \[
      p := a_1\, a_2 \ldots a_n          \tag*{for $n$ fresh letters.}
  \]
  Assume further fresh letters $f_1, f_2, \ldots, f_n$, and by $w_i$ we denote the
  word
  \[
     w_i := f_1 \ldots f_{i-1} \,a_i\, f_{i+1} \ldots f_n 
     \tag*{for every $i \in \{1,\ldots ,n\}$.}
  \]
  Put $V := \{1,\ldots ,n\}$. For every $i \in V$, we have that the word
  \[
     p\,w_i
  \]
  is contained in the language $L_2$. Hence, for every $i \in V$, there is a
  configuration (i.e.~pair of a state and a register assignment) $c_i\in S$ such
  that $c_i$ accepts the word $w_i$. Fixing the witnesses $c_i$ for every $i\in
  V$, we define the following directed graph structure on $V$:
  \[
    i \to j     \qquad\text{iff}\qquad  a_j\text{ is in one of the registers of }c_i.
  \]
  Since $A$ has $r$ registers, this graph has an out-degree of at most
  $r$, and by the definition of $n$, we have that the graph $(V,\to)$ has $n >
  2\cdot r + 1$ vertices. By the above Lemma~\ref{lemDegreeClique}, there exist distinct $i$ and $j$
  such that $i\not\to j$ and $j\not\to i$. Define the word $w_{i,j}$ as $w_i$ with $f_j$
  replaced by $a_j$:
  \[
    w_{i,j}  := \begin{cases}
      f_1 \ldots f_{i-1} \,a_i\, f_{i+1} \ldots f_{j-1} \,a_j\, f_{j+1}\ldots f_n
      &\text{if $i < j$}
      \\
      f_1 \ldots f_{j-1} \,a_j\, f_{j+1} \ldots f_{i-1} \,a_i\, f_{i+1}\ldots f_n
      &\text{if $j < i$}
      \end{cases}
  \]
  Since $i \not\to j$, the letter $a_j$ is not in any register of the
  configuration $c_i$, and $f_j$ neither is by the above assumption of
  freshness. Hence, $c_i$ accepts $w_{i,j}$. Symmetrically, $j \not\to i$
  implies that $c_j$ accepts $w_{i,j}$ as well. Moreover, $c_i$ must have $a_i$ in
  one of its registers, because otherwise $c_i$ would accept the word
  \[
    f_1 \ldots f_n,
  \]
  contradicting the correctness of $A$ (the word $p \,f_1 \ldots f_n$ is
  clearly not in the language $L_2$). Since $c_i$ has $a_i$ in one of its
  registers and $c_j$ does not, we can conclude that $c_i$ and $c_j$ are
  distinct configurations. Both of them accept $w_{i,j}$, and so we have two
  runs of $A$ accepting the word
  \[
    p \, w_{i,j},
  \]
  so $A$ is not unambiguous.     
  \qed
\end{proof}

\myparagraph{Relationship to Alternating 1-Register Automata}
\noindent Another model with decidable language inclusion are
alternating 1-register automata (see e.g.~\cite{NevenEA04}). We shall
prove that they are incomparable to RNNA under local freshness
semantics. For the failure of the inclusion of alternating 1-register
automata into RNNAs, just recall that the language $L$ defined in
Equation~\eqref{eq:ab}, which is not accepted by any RNNA, can be
accepted by an alternating, even deterministic, 1-register automaton.

Alternating 1-register automata are fairly expressive, and in fact can
express the language~\eqref{eq:comp} as well as, maybe unexpectedly,
the language `the first and the second letter both appear again, not
necessarily in the same order'. The latter is due to the fact that the
place where the automaton needs to look for the second occurrence of
the second letter does not depend on where it finds the second
occurrence of the first letter, so it can run independent searches for
the second occurrences of the respective letters in two conjunctive
branches, each using only one register.

However, we shall show that the language that is given under local freshness
semantics by the regular bar expression
\[
\newletter a \newletter b (\newletter c)^*ab (\newletter c)^*,
\]
and can therefore be accepted by an RNNA, cannot be accepted by any
alternating 1-register automaton.

We first recall the definition of the latter as presented in
\cite{DemriLazic09}. We restrict to the case where the finite
component of the input alphabet (just called the alphabet
in~\cite{DemriLazic09}) is a singleton, matching our example.
\begin{defn}
    An \emph{alternating oneway 1-register automaton (A1-RA)} consists of
    \begin{enumerate}
    \item a finite set $Q$ of \emph{locations} with an initial location $q_0 \in Q$;
    \item an one-step transition function $\op{step}: Q\to \Delta(Q)$
      where \footnote{Branching on end-of-word as additionally
        foreseen in \cite{DemriLazic09} can be encoded using the
        Boolean connectives and $\X$.}
    \begin{align*}
        \Delta(Q) = \{\ &
             p_1 \branchleft \uparrow \branchright p_2,
             p_1 \vee p_2,
             p_1 \wedge p_2,
             \top,\bot,\downarrow p,\X p, \bar \X p
         \mid
            p,p_1,p_2 \in Q
        \};
    \end{align*}
  \item a height function $\gamma: Q \to \Nat$ such that whenever
    $\op{step}(q) \in \{p_1 \branchleft \uparrow \branchright p_2, p_1 \vee
    p_2, p_1 \wedge p_2, \downarrow p\}$ we have
    $\gamma(p), \gamma(p_1), \gamma(p_2) < \gamma(q)$.
  \end{enumerate}
\end{defn}

\begin{notn}
  For any set $A\subseteq \A$ and any A1-RA we write 
  \[
    Q_A = Q \times (A + \{\star\})
  \]
  for the set of \emph{configurations} $(q,a)$ where $c$ is a location
  and $a$ indicates content of the register, which is either empty or
  contains a letter from $A$. For the special case $A=\Names$,
  $Q_\Names$ is a nominal set, with the nominal structure determined
  by taking $Q$ to be discrete. We write
  $i_A: Q_A \hookrightarrow Q_\A$ for the inclusion map, and define
  $u_A: Q_\A \twoheadrightarrow Q_A$ by
  \[
    u_A(q,a) = \begin{cases}
      (q,a) & a \in A, \\
      (q, \star) & \text{otherwise.}
    \end{cases}
  \]
  (Of course, $i_A\circ u_A:Q_\Names\to Q_\Names$ in general fails to
  be equivariant.)
\end{notn}

\begin{rem}
  For $A \subseteq B \subseteq \A$ we have
  \begin{equation}\label{eq:AsubB}
    i_A = i_B\cdot u_B\cdot i_A 
    \quad\text{and}\quad 
    u_A = u_A\cdot i_B\cdot u_B.
  \end{equation}
\end{rem}
\noindent 
Input words from $\Names^*$ are provided on a read-only tape whose
head may move to the right or stay in its position in every
computation step. The automaton starts to run in the initial
configuration $(q_0,\star)$.  For an input symbol $i\in \A$, the
behaviour of a configuration $(q,r) \in Q_\A$ is determined by
$\op{step}(q)$ as follows:
\begin{itemize}
\item For $p_1\vee p_2$ (resp.~$p_1\wedge p_2$),
  branch disjunctively (resp.~conjunctively) into $p_1$ and $p_2$
  without moving the head and changing the register content.
\item For $p_1 \branchleft \uparrow \branchright p_2$,
  transfer to location $p_1$ if of $i = r$, and to $p_2$ otherwise. 
\item For $\top$ (resp.~$\bot$), accept (resp.~reject) instantly.
\item For $\downarrow p$, write $i$ into the register and
  transfer to $p$.
\item For $ \X p$ (resp.~$\bar\X p$), transfer to $p$ and move the head
  one step to the right to the next input symbol if there exists one;
  otherwise, reject (resp.~accept).
\end{itemize}
\noindent It should be clear how the above determines acceptance of
words by an A1-RA; see~\cite{DemriLazic09} for a more formal
definition.  We shall now explain how an A1-RA can be translated into
an infinite deterministic automaton. To this end we consider the free
Boolean algebra monad $\BoolCombis$ on $\Set$. Concretely,
$\BoolCombis$ assigns to a set $X$ the set of Boolean formulas
built from elements of $X$ using the binary operations $\vee$ and
$\wedge$, the unary operation $\neg$, and the constants
$\op{\top,\bot}$, modulo the axioms of Boolean algebras. We denote the
unit of this monad by $\eta: \Id \to \BoolCombis$. Each of its
components $\eta_X: X \to \BoolCombis X$ is the universal map of the
free Boolean algebra on $X$, which embeds generators $x \in X$ into
$\BoolCombis X$. We note that $\BoolCombis Q_\Names$ is a nominal set,
with the action of $G$ given by $\pi\cdot t=\BoolCombis\pi(t)$ for
$t\in\BoolCombis Q_\Names$ where we abuse $\pi$ to denote the
associated bijection $Q_\Names\to Q_\Names$.

\begin{definition}
  Given an A1-RA as above, we define an function
  $\delta: \A × Q_\A \to \BoolCombis Q_\A$ recursively by
\[
\delta\big(c,(q,r)\big) = \begin{cases}
    \delta(c,(p_1,r)) \diamond \delta (c,(p_2,r)
        & \text{if }\op{step}(q) = p_1 \diamond p_2, \diamond\in\{\vee,\wedge\}
\\
    \delta(c,(p_1,r))
        & \text{if }\op{step}(q) = p_1 \branchleft \uparrow \branchright p_2\text{ and }c = r
\\
    \delta(c,(p_2,r))
        & \text{if }\op{step}(q) = p_1 \branchleft \uparrow \branchright p_2\text{ and }c \neq r 
\\
    \op{step}(q) & \text{if }\op{step}(q) \in \{\top,\bot\}
\\
    \delta(c,(p, c)) & \text{if }\op{step}(q) = \ \downarrow p
\\
    (p,r) & \text{if }\op{step}(q) = \X p
\\
    \neg (p,r) & \text{if }\op{step}(q) = \bar \X p
\end{cases}
\]
This recursion terminates because the height of states on the
right-hand side is strictly smaller than $\gamma(q)$.

The \emph{restriction} of $\delta$ to $A \subseteq \A$ is defined by 
\[
  \delta_A = (\A \times Q_A 
  \xrightarrow{\A \times i_A} 
  \A \times Q_\A 
  \xrightarrow{\delta}
  \BoolCombis Q_\A
  \xrightarrow{\BoolCombis u_A}
  \BoolCombis Q_A).
\]
Note that $\delta_\A = \delta$.
\end{definition}
It is easy to see that $\delta$ is an equivariant function.

We now consider the curried version $t_A: Q_A \to (\BoolCombis Q_\A)^\A$
of $\delta_A$ and pair it with the constant map $o_A = \bot!: Q_A \to 2$,
where $2 = \{\bot, \top\}$. Since $2 \times (\BoolCombis Q_A)^\A$
carries the obvious componentwise structure of a Boolean algebra, we
can uniquely extend $\langle o_A , t_A\rangle: Q_A \to 2 \times
(\BoolCombis Q_A)^\A$ to a Boolean algebra morphism
\begin{equation}\label{eq:dfa}
  \BoolCombis Q_A \to 2 \times (\BoolCombis Q_A)^\A. 
\end{equation}
We write $\op{acc}_A: \BoolCombis Q_A \to 2$ for the left-hand
component\lsnote{Since Boolean combinations can contain negation, I
  don't think that $\op{acc}_A$ is constantly $\bot$} and
$\delta^\dagger_A: \A \times \BoolCombis Q_A \to \BoolCombis Q_A$ for
the uncurrying of the right-hand one.  As announced, $\op{acc}_A$ and
$\delta_A$ determine the final states and the next state function of a
deterministic automaton on $\BoolCombis Q_A$.

\begin{rem}
  The formation of~\eqref{eq:dfa} from $\langle o_A, t_A\rangle$ is an
  instance of the \emph{generalized powerset
    construction}~\cite{sbbr13}. Indeed,
  $(\BoolCombis, \langle o_A, t_A \rangle)$ is a coalgebra for the
  functor $FT$ on $\Set$, where $F = 2 \times (-)^\A$ is the type
  functor of deterministic automata considered as coalgebras and
  $T = \BoolCombis$ is the free boolean algebra monad. The functor $F$
  clearly lifts to the Eilenberg-Moore category of $\BoolCombis$
  (i.e.~the category of boolean algebras). Therefore, any coalgebra
  $X \to FTX$ uniquely extends to the coalgebra $TX \to FTX$ for the
  lifting of $F$ to boolean algebras, viz.~a deterministic automaton
  on the state set $TX$.
\end{rem}  

\takeout{ 
Note that $2 = \{\bot,\top\}$ has a canonical $\BoolCombis$-algebra structure $\op{val}:
\BoolCombis 2 \to 2$ given by evaluation.
The intuitive meaning of a leaf $\ell \in Q_\A = Q×(\A+1)$ in a Boolean expression $t\in
\BoolCombis Q_\A$ is ``go to the next input symbol and transfer to the
specified configuration''. According to the definition of $\X$, this only
succeeds if there is an input symbol, and we define acceptance of a
complex configuration (i.e.~something in $\BoolCombis Q_\A$) by:
\begin{equation}
    \op{acc}_A := \op{val}\cdot \BoolCombis(\bot!): \BoolCombis Q_A\to 2
    \label{eq:accDefinition}
\end{equation}
Define the restriction of $\delta$ to $A\subseteq \A$ by $\delta_A(c) = \BoolCombis
u_A\cdot \delta(c)\cdot i_A$.
We have
\[
\begin{tikzcd}
    \A×\BoolCombis  Q_A
    \rar{s_{\A, Q_A}}
    \arrow[shiftarr={yshift=6mm}]{rrr}{\delta^*_A :=}
    &
    \BoolCombis (\A× Q_A)
    \rar{\BoolCombis \delta_A}
    &
    \BoolCombis \BoolCombis( Q_A)
    \rar{\mu}
    &
    \BoolCombis  Q_A.
\end{tikzcd}
\]}

\noindent As usual we extend any $\delta^\dagger_A$ to words by
induction, which yields
\[
  \delta^*_A: \A^*×\BoolCombis  Q_A \to \BoolCombis Q_A.
\] 
From now on we shall abuse notation further and denote by $\delta_A$,
$\delta_A^\dagger$ and $\delta_A^*$ also their curried versions with
codomain $\BoolCombis Q_A^{\BoolCombis Q_A}$. 

It is now straightforward to work out the following lemma (note that
A1-RA as defined in~\cite{DemriLazic09} do not handle the empty word
at all):

\begin{lem}\label{lem:acc}
  A given A1-RA accepts a non-empty word $w \in \A^+$ iff 
  \[
    \begin{tikzcd}
      \op{acc}_\A\cdot \delta^*_\A(w)\cdot \eta(q_0,\star) = \top.
    \end{tikzcd}
  \]
\end{lem}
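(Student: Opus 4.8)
The plan is to prove a statement slightly more general than the lemma and then instantiate it. Fix $A=\A$ throughout, so that $i_\A$ and $u_\A$ are identities and $\delta_\A=\delta$, $\delta^\dagger_\A$, $\delta^*_\A$ are the un‑restricted maps. For a formula $t\in\BoolCombis Q_\A$ put
$L(t):=\{\,w\in\A^+\mid \op{acc}_\A(\delta^*_\A(w,t))=\top\,\}$.
The goal is to show $L((q,r))=\{\,w\in\A^+\mid (q,r)\text{ accepts }w\,\}$ for every generator $(q,r)\in Q_\A$; since $\eta(q_0,\star)=(q_0,\star)$ is itself a generator, taking $(q,r)=(q_0,\star)$ then gives the lemma, because $\op{acc}_\A\cdot\delta^*_\A(w)\cdot\eta(q_0,\star)=\top$ is literally $w\in L((q_0,\star))$.

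First I would record the structural reduction that drives everything. Since $\langle o_\A,t_\A\rangle$ was extended to a Boolean‑algebra morphism in \eqref{eq:dfa}, both $\op{acc}_\A\colon\BoolCombis Q_\A\to 2$ and each $\delta^\dagger_\A(c,-)\colon\BoolCombis Q_\A\to\BoolCombis Q_\A$ are Boolean morphisms; hence so is $\delta^*_\A(w,-)$ for each fixed $w$, and therefore $t\mapsto L(t)$ is a Boolean‑algebra homomorphism $\BoolCombis Q_\A\to\pow(\A^+)$, i.e. $L(t_1\vee t_2)=L(t_1)\cup L(t_2)$, $L(t_1\wedge t_2)=L(t_1)\cap L(t_2)$, $L(\neg t)=\A^+\setminus L(t)$, $L(\top)=\A^+$, and $L(\bot)=\emptyset$. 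This is exactly what lets me evaluate $L$ on a non‑generator formula such as $\delta(c,(q,r))$ compositionally in terms of $L$ of its constituent generators, which is the crux of the inductive step below.

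For the generator claim I would induct on $|w|$, with an inner induction on the height $\gamma(q)$ (the well‑founded order that makes the recursion defining $\delta$ terminate). Using the convention $\delta^*_\A(cw',t)=\delta^*_\A(w',\delta^\dagger_\A(c,t))$ and $\delta^\dagger_\A(c,(q,r))=\delta(c,(q,r))$ on generators, a nonempty $w=cw'$ satisfies $w\in L((q,r))$ iff $w'\in L(\delta(c,(q,r)))$, where for $w'=\epsilon$ membership means $\op{acc}_\A(\delta(c,(q,r)))=\top$. I then expand $\delta(c,(q,r))$ by cases on $\op{step}(q)$. In the non‑moving cases $\vee,\wedge$, the register test $\branchleft\uparrow\branchright$, the store $\downarrow$, and $\top/\bot$, the letter $c$ is unchanged and the target locations have strictly smaller height, so the inner hypothesis applies and the Boolean/register bookkeeping matches the operational step verbatim (combining the subcases via the homomorphism property). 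In the moving cases I would use the outer hypothesis on the shorter word $w'$: for $\X p$ we get $\delta(c,(q,r))=(p,r)$, so $cw'\in L((q,r))$ iff $w'\in L((p,r))$, which (and, when $w'=\epsilon$, via $\op{acc}_\A(p,r)=o_\A(p,r)=\bot$) says that a next symbol exists and $(p,r)$ accepts it — matching ``move right, reject at end''; for $\bar\X p$ we get $\delta(c,(q,r))=\neg(p,r)$, and $L(\neg(p,r))=\A^+\setminus L((p,r))$ together with $\op{acc}_\A(\neg(p,r))=\top$ realises the dual ``move right, accept at end'' behaviour.

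The main obstacle, and the place where the construction earns its keep, is precisely this reconciliation of the two modal cases with the end‑of‑word convention: the constant‑$\bot$ valuation $o_\A$ turns an unfulfilled ``look at the next symbol'' obligation into rejection, while the explicit negation in the $\bar\X$‑clause flips this to acceptance, and one must check that these local choices compose correctly through arbitrarily deep Boolean nesting. This is exactly what the homomorphism reduction of the second paragraph buys us: the negation and the end‑of‑word valuation are handled once, at the level of $L(-)$, rather than re‑verified inside every run tree. Once the generator identity is established, instantiating at $(q_0,\star)$ completes the proof.
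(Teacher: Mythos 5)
The paper offers no proof of this lemma at all --- it is introduced with ``it is now straightforward to work out the following lemma'' --- so there is nothing official to compare against. Your overall strategy is surely the intended one: strengthen the statement to arbitrary generators $(q,r)$, observe that $\op{acc}_\A$ and each $\delta^\dagger_\A(c,-)$ are Boolean-algebra morphisms so that $t\mapsto L(t)$ is a homomorphism into $\pow(\A^+)$, and then run an outer induction on $|w|$ with an inner induction on the height $\gamma(q)$. The Boolean cases, the register test, the store, $\top/\bot$, and the $\X$ case (including the end-of-word subcase via $\op{acc}_\A(\eta(p,r))=o_\A(p,r)=\bot$) all check out as you describe.

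There is, however, a genuine gap in the $\bar\X$ case --- precisely the case you single out as ``the main obstacle'' and then dispose of in one clause. Your own homomorphism computation gives $L(\neg(p,r))=\A^+\setminus L((p,r))$, so for a \emph{non-empty} remainder $w'$ your inductive step yields: $cw'\in L((q,r))$ iff $(p,r)$ does \emph{not} accept $w'$. But the operational semantics the paper assigns to $\bar\X p$ is weak next: when a next symbol exists, the automaton transfers to $p$ exactly as for $\X p$, so $cw'$ should be accepted iff $(p,r)$ \emph{does} accept $w'$. The clause $\delta(c,(q,r))=\neg(p,r)$ matches the operational behaviour only at the end of the word, where $\op{acc}_\A(\neg(p,r))=\top$ is correct; for any letter still to be read, the negation propagates through $\delta^\dagger_\A$ and flips the acceptance of the entire continuation, which is not what ``move right and transfer to $p$'' does. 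So the inductive step as written does not go through here, and saying that the homomorphism reduction ``handles negation once and for all'' does not help: the issue is not the bookkeeping of $\neg$ but the mismatch between the formula $\neg(p,r)$ and the intended weak-next behaviour (what one would need is a formula meaning ``no next symbol, or $(p,r)$ accepts the rest'', which is not a Boolean combination of the given generators). A correct proof must either verify this subcase honestly --- and will find the discrepancy --- or restrict attention to runs in which $\bar\X$ is only ever taken at the last position; as it stands, the step would fail.
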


\begin{lem}\label{lem:freshIsFresh}
    For any $c,d\not \in A$, $\delta_A(c) = \delta_A(d)$.
\end{lem}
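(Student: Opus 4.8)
The plan is to bypass the recursion defining $\delta$ entirely and obtain the equality from the equivariance of $\delta$ (noted just above the lemma) together with a single elementary observation about $u_A$. Fix $c,d\notin A$ and let $\sigma=(c\,d)\in G$ be the transposition swapping $c$ and $d$. Since $c,d\notin A$, the permutation $\sigma$ fixes every element of $A$ pointwise, and it of course leaves the empty-register symbol $\star$ untouched; hence $\sigma$ acts as the identity on configurations in $Q_A$, i.e.\ $\sigma\cdot(q,r)=(q,\sigma r)=(q,r)$ whenever $r\in A+\{\star\}$.

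First I would unfold the definition: for $(q,r)\in Q_A$ we have $\delta_A(c)(q,r)=\BoolCombis u_A\bigl(\delta(c,(q,r))\bigr)$, and likewise for $d$. Applying equivariance of $\delta$ along $\sigma$, and using $\sigma c=d$ together with $\sigma r=r$, gives
\[
  \delta(d,(q,r))=\delta(\sigma c,(q,\sigma r))=\BoolCombis\sigma\bigl(\delta(c,(q,r))\bigr).
\]
Post-composing with $\BoolCombis u_A$ and invoking functoriality of $\BoolCombis$ then yields
\[
  \delta_A(d)(q,r)=\BoolCombis u_A\bigl(\BoolCombis\sigma(\delta(c,(q,r)))\bigr)=\BoolCombis(u_A\circ\sigma)\bigl(\delta(c,(q,r))\bigr),
\]
so the whole statement reduces to the identity $u_A\circ\sigma=u_A$ of maps $Q_\A\to Q_A$.

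The crux — and the only place where $c,d\notin A$ is genuinely used — is this identity $u_A\circ\sigma=u_A$, which I would check pointwise on $(q,a)$ by a short case split on the register content $a$: if $a\in A$ then $\sigma a=a$; if $a\in\{c,d\}$ then $\sigma a$ is again outside $A$, so both $u_A(q,\sigma a)$ and $u_A(q,a)$ collapse to $(q,\star)$; and the remaining cases ($a=\star$, or $a$ any other name) are immediate. Granting this, functoriality gives $\BoolCombis(u_A\circ\sigma)=\BoolCombis u_A$, the two displays coincide, and therefore $\delta_A(c)(q,r)=\delta_A(d)(q,r)$ for all $(q,r)\in Q_A$, i.e.\ $\delta_A(c)=\delta_A(d)$. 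I expect no serious obstacle in this route; the only conceptual subtlety is that a $\downarrow$-step may write the fresh input letter into a register, after which a later branch test $\branchleft\uparrow\branchright$ can \emph{succeed} rather than fail — the equivariance argument handles this automatically. By contrast, a direct induction on the height $\gamma(q)$ would have to confront exactly this point, since the $\downarrow p$ case sends the recursion into a configuration $(p,c)$ lying outside $Q_A$; that is where I would expect the main difficulty to concentrate, and it is precisely what motivates preferring the equivariance proof.
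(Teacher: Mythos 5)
Your proof is correct and is essentially identical to the paper's: both arguments reduce the claim to the equivariance of $\delta$ under the transposition $(c\,d)$ together with the identity $u_A\circ(c\,d)=u_A$, which holds because $c,d\notin A$. The concluding remarks about why a height induction would be more delicate are accurate but not needed.
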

\begin{proof}
  Since $c,d\not\in A$, we have $u_A=u_A\circ \pi$ where
  $\pi:Q_\Names\to Q_\Names$ denotes the bijection associated to the
  transposition $(c\,d)$. Recall that the action of $G$ on
  $\BoolCombis Q_\Names$ yields $(c\,d)\cdot
  t=\BoolCombis\pi(t)$. Thus, for any $(q,r)\in Q_A$, we obtain
  \begin{align}
    & \delta_A(c,(q,r)) \tag*{}\\
    & = \BoolCombis u_A(\delta(c,(q,r)))\tag*{}\\
    & = \BoolCombis u_A\BoolCombis\pi(\delta(c,(q,r)))\tag*{}\\
    & = \BoolCombis u_A(\delta((c\,d)\cdot (c,(q,r)))
    && \tag{$\delta$ equivariant}\\
    & = \BoolCombis u_A(\delta(d,(q,r))) \tag*{}\\
    & = \delta_A(d,(q,r)).
    \tag*{\qed}
  \end{align}
%
%
%
%
\end{proof}
\begin{corollary}\label{cor:freshness}
  Let $w,v \in \Names^n$ be words that differ only in letters that
  appear only once in them, i.e.~for all $k < n$ if $w_k$
  (resp.~$v_k$) appears again in $w$ (resp.~$v$) then $w_k = v_k$. Let
  $A\subseteq \A$ be the letters that occur more than once in $w$
  and $v$. Then we have $\delta^*_A(w) = \delta^*_A(v)$.
\end{corollary}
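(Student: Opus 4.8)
The plan is to reduce the statement to the single-letter equation of Lemma~\ref{lem:freshIsFresh} and then lift it along the word by induction on $n$. First I would reformulate the hypothesis in a purely positionwise manner. I claim that the set $A$ is the same whether computed from $w$ or from $v$, and that at every position $k<n$ one has either $w_k=v_k$, or else both $w_k\notin A$ and $v_k\notin A$. Indeed, if a letter $a$ occurs more than once in $w$, then at each of its positions $k$ the letter $w_k=a$ ``appears again'', so $w_k=v_k$ by hypothesis; hence $a$ sits at exactly those positions in $v$ as well and so occurs more than once in $v$ too, and by symmetry the two sets of repeated letters coincide. Consequently a position $k$ with $w_k\neq v_k$ must carry letters that each occur only once in their respective word, i.e.\ $w_k,v_k\notin A$.

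Next I would upgrade Lemma~\ref{lem:freshIsFresh} from $\delta_A$ to the single-letter Boolean-algebra endomorphisms $\delta^\dagger_A(c)\colon\BoolCombis Q_A\to\BoolCombis Q_A$. Since $\delta^\dagger_A(c)$ is by construction the unique Boolean-algebra morphism extending $\delta_A(c)\colon Q_A\to\BoolCombis Q_A$ along the generators, and a Boolean-algebra morphism is determined by its action on generators, the equality $\delta_A(c)=\delta_A(d)$ for $c,d\notin A$ (Lemma~\ref{lem:freshIsFresh}) immediately yields $\delta^\dagger_A(c)=\delta^\dagger_A(d)$. Combining this with the positionwise dichotomy from the previous step gives $\delta^\dagger_A(w_k)=\delta^\dagger_A(v_k)$ for every $k<n$: the two letters either coincide, or lie outside $A$ and are identified by the lemma.

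Finally I would conclude by induction on $n$, keeping $A$ fixed throughout. For $n=0$ both sides are the identity on $\BoolCombis Q_A$. For the inductive step I peel off one letter: recalling that $\delta^*_A$ is defined by extending $\delta^\dagger_A$ to words letterwise, $\delta^*_A(w)$ is the composite of the maps $\delta^\dagger_A(w_0),\dots,\delta^\dagger_A(w_{n-1})$ (in whichever order the definition of $\delta^*_A$ fixes). The positionwise condition restricts to the remaining letters with the \emph{same} $A$, so the induction hypothesis applies to the shorter words, while the factor corresponding to the peeled-off letter is equal on the two sides by the previous paragraph; hence the composites agree and $\delta^*_A(w)=\delta^*_A(v)$. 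The only mildly delicate point is the bookkeeping that $A$ really is common to $w$ and $v$, so that differing positions are genuinely outside $A$; once that is in place everything reduces to the determined-by-generators and compose-equal-factors argument, so I do not expect a serious obstacle.
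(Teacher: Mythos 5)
Your proposal is correct and follows essentially the same route as the paper, whose proof of this corollary is just the one-liner ``this follows from Lemma~\ref{lem:freshIsFresh} by induction on $n$.'' You merely make explicit the two details the paper leaves implicit: the combinatorial observation that $A$ is common to $w$ and $v$ so that differing positions carry letters outside $A$, and the lifting of $\delta_A(c)=\delta_A(d)$ to $\delta^\dagger_A(c)=\delta^\dagger_A(d)$ via uniqueness of the Boolean-algebra extension on generators.
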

\begin{proof}
    This follows from Lemma~\ref{lem:freshIsFresh} by induction on $n$.
\qed
\end{proof}
This means one can directly discard all those atoms of a word that appear only
once and so we only need to keep those atoms in the register that appear at
least twice. \smnote[inline]{It's unclear what this is supposed to
  mean precisely. I think you mean the following corollary; @Thorsten can you insert its proof?}

Before coming to the main result, we show that we preserve the acceptance when
restricting the register contents to those letters appearing twice in a word.

\begin{lem} \label{lem:dropFreshAtoms} For all
  $A\subseteq B\subseteq \A$, $c\in A\cup(\A\setminus B)$, the diagram
    \[
    \begin{tikzcd}
        \BoolCombis Q_B
        \arrow{d}[left]{\BoolCombis(u_A\cdot i_B)}
        \arrow{r}{\delta^\dagger_B(c)}
        & \BoolCombis Q_B
        \arrow{d}[right]{\BoolCombis(u_A\cdot i_B)}
        \\
        \BoolCombis Q_A
        \arrow{r}{\delta^\dagger_A(c)}
        & \BoolCombis Q_A
    \end{tikzcd}
    \]
    commutes.
\end{lem}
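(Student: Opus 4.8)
The plan is to exploit that every edge of the square is a homomorphism of Boolean algebras, so that commutativity can be checked on the generators $\eta_{Q_B}(q)$ of the free Boolean algebra $\BoolCombis Q_B$. Indeed, $\BoolCombis(u_A\cdot i_B)$ is the action of the monad $\BoolCombis$ on a function and hence a Boolean homomorphism, and each of $\delta^\dagger_B(c)$ and $\delta^\dagger_A(c)$ is a Boolean homomorphism, being the evaluation at $c\in\A$ of the right-hand component of the Boolean homomorphism~\eqref{eq:dfa} (for $B$, resp.\ $A$). Consequently both composites in the square are Boolean homomorphisms $\BoolCombis Q_B\to\BoolCombis Q_A$, and two such maps out of a free Boolean algebra agree as soon as they agree on generators.

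On a generator $\eta_{Q_B}(q)$, naturality of $\eta$ gives $\BoolCombis(u_A\cdot i_B)(\eta_{Q_B}(q))=\eta_{Q_A}((u_A\cdot i_B)(q))$, and the defining property $\delta^\dagger_C(c)\circ\eta_{Q_C}=\delta_C(c,-)$ of the extension (for $C\in\{A,B\}$) lets me rewrite both composites. Unfolding $\delta_C(c,-)=\BoolCombis u_C\circ\delta(c,-)\circ i_C$ reduces the desired identity to
\[
  \BoolCombis u_A\big(\delta(c,(i_A\cdot u_A)(i_B(q)))\big)
  =\BoolCombis(u_A\cdot i_B\cdot u_B)\big(\delta(c,i_B(q))\big).
\]
The right-hand side collapses to $\BoolCombis u_A(\delta(c,i_B(q)))$ by the identity $u_A=u_A\cdot i_B\cdot u_B$ from~\eqref{eq:AsubB}. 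Writing $x=i_B(q)\in Q_\A$, it thus remains to prove
\[
  \BoolCombis u_A\big(\delta(c,(i_A\cdot u_A)(x))\big)=\BoolCombis u_A\big(\delta(c,x)\big).
\]

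Only the case $x=(p,a)$ with $a\in B\setminus A$ is nontrivial, since $(i_A\cdot u_A)(x)=x$ when $a\in A\cup\{\star\}$, and $(i_A\cdot u_A)(x)=(p,\star)$ when $a\in B\setminus A$. I would isolate, and prove by induction on the height $\gamma(p)$, the statement that $\BoolCombis u_A(\delta(c,(p,r)))=\BoolCombis u_A(\delta(c,(p,r')))$ whenever $r=r'$, or both $r,r'\notin A$ and $r\neq c\neq r'$. The side condition $c\in A\cup(\A\setminus B)$ is used exactly to check that the pair $(\star,a)$ we need satisfies this hypothesis: $c\neq\star$ always, and $c\neq a$ since $a\in B\setminus A$ forces $a\notin A\cup(\A\setminus B)$ whereas $c\in A\cup(\A\setminus B)$. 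Within the induction, the register content affects $\delta$ only through the test $c=r$ in the branching construct $p_1\branchleft\uparrow\branchright p_2$; under the hypothesis we have $c=r\iff c=r'$, so the same branch is chosen and the hypothesis is preserved. The remaining cases are immediate: $\top,\bot$ are register-independent; $\vee,\wedge$ pass through the homomorphism $\BoolCombis u_A$ and invoke the induction hypothesis at strictly smaller height; $\downarrow p$ resets the register to $c$ in both arguments; and for $\X p$, $\bar\X p$ the values are the leaves $\eta_{Q_\A}(p,r)$ and $\neg\eta_{Q_\A}(p,r)$, whose images under $\BoolCombis u_A$ depend only on $u_A(p,r)=u_A(p,r')$.

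The main obstacle is this induction, and specifically its branching case: one must verify that replacing a register content $a\in B\setminus A$ by the empty marker $\star$ never changes which branch of $p_1\branchleft\uparrow\branchright p_2$ is taken, which is precisely the point at which the side condition $c\in A\cup(\A\setminus B)$ is indispensable. Everything else — naturality of $\eta$, the homomorphism property of all four edges, and the identity~\eqref{eq:AsubB} — is routine bookkeeping.
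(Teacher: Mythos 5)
Your proof is correct and follows essentially the same route as the paper: reduce to generators via universality of $\eta$, collapse the outer maps using the identity $u_A=u_A\cdot i_B\cdot u_B$ from~\eqref{eq:AsubB}, and then establish $\BoolCombis u_A(\delta(c,(q,r)))=\BoolCombis u_A(\delta(c,(q,\star)))$ for $r\in B\setminus A$ by induction on the height $\gamma(q)$, with the side condition on $c$ used exactly where you use it, namely to force the $p_2$-branch in the $p_1\branchleft\uparrow\branchright p_2$ case. Your symmetric two-register formulation of the inductive statement is a harmless repackaging of the paper's comparison against $\star$.
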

\begin{proof}
  By universality of $\eta_X: X\to \BoolCombis X$, it suffices to show
  that
\[
    \BoolCombis(u_A\cdot i_B)\cdot \delta_B(c,(q,r))
    = \delta_A(c) \cdot u_A\cdot i_B (q,r)
\]
for all $q\in Q$, $r\in B+\{\star\}$. By the definition of $\delta_A$,
$\delta_B$ and Equation~\eqref{eq:AsubB}, this reduces to the equation
\begin{equation*}
  \BoolCombis(u_A)(\delta(c,(q,r)))=\BoolCombis(u_A)(\delta(c,u_A(q,r))).
\end{equation*}
This is trivial in case $r\in A+\{\star\}$, so assume from now on that
$r\in B\setminus A$; it then remains to show that
\begin{equation}\label{eq:no-branch-on-empty}
  \BoolCombis(u_A)(\delta(c,(q,r)))=\BoolCombis(u_A)(\delta(c,(q,\star))).
\end{equation}
Intuitively, this means that the automaton model does not foresee
branching on emptyness of the register. We
prove~\eqref{eq:no-branch-on-empty} by induction on the height
$\gamma(q)$, distinguishing cases on the form of $\op{step}(q)$. The
Boolean cases are trivial. The remaining cases are as follows.
\begin{itemize}
\item For $\op{step}(q) = p_1 \branchleft \uparrow \branchright p_2$,
  note that our assumptions $c\in A\cup(\Names\setminus B)$ and
  $r\in B\setminus A$ imply that $c\neq r$, so we have
  \begin{align*}
     \BoolCombis u_A(\delta_B(c,(q,r))) 
    & = \BoolCombis u_A(\delta_B(c,(p_2,r)))\\
    & = \BoolCombis u_A(\delta_B(c,(p_2,\star))) 
      && \by{IH}\\
    & = \BoolCombis u_A(\delta_B(c,(q,\star))).
\end{align*}
\item For $\op{step}(q)=\mathord{\downarrow} p$, we have
\begin{align*}
  \BoolCombis u_A(\delta(c,(q,r)) & = \BoolCombis u_A(\delta(c,(p,c))) \\
  & = \BoolCombis u_A(\delta(c,(p,\star)))
    && \by{IH}\\
  & = \BoolCombis u_A(\delta(c,(q,\star)))  
\end{align*}
\item For $\op{step}(q) = \X p$, we have (using $r\notin A$)
  \begin{equation*}
    \BoolCombis u_A(\delta(c,(q,r))  = \BoolCombis u_A(p,r)
    = (p,\star)= \BoolCombis u_A(p,\star) = \BoolCombis u_A(\delta(c,(q,\star)).
  \end{equation*}
The case for $\op{step}(q) = \bar \X p$ is similar. \qed
\end{itemize}
\end{proof}
\begin{lem} \label{lem:onlyUsedAtoms}
    For $w\in \A^*$, $B\supseteq \supp(w)$, the diagram
    \[
    \begin{tikzcd}
        \BoolCombis Q_B
        \arrow[hook]{d}[left]{\BoolCombis i_B}
        \arrow{r}{\delta^*_B(w)}
        & \BoolCombis Q_B
        \arrow[hook]{d}[right]{\BoolCombis i_B}
        \\
        \BoolCombis Q_\A
        \arrow{r}{\delta^*_\A(w)}
        & \BoolCombis Q_\A
    \end{tikzcd}
    \]
    commutes.
\end{lem}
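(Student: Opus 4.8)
The plan is to reduce the word-level statement to a single-letter version and then induct on the length of $w$. First I would establish the single-step claim: for every letter $c \in B$,
\[
\BoolCombis i_B \cdot \delta^\dagger_B(c) = \delta^\dagger_\A(c) \cdot \BoolCombis i_B \colon \BoolCombis Q_B \to \BoolCombis Q_\A.
\]
Both $\delta^\dagger_B(c)$ and $\delta^\dagger_\A(c)$ are Boolean algebra homomorphisms---they are the letter-$c$ components of the Boolean algebra morphisms of~\eqref{eq:dfa}, and the Boolean structure on the power $(\BoolCombis Q_A)^\A$ is pointwise---and $\BoolCombis i_B$ is a Boolean algebra homomorphism since $\BoolCombis$ is the free Boolean algebra monad. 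Hence both composites are Boolean algebra homomorphisms $\BoolCombis Q_B \to \BoolCombis Q_\A$, and by the universal property of $\eta_{Q_B}$ it suffices to check equality on generators $\eta(q,r)$ with $(q,r)\in Q_B$, i.e.\ $r \in B + \{\star\}$. Unfolding the definitions, the left-hand side sends such a generator to $\BoolCombis i_B(\delta_B(c,(q,r))) = \BoolCombis(i_B\cdot u_B)(\delta(c, i_B(q,r)))$, while the right-hand side yields $\delta_\A(c, i_B(q,r)) = \delta(c, i_B(q,r))$ (using $\delta_\A = \delta$). So the claim reduces to showing that $\BoolCombis(i_B \cdot u_B)$ fixes $\delta(c, i_B(q,r))$.

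This is the crux, and I expect it to be the only nonroutine step. I would prove it by showing, via induction on the height $\gamma(q)$ along the recursion defining $\delta$, that every leaf $(p,s) \in Q_\A$ occurring in $\delta(c, i_B(q,r))$ has register content $s \in \{r, c\}$: inspecting the clauses, the recursion keeps the current register content in every case except $\op{step}(q)=\mathord{\downarrow} p$, which overwrites it with the input letter $c$, and the leaves produced by $\X p$ and $\bar\X p$ carry exactly the current register content. Thus no register value other than the ambient input letter $c$ or the initial content $r$ is ever created. Since $c \in B$ and $r \in B + \{\star\}$, every such $s$ lies in $B + \{\star\}$, so $i_B \cdot u_B$ acts as the identity on each leaf and $\BoolCombis(i_B \cdot u_B)$ therefore fixes the whole Boolean combination. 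Note that $c \in B$---guaranteed below because $c$ is a letter of $w$---is essential here: a letter outside $B$ written by $\mathord{\downarrow}$ would be collapsed to $\star$.

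Finally I would prove the lemma by induction on the length of $w$. The base case $w = \varepsilon$ is immediate, as $\delta^*_B(\varepsilon)$ and $\delta^*_\A(\varepsilon)$ are both the identity. For the step write $w = cw'$; then $c \in \supp(w) \subseteq B$ and $\supp(w') \subseteq \supp(w) \subseteq B$, so the induction hypothesis applies to $w'$ (with the same $B$) and the single-step claim applies to $c$. Using the inductive definition $\delta^*_A(cw') = \delta^*_A(w')\cdot\delta^\dagger_A(c)$, I would then compute
\[
\delta^*_\A(cw')\cdot \BoolCombis i_B = \delta^*_\A(w')\cdot\delta^\dagger_\A(c)\cdot \BoolCombis i_B = \delta^*_\A(w')\cdot \BoolCombis i_B\cdot\delta^\dagger_B(c) = \BoolCombis i_B\cdot\delta^*_B(w')\cdot\delta^\dagger_B(c) = \BoolCombis i_B\cdot\delta^*_B(cw'),
\]
applying in turn the single-step claim and then the induction hypothesis, which closes the induction and proves the lemma.
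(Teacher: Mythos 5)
Your proof is correct, and its overall architecture coincides with the paper's: reduce to the single-letter commutation $\delta^\dagger_\A(c)\cdot\BoolCombis i_B = \BoolCombis i_B\cdot\delta^\dagger_B(c)$ via the universal property of $\eta_{Q_B}$ (both sides being Boolean algebra homomorphisms), identify the crux as the statement that $\BoolCombis(i_B\cdot u_B)$ fixes $\delta(c,i_B(q,r))$, and then close by induction on the length of $w$. The one place where you genuinely diverge is in how you establish that crux. The paper argues in one line from the equivariance of $\delta$ (noted right after its definition): since $c\in B$ and $\supp(q,r)\subseteq B$, equivariance gives $\supp(\delta(c,(q,r)))\subseteq B$, so the Boolean term $\delta(c,(q,r))$ mentions only variables from $Q_B$ and is therefore fixed by $\BoolCombis(i_B\cdot u_B)$. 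You instead run a structural induction on the height $\gamma(q)$ through the recursive clauses defining $\delta$, tracking register contents and concluding that every leaf carries register content in $\{r,c\}$. Your argument is more elementary (it does not presuppose the equivariance of $\delta$ or the fact that the support of an element of $\BoolCombis Q_\Names$ pins down the variables it depends on) and proves a slightly sharper statement about which registers can appear; the paper's argument is shorter and reuses machinery already in place. Both are sound, and the remainder of your induction on $w$ matches the paper's.
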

\begin{proof}
  Let $c \in B$, $q\in Q$, and $r\in B + \{\star\}$. Since
  $\delta_\Names$ is equivariant, we have
  $\supp(\delta_\Names(c,(q,r)))\subseteq B$, i.e.\ as a Boolean
  algebra term over variables from $Q_\Names$,
  $\delta_\Names(c,(q,r))$ depends only on the variables from
  $Q_B$. Therefore,
  \begin{equation}\label{eq:toshow}
    \delta_\A(c, (q,r)) = \BoolCombis i_B\cdot \BoolCombis u_B \cdot \delta_\A(c, (q,r)). 
  \end{equation}
  This implies that 
  \begin{align*}
    \delta_\A^\dagger(c) \cdot \BoolCombis i_B \cdot \eta_{Q_B} 
    & = 
    \delta_\A(c) \cdot i_B 
    \\
    & \overset{\mathclap{\eqref{eq:toshow}}}{=} 
    \BoolCombis i_B\cdot \BoolCombis u_B \cdot \delta_\A(c) \cdot i_B
    \\
    & = 
    \BoolCombis i_B\cdot \delta_B(c) 
    \\
    & = 
    \BoolCombis i_B \cdot \delta_B^\dagger(c) \cdot \eta_{Q_B}.
  \end{align*}
  Using the universal property of $\eta_{Q_B}$, we conclude that
  $\delta_\A^\dagger(c) \cdot \BoolCombis i_B = \BoolCombis i_B \cdot
  \delta_B^\dagger(c)$, from which the main claim follows by induction
  on $w$. \qed
\end{proof}
In the following, we see that for certain words $w$, $\delta^*_A(w)$
behaves similarly to $\delta^*_A(c)$ for atoms $c\in \A$.
\begin{lem} \label{lem:finiteA1RA}
    If $A\subseteq \A$ and $w\in \A^*$  contains only letters in
    $A$ more than once,
    \[
    \begin{tikzcd}[row sep =5mm, column sep = 8mm]
    \BoolCombis  Q_\A
    \arrow{r}{\delta_\A^*(w)}
    & \BoolCombis  Q_\A
    \arrow{d}{\BoolCombis u_{A}}
    \\
    \BoolCombis  Q_{A}
    \arrow[hook]{u}{\BoolCombis i_A}
    \arrow{r}[swap]{\delta_{A}^*(w)}
    & \BoolCombis  Q_{A}
    \end{tikzcd}
    \]
    commutes.
\end{lem}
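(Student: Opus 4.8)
The plan is to prove the equivalent equational form $\BoolCombis u_A \circ \delta_\A^*(w) \circ \BoolCombis i_A = \delta_A^*(w)$ by induction on the length of $w$, keeping $A$ universally quantified over all sets such that every letter occurring more than once in $w$ lies in $A$. Since $u_A\circ i_A = \id_{Q_A}$, functoriality of $\BoolCombis$ gives $\BoolCombis u_A\circ\BoolCombis i_A = \id = \delta_A^*(\varepsilon)$, which settles the base case. For the inductive step I peel off the \emph{first} letter, writing $w = c w'$ and using the reading convention $\delta^*(cw') = \delta^*(w')\circ\delta^\dagger(c)$ at every level, and then split according to whether $c$ is stored ($c\in A$) or fresh ($c\notin A$).

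If $c\in A$, the argument is direct. The single-letter instance of Lemma~\ref{lem:onlyUsedAtoms} with $B=A$ gives $\delta_\A^\dagger(c)\circ\BoolCombis i_A = \BoolCombis i_A\circ\delta_A^\dagger(c)$, so
$\BoolCombis u_A\circ\delta_\A^*(w)\circ\BoolCombis i_A
= \BoolCombis u_A\circ\delta_\A^*(w')\circ\BoolCombis i_A\circ\delta_A^\dagger(c)
= \delta_A^*(w')\circ\delta_A^\dagger(c) = \delta_A^*(w)$,
where the middle step is the induction hypothesis for $(A,w')$. Note $w'$ is admissible for $A$, as its repeated letters are among those of $w$.

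The main obstacle is the fresh case $c\notin A$: reading $c$ and writing it into a register produces configurations whose register content $c$ lies outside $A$, so $\delta_\A^\dagger(c)$ cannot simply be commuted past $\BoolCombis i_A$. The key point is that $c\notin A$ forces $c$ to occur exactly once in $w$, whence $c\notin\supp(w')$. I therefore enlarge the register universe to $A':=A\cup\{c\}$ and first establish the statement for the pair $(A',w)$; because $c\in A'$ this is precisely an instance of the stored case treated above, applied at level $A'$ and reducing via the induction hypothesis to $(A',w')$. Using the two identities of~\eqref{eq:AsubB}, namely $i_A = i_{A'}\,(u_{A'} i_A)$ and $u_A = (u_A i_{A'})\,u_{A'}$, together with the statement for $(A',w)$, I rewrite
$\BoolCombis u_A\circ\delta_\A^*(w)\circ\BoolCombis i_A
= \BoolCombis(u_A i_{A'})\circ\big(\BoolCombis u_{A'}\circ\delta_\A^*(w)\circ\BoolCombis i_{A'}\big)\circ\BoolCombis(u_{A'} i_A)
= \BoolCombis(u_A i_{A'})\circ\delta_{A'}^*(w)\circ\BoolCombis(u_{A'} i_A)$.

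It then remains to \emph{drop} the single fresh name $c$ from the $A'$-level computation down to the $A$-level. Peeling $c$ off $\delta_{A'}^*(w)=\delta_{A'}^*(w')\circ\delta_{A'}^\dagger(c)$, I push $\BoolCombis(u_A i_{A'})$ rightwards through $\delta_{A'}^*(w')$ by iterating Lemma~\ref{lem:dropFreshAtoms} for the pair $A\subseteq A'$; this is legitimate because every letter of $w'$ differs from $c$ and hence lies in $A\cup(\A\setminus A')$, exactly the set of letters to which that lemma applies, yielding $\BoolCombis(u_A i_{A'})\circ\delta_{A'}^*(w') = \delta_A^*(w')\circ\BoolCombis(u_A i_{A'})$. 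Finally I verify the single-letter identity $\BoolCombis(u_A i_{A'})\circ\delta_{A'}^\dagger(c)\circ\BoolCombis(u_{A'} i_A) = \delta_A^\dagger(c)$ directly on generators $(q,r)\in Q_A$, where both sides collapse to $\BoolCombis u_A(\delta(c,(q,r)))$ via $u_A i_{A'} u_{A'} = u_A$, i.e.\ again to~\eqref{eq:AsubB}. Chaining these three steps gives $\delta_A^*(w')\circ\delta_A^\dagger(c)=\delta_A^*(w)$, completing the induction. I expect the fresh case to be the only delicate part, the subtlety being the bookkeeping of the three maps $i_{A'}$, $u_{A'}$, $u_A i_{A'}$ and the placement of the auxiliary statement for $(A',w)$; everything else reduces to the supplied commutation lemmas and Boolean-algebra functoriality of $\BoolCombis$.
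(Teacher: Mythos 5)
Your proof is correct, and it organises the induction differently from the paper's. The paper peels off the \emph{last} letter, writing the word as $wc$ with $W=\supp(w)$, and runs the entire inductive step through the single intermediate alphabet $A\cup W$ in one large diagram: Lemma~\ref{lem:onlyUsedAtoms} handles the prefix, the definition of $\delta^\dagger_{A\cup W}$ together with Lemma~\ref{lem:dropFreshAtoms} handles the final letter, and there is no case split because the condition $c\notin W\setminus A$ covers stored and fresh letters uniformly. You instead peel off the \emph{first} letter, split on whether it is stored ($c\in A$) or fresh ($c\notin A$), and in the fresh case pass through the minimal enlargement $A'=A\cup\{c\}$, first proving the claim for $(A',w)$ (which is legitimate, since that subargument only consumes the induction hypothesis at length $|w|-1$) and then descending to $A$ via Lemma~\ref{lem:dropFreshAtoms} for $w'$ plus a direct generator computation for the letter $c$ itself. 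That last computation is genuinely needed rather than another lemma application, since $c\notin A\cup(\A\setminus A')$ places it outside the scope of Lemma~\ref{lem:dropFreshAtoms}, and your verification via $u_A\, i_{A'}\, u_{A'}=u_A$ from~\eqref{eq:AsubB} is exactly right; likewise your observation that $c\notin A$ forces $c\notin\supp(w')$ is the point that makes the iterated application of Lemma~\ref{lem:dropFreshAtoms} to $w'$ valid. The trade-off is that the paper avoids the case distinction and the nested appeal to the statement at level $A'$ at the price of carrying the full support of the prefix as the intermediate alphabet, whereas you keep the intermediate alphabet as small as possible and isolate the role of the one fresh letter, at the price of a two-case analysis and one extra ad hoc identity. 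Both arguments rest on the same ingredients (Lemmas~\ref{lem:onlyUsedAtoms} and~\ref{lem:dropFreshAtoms} and Equation~\eqref{eq:AsubB}), so this is a correct alternative arrangement of the same underlying idea.
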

\begin{proof}Induction over the word length.  For $w =\varepsilon$ the
  claim is clear because $\delta^*_\A(\varepsilon)$ is just
  the identity and because $u_A\cdot i_A = \id_{ Q_A}$.  For a word
  $wc$, $c\in \A$, $w\in \A^*$, $W=\supp(w)$, $c\not\in W\setminus A$
  we have the commutativity of the diagram below:
    \begin{equation}
    \begin{tikzcd}[row sep =5mm, column sep = 13mm,baseline=(bot.base)]
    \BoolCombis  Q_\A
    \arrow{r}{\delta_\A^*(w)}
      \descto[yshift=1mm]{dr}{\text{Lemma~\ref{lem:onlyUsedAtoms}}}
    & \BoolCombis  Q_\A
      \arrow{r}{\delta^\dagger_\A(c)}
      \descto[xshift=4mm]{dr}{\text{Def $\delta^\dagger_{A\cup W}$}}
    &
      \BoolCombis  Q_\A
      \arrow{d}{\BoolCombis u_{A\cup W}}
      \arrow[shiftarr={xshift=14mm}]{ddd}{\BoolCombis u_{A}}
                [pos=0.33,xshift=-1mm,left]{\text{\eqref{eq:AsubB}}}
    \\
    \BoolCombis  Q_{A\cup W}
      \arrow[hook]{u}{\BoolCombis i_{A\cup W}}
      \arrow{r}{\delta^*_{A\cup W}(w)}
      \descto[yshift=1mm]{dr}{\text{Lemma~\ref{lem:onlyUsedAtoms}}}
      \arrow[hook]{d}[sloped,above,near end,xshift=1.3mm,yshift=2mm]{\BoolCombis i_{A\cup W}}
    & \BoolCombis  Q_{A\cup W}
      \arrow[hook]{u}[swap]{\BoolCombis i_{A\cup W}}
      \descto{ddr}{\text{Lemma~\ref{lem:dropFreshAtoms}}}
      \arrow{r}{\delta^\dagger_{A\cup W}(c)}
      \arrow[hook]{d}{\BoolCombis i_{A\cup W}}
    & \BoolCombis  Q_{A\cup W}
      \arrow[hook]{d}{\BoolCombis i_{A\cup W}}
    \\
    |[xshift=5mm]|
    \BoolCombis Q_\A
      \arrow{r}{\delta_\A^*(w)}
    & \BoolCombis  Q_\A
      \arrow{d}{\BoolCombis u_{A}}
    & \BoolCombis  Q_\A
      \arrow{d}{\BoolCombis u_{A}}
    \\
    |[alias=bot]|
    \BoolCombis  Q_{A}
      \arrow{r}{\delta_{A}^*(w)}
      \arrow[hook,xshift=-2mm,bend left=20]{uu}[anchor=base,rotate=90,yshift=1.5mm]{
            \BoolCombis (u_{A\cup W}
            \cdot i_A)
      }[right,xshift=1mm]{\text{\eqref{eq:AsubB}}}
      \arrow[hook]{u}[yshift=-1mm,right]{\BoolCombis i_A}
      \arrow[shiftarr={xshift=-14mm},hook']{uuu}
                [left]{\BoolCombis i_{A}}
                [pos=0.67,xshift=1mm,right]{\text{\eqref{eq:AsubB}}}
      \descto[yshift=1mm]{ur}{\text{(IH)}}
    & \BoolCombis  Q_{A}
      \arrow{r}{\delta_{A}^*(c)}
    & \BoolCombis  Q_{A}
    \end{tikzcd}
    \tag*{\qed}
    \end{equation}
\end{proof}
\begin{lem} \label{lem:finiteA1RAInclusion}
    If $A\subseteq B\subseteq \A$ and $w\in \A^*$ contains only letters in
    $A$ more than once,
    then
    \[
    \begin{tikzcd}[row sep =5mm, column sep = 8mm]
    \BoolCombis  Q_B
    \arrow{r}{\delta_\A^*(w)}
    & \BoolCombis  Q_B
    \arrow{d}{\BoolCombis (u_A\cdot i_B)}
    \\
    \BoolCombis  Q_{A}
    \arrow[hook]{u}{\BoolCombis (u_B\cdot i_A)}
    \arrow{r}[swap]{\delta_{A}^*(w)}
    & \BoolCombis  Q_{A}
    \end{tikzcd}
    \]
    commutes.
\end{lem}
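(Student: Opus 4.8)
The plan is to deduce the statement directly from Lemma~\ref{lem:finiteA1RA}, applied twice, rather than reproving it by induction on the word length. The crucial preliminary observation is that the hypothesis on $w$ propagates from $A$ to $B$: since $A \subseteq B$, every letter that occurs at least twice in $w$ lies in $A$, hence in $B$, so $w$ also contains only letters in $B$ more than once. Thus Lemma~\ref{lem:finiteA1RA} is applicable both to the distinguished set $A$ and to the distinguished set $B$. (Here I read the top horizontal arrow, although written $\delta_\A^*(w)$, as the $B$-level map $\delta_B^*(w)$, matching its domain and codomain $\BoolCombis Q_B$; by Lemma~\ref{lem:finiteA1RA} this coincides with $\BoolCombis u_B \cdot \delta_\A^*(w) \cdot \BoolCombis i_B$.)

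Concretely, I would first record the two instances of Lemma~\ref{lem:finiteA1RA}, namely $\delta_A^*(w) = \BoolCombis u_A \cdot \delta_\A^*(w) \cdot \BoolCombis i_A$ and $\delta_B^*(w) = \BoolCombis u_B \cdot \delta_\A^*(w) \cdot \BoolCombis i_B$. Substituting the latter into the clockwise composite of the square and using functoriality of $\BoolCombis$ yields
\[
  \BoolCombis(u_A \cdot i_B) \cdot \delta_B^*(w) \cdot \BoolCombis(u_B \cdot i_A)
  = \BoolCombis(u_A \cdot i_B \cdot u_B) \cdot \delta_\A^*(w) \cdot \BoolCombis(i_B \cdot u_B \cdot i_A).
\]
Now the two identities of \eqref{eq:AsubB}, $u_A = u_A \cdot i_B \cdot u_B$ and $i_A = i_B \cdot u_B \cdot i_A$, collapse the outer factors to $\BoolCombis u_A$ and $\BoolCombis i_A$, so that the right-hand side becomes $\BoolCombis u_A \cdot \delta_\A^*(w) \cdot \BoolCombis i_A$, which equals $\delta_A^*(w)$ by the first instance of Lemma~\ref{lem:finiteA1RA}. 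This is exactly the commutativity of the square, completing the argument.

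I expect the argument to be essentially routine once the two instances of Lemma~\ref{lem:finiteA1RA} are in place, so the main point to get right is not a hard estimate but the bookkeeping: verifying that the hypothesis transfers to $B$, and tracking the composition orders so that the identities of \eqref{eq:AsubB} are applied on the correct side. The only genuinely delicate step is the identification of the top map as $\delta_B^*(w)$; everything else is the equational manipulation above, needing nothing beyond functoriality of $\BoolCombis$. As a consistency check I would confirm the empty-word case by the same mechanism, $\BoolCombis(u_A \cdot i_B) \cdot \BoolCombis(u_B \cdot i_A) = \BoolCombis(u_A \cdot i_B \cdot u_B \cdot i_A) = \BoolCombis(u_A \cdot i_A) = \id$, again via \eqref{eq:AsubB} together with $u_A \cdot i_A = \id$.
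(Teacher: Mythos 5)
Your proposal is correct and is essentially the paper's own proof: the paper establishes the square by pasting the two instances of Lemma~\ref{lem:finiteA1RA} (for $A$ and for $B$, with the top arrow indeed being $\delta_B^*(w)$ as you inferred) and collapsing the outer legs via \eqref{eq:AsubB}, exactly as in your equational computation. Your explicit remark that the hypothesis on $w$ transfers from $A$ to $B$ because $A\subseteq B$ is left implicit in the paper but is needed to justify the second application of Lemma~\ref{lem:finiteA1RA}.
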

\begin{proof}
  We have the following commutative diagram:
    \begin{equation}
    \begin{tikzcd}[baseline=(bottomNode.base)]
        \BoolCombis Q_B
            \arrow{rrr}{\delta^*_B(w)}
            \arrow{dr}{\BoolCombis i_B}
        & \descto[yshift=1mm]{dr}{\text{Lemma~\ref{lem:finiteA1RA}}}
        &&
        \BoolCombis Q_B
            \arrow{dd}{\BoolCombis (u_A\cdot i_B)}
                [left,xshift=-5mm]{\text{\eqref{eq:AsubB}}}
    \\
    & \BoolCombis Q_\A
            \arrow{r}{\delta^*_\A(w)}
             \descto[yshift=1mm]{dr}{\text{Lemma~\ref{lem:finiteA1RA}}}
        & \BoolCombis Q_\A
            \arrow{ur}{\BoolCombis u_B}
            \arrow{dr}{\BoolCombis u_A}
    \\
        \BoolCombis Q_A
            \arrow{rrr}{\delta^*_A(w)}
            \arrow{ur}{\BoolCombis i_A}
            \arrow[hook]{uu}{\BoolCombis (u_B\cdot i_A)}
                [right,xshift=5mm]{\text{\eqref{eq:AsubB}}}
        &&{}&
        |[alias=bottomNode]|
        \BoolCombis Q_A.
    \end{tikzcd}
    \tag*{\qed}
    \end{equation}
\end{proof}

In the following we write parts of words as \nfresh{n}, denoting any
$w\in \A^n$ that is fresh for the register contents and all the other
characters of the word and consists of distinct
letters\lsnote{inserted that last bit}. In particular, $\delta_A(\nfresh{n})$
is well-defined, because $\delta_A(w_1) = \delta_A(w_2)$ for any those candidates
$w_1,w_2 \in \A^n$ by Corollary~\ref{cor:freshness}.
\begin{proposition}\label{prop:A1-RA}
  There is no A1-RA recognising the language
  $\newletter{a}\newletter{b}(\newletter{c})^*ab(\newletter{d})^*$
  under local freshness semantics, that is, the language of all words
  where the first two letters appear again later, in the same order
  and adjacent.
\end{proposition}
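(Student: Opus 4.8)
The plan is to argue by contradiction, first collapsing a hypothetical automaton onto a finite state space and then exploiting the presence of only one register. Suppose an A1-RA $\mathcal A=(Q,\op{step},\gamma,q_0)$ recognises the language under local freshness semantics. Fix two distinct names $a,b$ and put $A=\{a,b\}$. By Lemma~\ref{lem:acc} together with Lemmas~\ref{lem:finiteA1RA} and \ref{lem:onlyUsedAtoms}, the acceptance of any word all of whose repeated letters lie in $A$ is decided by the finite deterministic automaton carried by $\BoolCombis Q_A$, with transition $\delta^\dagger_A$, initial state $\eta(q_0,\star)$ and final predicate $\op{acc}_A$; note that $\BoolCombis Q_A$ is finite because $Q_A=Q\times(A+\{\star\})$ is. Crucially, by Lemma~\ref{lem:freshIsFresh} and Corollary~\ref{cor:freshness} every letter outside $A$ acts through one and the same map $f:=\delta^\dagger_A(\nfresh1)$ on this finite set, so reading a block $\nfresh n$ realises $f^{\,n}$.

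First I would fix a pumping length. Since $\BoolCombis Q_A$ is finite, the cyclic monoid generated by $f$ is finite, hence its iterates are eventually periodic: there are $n_0,p\ge1$ with $f^{\,m}=f^{\,m+p}$ for all $m\ge n_0$. Thus, once a fresh block has length at least $n_0$, inserting a further $p$ fresh letters leaves the finite-automaton state — and, through the reduction, the fate of each surviving one-register thread — unchanged.

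Next I would use the single register to decouple the two halves of the adjacency check. Consider the accepting run on $W=ab\,\nfresh{n_0}\,ab\in L$ and the perturbed word $Z=ab\,\nfresh{n_0}\,a\,\nfresh{p}\,b$, which lies \emph{outside} $L$ because the reoccurrence of $a$ is now separated from that of $b$ by $p\ge1$ letters, so there is no adjacent $ab$ beyond the first two positions. The structural claim is that in \emph{any} accepting computation on $W$ the comparison of the penultimate letter with $a$ and of the last letter with $b$ must be performed by \emph{two different} threads: an atom $(q,r)$ carries a single register value, and no $\op{step}$-instruction turns a register holding $a$ at one position into one holding $b$ at the next — $\X$ keeps the content, while $\downarrow$ overwrites it with the current input, destroying rather than supplying the required name. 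Moreover, to reject the equally-illegal words $ab\,\nfresh{n_0}\,ac$ and $ab\,\nfresh{n_0}\,cb$ (with $c$ fresh), which differ from $W$ in a single relevant position, $\mathcal A$ must genuinely test the penultimate letter against $a$ and the last against $b$; being equivariant it cannot encode these names in its finite control and must hold them in a register. Hence the automaton reduces to independent threads testing ``$a$ reoccurs'' and ``$b$ reoccurs'', whose heads move independently after the conjunctive branch and, by Corollary~\ref{cor:freshness}, cannot count the intervening fresh letters. Feeding this thread pair the word $Z$, the $a$-thread still meets the reoccurring $a$ at the unchanged position, while the $b$-thread reaches the final $b$ after $p$ extra fresh letters and, by periodicity of $f$, arrives in the same internal state, so its test again succeeds. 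Therefore $\mathcal A$ accepts $Z\notin L$ — a contradiction.

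The hard part will be the decoupling claim: that alternation cannot make up for the missing second register and synchronise the two matches at consecutive positions while the reoccurrence point is existentially quantified. The delicate point is that one must argue at the level of the register contents of the atoms $(q,r)$ inside $\BoolCombis Q_A$ — where the one-register restriction is visible — rather than treat the finite automaton as an arbitrary DFA, since such a DFA could recognise the factor $ab$ in its control; it is precisely equivariance, via Lemma~\ref{lem:freshIsFresh}, that forbids $\mathcal A$ from doing so. I would make the factorisation precise by showing, using the commutation Lemmas~\ref{lem:dropFreshAtoms} and \ref{lem:finiteA1RAInclusion}, that the acceptance value at the last two positions depends on $a$ and on $b$ only through separate threads, and then absorb the inserted block $\nfresh p$ by the periodicity of $f$. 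The reduction to $\BoolCombis Q_A$ and the choice of pumping length are routine given the lemmas already established.
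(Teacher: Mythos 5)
Your overall strategy is the same as the paper's: collapse the automaton onto the finite deterministic automaton on $\BoolCombis Q_{\{a,b\}}$, use equivariance (Lemma~\ref{lem:freshIsFresh}, Corollary~\ref{cor:freshness}) to make all fresh letters act identically, pump fresh blocks, and exploit the single register through the restricted dynamics $\delta^*_{\{a\}}$ and $\delta^*_{\{b\}}$. But there is a concrete gap in your word pair. With $W=ab\,\nfresh{n_0}\,ab$ and $Z=ab\,\nfresh{n_0}\,a\,\nfresh{p}\,b$, the $\{a\}$-restricted views of the suffixes are $\nfresh{n_0}\,a\,\nfresh{1}$ and $\nfresh{n_0}\,a\,\nfresh{p+1}$: the fresh block \emph{after} the reoccurring $a$ has length $1$ in $W$ and $p+1$ in $Z$. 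Eventual periodicity only gives $f^{m}=f^{m+p}$ for $m\ge n_0$, so nothing forces $f^{1}=f^{p+1}$, and hence $\delta^*_{\{a\}}$ need not agree on the two words --- the automaton could, for instance, behave differently depending on how many letters follow the second $a$. You need both words to end in a fresh tail of length at least $n_0$ (as in the paper's $\mathbf{x}=\nfresh{n}ab\nfresh{n}$ versus $\mathbf{y}=\nfresh{n+1}b\nfresh{\ell-2}a\nfresh{n+1}$, where for \emph{each} of the two restricted views the perturbation is an insertion of $\ell$ fresh letters into a block of length $\ge n$). Appending $\nfresh{n_0}$ to both $W$ and $Z$ repairs this.

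The second issue is that the ``decoupling claim,'' which you yourself flag as the hard part, is the actual content of the proof and is only sketched. Your thread-level reasoning (no instruction turns a register holding $a$ into one holding $b$; the names cannot be encoded in the finite control) is the right intuition but not an argument --- in particular the talk of ``two different threads performing the two comparisons'' is not something the formalism directly supports, since acceptance is a value of a Boolean term, not a set of runs. The paper's precise mechanism is purely algebraic and sidesteps run-level reasoning: every generator of $\BoolCombis Q_{\{a,b\}}$ has register content $a$, $b$, or $\star$, so the inclusions $Q_{\{a\}}\hookrightarrow Q_{\{a,b\}}\hookleftarrow Q_{\{b\}}$ are jointly surjective; Lemma~\ref{lem:finiteA1RAInclusion} identifies the $\{a,b\}$-dynamics restricted along $Q_{\{r\}}$ with the $\{r\}$-dynamics; the pumping equalities then give $\op{acc}_{\{a,b\}}\circ\delta^*_{\{a,b\}}(\mathbf{x})\circ\eta=\op{acc}_{\{a,b\}}\circ\delta^*_{\{a,b\}}(\mathbf{y})\circ\eta$ on all generators, and freeness of $\BoolCombis Q_{\{a,b\}}$ extends this to the whole Boolean algebra. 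You cite the right lemmas, so this part is incomplete rather than wrong, but as written the proof does not go through without both the corrected word pair and the joint-surjectivity argument spelled out.
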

\begin{proof} Assume for a contradiction that an A1-RA
  $(Q,q_0,\op{step})$ recognizes the language in question, and
  construct $\delta$ as described above. Pick any $a\neq b\in \A$. Let
  $n$ be greater than the (finite) cardinality of
  $\BoolCombis(Q_{\{a,b\}})$, and put $\ell=n!$. By assumption,
  the automaton accepts
  \[
      ab\mathbf{x}\quad\text{with}\quad\mathbf{x} := \nfresh{n}ab\nfresh{n},
  \]
  keeping the concrete choice of $\mathbf{x}$ fixed for the rest of the prove.
  We show that the automaton also accepts
  \[
  ab\mathbf{y}\quad\text{with}\quad\mathbf{y} := \nfresh{n+1}b\nfresh{\ell-2}a\nfresh{n+1},
  \]
  which is clearly not an element of the language, again keeping $\mathbf{y}$
  fixed for the rest of the proof. For any $r\in\A$ we have that
  $\delta^*_{\{r\}}(\nfresh{n}) = \delta^*_{\{r\}}(\nfresh{n+\ell})$: since
  $n$ is greater than $|\BoolCombis(Q_{\{r\}})|$, any run on
  $\nfresh{n}$ in $\BoolCombis(Q_{\{r\}})$ goes through a loop, and by
  Corollary~\ref{cor:freshness}, the claim follows by iterating that
  loop, whose length divides $\ell$ by the choice of~$\ell$.  Since
  $a\neq b$, this implies that
  \begin{equation} \label{eq:XiffY}
    \hspace{0mm}
    \begin{aligned}
      \delta^*_{\{a\}}({\bf x}) =\ &\delta^*_{\{a\}}(\nfresh{n}ab\nfresh{n})
      \\ =\ &
      \delta^*_{\{a\}}(\nfresh{n+\ell}ab\nfresh{n})
      \\ =\ &
      \delta^*_{\{a\}}(\nfresh{n+\ell}a\nfresh{n+1})
      \\ =\ &
      \delta^*_{\{a\}}(\nfresh{n+1}b\nfresh{\ell-2}a\nfresh{n+1})
      \\ =\ &
      \delta^*_{\{a\}}({\bf y})
    \end{aligned}
     \hspace{3mm}
     \begin{aligned}
       \mathllap{\delta^*_{\{b\}}({\bf x}) =}\ & \delta^*_{\{b\}}(\nfresh{n}ab\nfresh{n})
       \\ =\ &
       \delta^*_{\{b\}}(\nfresh{n}ab\nfresh{n+\ell})
       \\ =\ &
       \delta^*_{\{b\}}(\nfresh{n+1}b\nfresh{n+\ell})
       \\ =\ &
       \mathrlap{
         \delta^*_{\{b\}}(\nfresh{n+1}b\nfresh{\ell-2}a\nfresh{n+1})
       }
       \\ =\ &
       \delta^*_{\{b\}}({\bf y}).
     \end{aligned}
  \end{equation}
  \twnote{I re-enabled the second equation because it's in total shorter but
  more readable. Furthermore, in the following diagram, (\ref{eq:XiffY}) does
  not only refer to $\delta^*_{\{a\}}(\mathbf{x})= \delta^*_{\{a\}}(\mathbf{x})$ but
  also to $\delta^*_{\{b\}}(\mathbf{x})= \delta^*_{\{b\}}(\mathbf{x})$.}
  For $r\in\{a,b\}$, the respective equality proves commutation of
  \[
  \begin{tikzcd}[column sep = 15mm]
    Q_{\{a,b\}}
    \arrow{r}{\eta_{Q_{\{a,b\}}}}
    & \BoolCombis Q_{\{a,b\}}
    \arrow{r}{\delta^*_{\{a,b\}}(\mathbf{x})}
    \descto[yshift=3mm]{dr}{\text{Lemma~\ref{lem:finiteA1RAInclusion}}}
    & \BoolCombis Q_{\{a,b\}}
    \arrow{d}[right]{\BoolCombis(u_{\{r\}}\cdot i_{\{a,b\}})}
    \arrow[bend left=30]{dr}[sloped,above]{\op{acc}_{\{a,b\}}}
    \\
    Q_{\{r\}}
    \arrow{r}{\eta_{Q_{\{r\}}}}
    \arrow[hook]{u}[left]{u_{\{a,b\}}\cdot i_{\{r\}}}
    \arrow[hook]{d}[left]{u_{\{a,b\}}\cdot i_{\{r\}}}
    & \BoolCombis Q_{\{r\}}
    \arrow[hook]{u}[left]{\BoolCombis(u_{\{a,b\}}\cdot i_{\{r\}})}
    \arrow[hook]{d}[left]{\BoolCombis(u_{\{a,b\}}\cdot i_{\{r\}})}
    \arrow[bend left =7,yshift=1mm]{r}{\delta^*_{\{r\}}(\mathbf{x})}
    \arrow[bend right=7,yshift=-1mm]{r}[below]{\delta^*_{\{r\}}(\mathbf{y})}
    \descto[fill=none]{r}{\text{\eqref{eq:XiffY}}}
    & \BoolCombis Q_{\{r\}}
    \arrow{r}{\op{acc}_{\{r\}}}
    & 2
    \\
    Q_{\{a,b\}}
    \arrow{r}[below]{\eta_{Q_{\{a,b\}}}}
    & \BoolCombis Q_{\{a,b\}}
    \arrow{r}[below]{\delta^*_{\{a,b\}}(\mathbf{y})}
    \descto[yshift=-3mm,fill=none]{ur}{\text{Lemma~\ref{lem:finiteA1RAInclusion}}}
    & \BoolCombis Q_{\{a,b\}}
        \arrow{u}[right]{\BoolCombis(u_{\{r\}}\cdot i_{\{a,b\}})}
        \arrow[bend right=30]{ur}[sloped,below]{\op{acc}_{\{a,b\}}}
    \end{tikzcd}
    \]
    Since $u_{\{a,b\}}\cdot i_{\{r\}}: Q_{\{r\}} \hookrightarrow
    Q_{\{a,b\}}$, $r = a, b$, are the inclusion maps, and since these
    are jointly surjective, we have
    \[
        \op{acc}_{\{a,b\}}\cdot \delta^*_{\{a,b\}}(\mathbf{x})\cdot \eta_{Q_{\{a,b\}}} =
        \op{acc}_{\{a,b\}}\cdot \delta^*_{\{a,b\}}(\mathbf{y})\cdot \eta_{Q_{\{a,b\}}}.
    \]
    \twnote{I re-included this again, because this is needed in my opinion. Lemma~\ref{lem:finiteA1RAInclusion} is not applicable in the above diagram if $\mathbf{x} = ab\nfresh{n}ab\nfresh{n}$.}
    and thus by the universality of $\eta_{Q_{\{a,b\}}}$ also
    \begin{equation}
        \op{acc}_{\{a,b\}}\cdot \delta^*_{\{a,b\}}(\mathbf{x}) =
        \op{acc}_{\{a,b\}}\cdot \delta^*_{\{a,b\}}(\mathbf{y})
        \label{eq:abAcceptsY}
    \end{equation}
    Hence, $ab\mathbf{y}$ is accepted, because $ab\mathbf{x}$ is:
    \begin{equation*}
    \begin{tikzcd}[column sep=12mm, row sep=0mm, baseline=(bot.base)]
      & \BoolCombis Q_\A
      \arrow{r}{\delta^*_\A(ab)}
      & \BoolCombis Q_\A
      \arrow{r}{\delta^*_\A(\mathbf{x})}
      \descto{dr}{\text{Lemma~\ref{lem:finiteA1RA}}}
      & \BoolCombis Q_\A
      \arrow{d}{\BoolCombis u_{\{a,b\}}}
      \arrow[bend left=30]{ddr}[sloped,above]{\op{acc}_\A}
      \\[4mm]
      &&&
      \BoolCombis Q_{\{a,b\}}
      \arrow{dr}[sloped,above]{\op{acc}_{\{a,b\}}}
      \descto{dd}{\text{\eqref{eq:abAcceptsY}}}
      \\
    1
        \arrow[bend left=30]{uur}[sloped,above]{\eta(q_0,\star)}
        \arrow[bend right=30]{ddr}[sloped,below]{\eta(q_0,\star)}
        \arrow{r}[above]{\eta(q_0,\star)}
    & \BoolCombis Q_{\{a,b\}}
        \arrow[hook]{uu}[sloped,above]{\BoolCombis i_{\{a,b\}}}
        \arrow[hook]{dd}[sloped,above,rotate=180]{\BoolCombis i_{\{a,b\}}}
        \arrow{r}{\delta^*_{\{a,b\}}}
        \descto[yshift=1mm]{uur}{\text{Lemma~\ref{lem:onlyUsedAtoms}}}
        \descto[yshift=1mm]{ddr}{\text{Lemma~\ref{lem:onlyUsedAtoms}}}
    & \BoolCombis Q_{\{a,b\}}
        \arrow[hook]{uu}[sloped,above]{\BoolCombis i_{\{a,b\}}}
        \arrow[hook]{dd}[sloped,above,rotate=180]{\BoolCombis i_{\{a,b\}}}
        \arrow{ur}[sloped,above]{\delta^*_{\{a,b\}}(\mathbf{x})}
        \arrow{dr}[sloped,below]{\delta^*_{\{a,b\}}(\mathbf{y})}
    && 2
    \\
    &&& \BoolCombis Q_{\{a,b\}}
        \arrow{ur}[sloped,below]{\op{acc}_{\{a,b\}}}
    \\[4mm]
    & \BoolCombis Q_\A
        \arrow{r}[below]{\delta^*_\A(ab)}
    & \BoolCombis Q_\A
        \arrow{r}[below]{\delta^*_\A(\mathbf{y})}
        \descto{ur}{\text{Lemma~\ref{lem:finiteA1RA}}}
    & |[alias=bot]|
    \BoolCombis Q_\A
        \arrow{u}[right]{\BoolCombis u_{\{a,b\}}}
        \arrow[bend right=30]{uur}[sloped,below]{\op{acc}_\A}
    \end{tikzcd}
    \tag*{\qed}
    \end{equation*}
\end{proof}

\myparagraph{Relationship to 2-Register Automata} Finally, we consider
(nondeterministic) RAs with at most $2$ registers, another class of
automata with decidable language inclusion (see Kaminsky and
Francez~\cite{KaminskiFrancez94}).  This class is also incomparable to
RNNA. Indeed, the language~\eqref{eq:ab} can be accepted even by a
one-register RA but not by an RNNA. To see that the reverse inclusion
also fails one considers the language `the frist three letters appear
again'. Clearly, this can be accepted by an RNNA, but not by any RA
with at most 2 registers. Informally, such an RA would have to store
the first three letters to compare each of them to their subsequent
letters in the given input word; this is impossible with only 2
registers. A formal argument is similar to (but simpler than) the one
given in Proposition~\ref{prop:A1-RA}; we leave the details to the
reader.

Note that essentially the same argument also shows RNNA to be
incomparable to RA with at most $k$ registers for any fixed $k$; to
see this consider the language `the first $k+1$ letters appear again'.

%

\subsection{Proofs and Lemmas for Section~\ref{sec:inclusion}}

\begin{proof}[Additional details for the proof of Theorem~\ref{thm:barnfa-expspace}]
  We have omitted the space analysis of the initialization step.  To
  initialize $\Xi$ we need to compute $N_2 = \supp(\barlang(s_2))$.
  This can be done in nondeterministic logspace: for every free
  transition $q\trans{a} q'$ in $A_2$, in order to decide whether or
  not $a \in N_{s_2}$, remove from the transition graph of $A_2$ all
  transitions with label $\newletter a$ and then check whether there
  exists a path from $s_2$ to a final state passing through the given
  transition.
\end{proof}

\myparagraph{Details for Remark~\ref{rem:nka-complexity}} The spines
of an NKA expression $r$ arise by $\alpha$-renaming and subsequent
deletion of some binders from expressions that consist of
subexpressions of $r$, prefixed by at most as many binders as occur
already in $r$; therefore, the degree of the RNNA formed by the
spines, and hence, by Theorem~\ref{thm:nom-bar} (and the fact that
the translation from bar NFA to regular bar expressions is polynomial
and preserves the degree), that of the arising regular bar expression,
is linear in the degree of~$r$ (specifically, at most twice as large).
\qed\medskip

\noindent We shortly write
$D(w)=D(\barlang(w))=\{\unbar(w')\mid w'\alphaeq w\}$ for
$w\in\bar\Names^*$.
\begin{lem}\label{lem:dataord}
  If $w\dataord w'$ then $D(w)\subseteq D(w')$.
\end{lem}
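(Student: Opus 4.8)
The plan is to reduce to a single generating step and then induct on the length of the prefix. Since $\dataord$ is the reflexive–transitive closure of the single-step relation $uav \dataord u\newletter{a}v$ and set inclusion is transitive, it suffices to treat one step: given $w = uav$ and $w' = u\newletter{a}v$ with $u,v\in\bar\Names^*$ and $a\in\Names$, I would prove $\datalang(w)\subseteq\datalang(w')$ by induction on $|u|$, where throughout $\datalang(z)=\{\unbar(z')\mid z'\alphaeq z\}$ as defined just above.

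For the induction I would first record two facts about how $\datalang$ interacts with prepending a single letter, both resting on the observation that an $\alpha$-conversion step may carry an arbitrary prefix, so prepending a fixed letter preserves $\alphaeq$. Prepending a free letter $c$ never interferes with the suffix (position $1$ is a free occurrence, and every $\alpha$-step only renames binders and the occurrences they capture, hence fixes it), whence $\datalang(cx)=\{c\,d\mid d\in\datalang(x)\}$. Prepending a bound letter gives $\datalang(\newletter{c}x)=\bigcup_{b}\{b\,d\mid d\in\datalang((c\,b)\cdot x)\}$, the union over the names $b$ \emph{admissible} as a new name for the binder, i.e.\ $b\notin\FN(x)\setminus\{c\}$ (equivalently $b\fresh\eq{c}{x}$, using $\supp(\eq{c}{x})=\FN(x)\setminus\{c\}$); here I use that $\datalang$ is equivariant, $\datalang(\pi\cdot x)=\pi\cdot\datalang(x)$, which holds since $\unbar$ commutes with the $G$-action and $\alphaeq$ is equivariant. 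The base case $u=\varepsilon$ is then direct: any $\tilde w\alphaeq av$ has the form $a\tilde v$ with $\tilde v\alphaeq v$, and $\newletter{a}\tilde v\alphaeq\newletter{a}v=w'$ with $\unbar(\newletter{a}\tilde v)=a\,\unbar(\tilde v)=\unbar(\tilde w)$; so keeping the new binder named $a$ already realises every word of $\datalang(av)$ inside $\datalang(\newletter{a}v)$.

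For the inductive step write $u=\alpha u_1$ and assume $\datalang(u_1av)\subseteq\datalang(u_1\newletter{a}v)$. If $\alpha=c$ is free, the first fact turns the goal into $c\,\datalang(u_1av)\subseteq c\,\datalang(u_1\newletter{a}v)$, which is immediate. If $\alpha=\newletter{c}$ is bound, I would combine the second fact with the hypothesis name by name: for each admissible $b$, equivariance and the hypothesis give $(c\,b)\cdot\datalang(u_1av)\subseteq(c\,b)\cdot\datalang(u_1\newletter{a}v)$. The one remaining point is that every $b$ admissible for $u_1av$ is also admissible for $u_1\newletter{a}v$, and this is exactly where the subtlety of the bound case lies and the main obstacle I expect. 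It rests on the monotonicity of free names under barring, $\FN(u_1\newletter{a}v)\subseteq\FN(u_1av)$: a free occurrence of a name $e$ in the barred string either lies left of the junction (unaffected), or lies in $v$ with $e\neq a$ (inserting a bar on $a$ does not change whether a preceding $\newletter{e}$ exists), so it is free in $u_1av$ as well, while a plain $a$ in $v$ can only lose freeness. As admissibility is the condition $b\notin\FN(\cdot)\setminus\{c\}$, this inclusion of free-name sets transfers admissibility in the required direction and closes the induction. \qed
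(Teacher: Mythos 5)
Your proof is correct, and it organizes the induction differently from the paper. The paper inducts on $w$ directly against the whole preorder $\dataord$, decomposing it positionally; the only case it works out is $w=av\dataord\newletter{a}v'$ with $v\dataord v'$, where it produces a single witness: from $\unbar(u)\in D(v)\subseteq D(v')$ it picks $\bar v'\alphaeq v'$ with $\unbar(\bar v')=\unbar(u)$ and notes $\newletter{a}\bar v'\alphaeq\newletter{a}v'$ --- i.e.\ it keeps the fresh binder named $a$, which is exactly your base case $u=\varepsilon$. You instead reduce to a single bar-introduction by transitivity and then induct on the length of the prefix; the additional content sits in the bound-prefix step, where you need the decomposition of $D(\newletter{c}x)$ as a union over admissible renamings $b$ of the head binder together with the monotonicity $\FN(u_1\newletter{a}v)\subseteq\FN(u_1av)$ to see that admissible $b$'s carry over. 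That is precisely what is hiding in the case the paper waves off as trivial ($w=\newletter{c}v\dataord\newletter{c}v'$), so your version makes explicit the one genuinely delicate point; moreover the identity for $D(\newletter{c}x)$ you invoke is the same one the paper itself states and uses in the proof of Lemma~\ref{lem:weak-inclusion}. The only blemish is the parenthetical justification $\supp(\eq{c}{x})=\FN(x)\setminus\{c\}$: this is the support of $\eq{c}{[x]_\alpha}$ in $[\Names]\bar M$, not of $\eq{c}{x}$ in $[\Names]\bar\Names^*$ (whose support contains the bound names of $x$ as well); since you define admissibility directly by $b\notin\FN(x)\setminus\{c\}$, nothing in the argument depends on this gloss.
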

\begin{proof}
  Induction over $w$, with trivial base case. The only non-trivial
  case in the induction step is that $w= av$ and $w'=\newletter a v'$
  where $v\dataord v'$.  All bar strings that are $\alpha$-equivalent
  to $w$ have the form $au$ where $v\alphaeq u$; we have to show
  $\unbar(au)\in D(\newletter av')$. We have $\unbar(u)\in D(v)$, so $\unbar(u)\in D(v')$ by
  induction; that is, there exists $\bar v'\alphaeq v'$ such that
  $\unbar(\bar v')=\unbar(u)$. Then $\unbar(\newletter a\bar v')=\unbar(au)$ and
  $\newletter a\bar v'\alphaeq \newletter a v'$, so $au\in
  D(\newletter av')$.
\end{proof}
Lemma~\ref{lem:datalang} is immediate from the following:
\begin{lem}\label{lem:weak-inclusion}
  Let $L$ be a regular bar language, and let $w\in\bar\Names^*$. Then
  $\datalang(w)\subseteq \datalang(L)$ iff there exists $w'\datasup w$
  such that $[w']_\alpha\in L$.
\end{lem}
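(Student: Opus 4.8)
For the ``if'' direction, suppose $w'\datasup w$ and $[w']_\alpha\in L$. Then $w\dataord w'$, so $\datalang(w)\subseteq\datalang(w')$ by Lemma~\ref{lem:dataord}. Moreover $\datalang(w')=\{\unbar(u)\mid u\alphaeq w'\}\subseteq\datalang(L)$, since every $u\alphaeq w'$ satisfies $[u]_\alpha=[w']_\alpha\in L$ and hence $\unbar(u)\in\datalang(L)$. Chaining the two inclusions gives $\datalang(w)\subseteq\datalang(L)$, as required.

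The ``only if'' direction is the substantial one, and the plan is to produce $w'$ by adding bars to $w$ at positions dictated by a witness extracted from $\datalang(L)$. I write $B(u)$ for the set of positions of a bar string $u$ that carry a bar, and recall two facts I would use throughout: a regular bar language $L$ is uniformly finitely supported (Lemma~\ref{lem:nfba-ufs}), so $\supp(L)$ is finite and, by Lemma~\ref{lem:supp}, $\FN(u)=\supp([u]_\alpha)\subseteq\supp(L)$ for every $[u]_\alpha\in L$. First I would choose a clean representative $w_0\alphaeq w$ whose bound names are pairwise distinct and all lie outside the finite set $\supp(L)\cup\FN(w)$; this exists because every bar string has a clean $\alpha$-variant and $\Names$ is infinite. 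Since $\unbar(w_0)\in\datalang(w)\subseteq\datalang(L)$, there is a bar string $v$ with $[v]_\alpha\in L$ and $\unbar(v)=\unbar(w_0)=:z$.

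The key step is to show $B(w_0)\subseteq B(v)$. Fix $i\in B(w_0)$; by cleanness and the freshness choice its underlying letter $z_i$ is distinct from every other letter of $z$, occurs in $z$ for the first time at position $i$, and satisfies $z_i\notin\supp(L)$. As $\FN(v)\subseteq\supp(L)$ we get $z_i\notin\FN(v)$; were position $i$ unbarred in $v$ it would, being the first occurrence of $z_i$, exhibit $z_i$ as a free name of $v$, a contradiction. Hence $i\in B(v)$. Now $w_0$ and $v$ share the underlying word $z$ and $B(w_0)\subseteq B(v)$, and since a bar string is determined by its underlying word together with its set of barred positions, $v$ is obtained from $w_0$ by barring exactly the positions $S:=B(v)\setminus B(w_0)$; in particular $w_0\dataord v$.

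Finally I would transfer this along $w\alphaeq w_0$. As $\alpha$-equivalence preserves barred positions, $B(w)=B(w_0)$, so the positions in $S$ are unbarred in $w$; let $w'$ arise by barring exactly $S$ in $w$. Then $w\dataord w'$, i.e.\ $w'\datasup w$. It remains to see $[w']_\alpha\in L$, for which it suffices to prove $w'\alphaeq v$. Here I would invoke a \emph{transfer lemma}: barring a fixed set of (unbarred) positions is a well-defined operation on $\alpha$-equivalence classes, so from $w\alphaeq w_0$ it follows that barring the same set $S$ in both yields $\alpha$-equivalent results. Since barring $S$ in $w_0$ produces exactly $v$, barring $S$ in $w$ produces some $w'\alphaeq v$, giving $[w']_\alpha=[v]_\alpha\in L$.

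I expect this transfer lemma to be the main obstacle: one must verify that adding a bar at a position commutes with $\alpha$-renaming of bound groups, which I would prove by reducing to single $\alpha$-conversion steps $p\,\newletter{a}\,s\alphaeq p\,\newletter{b}\,t$ and analysing how the freshly inserted bar re-scopes the occurrences falling within its scope, using the characterisation of abstraction equality in Lemma~\ref{lem:abstr}. The surrounding bookkeeping---existence of the clean fresh $w_0$ and the freshness argument forcing $B(w_0)\subseteq B(v)$---is then routine once the ufs property of $L$ is in hand.
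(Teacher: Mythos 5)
Your proof is correct, but it takes a genuinely different route from the paper's. The paper proves the ``only if'' direction by strengthening the statement to finite unions $\bigcup_{i}\datalang(\barlang(q_i))$ of state languages of a name-dropping RNNA accepting $L$ and then inducting on $w$, peeling off one letter at a time; the bound-letter case hinges on choosing a renaming target $b$ fresh for $[w]_\alpha$ that labels no free transition of any $q_i$, and on closure of literal languages under $\alpha$-equivalence (Lemma~\ref{lem:name-drop-alpha}). Your argument instead stays entirely at the level of strings: you pick a clean representative $w_0\alphaeq w$ whose bound names avoid the finite set $\supp(L)$, observe that any $v$ with $\unbar(v)=\unbar(w_0)$ and $[v]_\alpha\in L$ must bar at least the positions barred in $w_0$ (since a first occurrence of a name outside $\supp(L)$ cannot be free in $v$), conclude $w_0\dataord v$, and transfer back along $w\alphaeq w_0$. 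This buys something real: your proof uses only that $L$ is uniformly finitely supported, so it establishes the lemma for arbitrary ufs bar languages rather than just regular ones, and it needs no automaton at all. The price is the transfer lemma you flag, but that does go through as you sketch: for a single conversion step $u\newletter{a}v\alphaeq u\newletter{b}v'$ with $(ab)\cdot v=v'$ and $b\fresh v$, barring a fixed set of positions in the suffix commutes with the transposition (permutations act letterwise and preserve bar status) and does not change $\supp(v)$ (the set of names occurring in $v$, barred or not), so Lemma~\ref{lem:abstr} yields $\eq{a}{\bar v}=\eq{b}{\bar v'}$ verbatim; the feared ``re-scoping'' is invisible to the abstraction-based definition of $\alphaeq$. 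One small slip: a bound letter of a clean string need not be distinct from \emph{every} other letter of the unbarred word (it may recur later, as in $\newletter{a}a$); only the ``first occurrence at position $i$'' claim is true, and that is all your freshness argument actually uses.
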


\begin{proof}
  \emph{`If':} If $[w']_\alpha\in L$ then
  $\datalang(w')\subseteq D(L)$, so $\datalang(w)\subseteq D(L)$ by
  Lemma~\ref{lem:dataord}.

  \emph{`Only if':} We generalize the claim to state that whenever
  \begin{equation*}
    D(w)\subseteq\bigcup_{i\in I} D(\barlang(q_i))
  \end{equation*}
  for states $q_i$ in a name-dropping RNNA $A$ and a finite index set
  $I$, then there exist $i$ and $w'\datasup w$ such that
  $[w']_\alpha\in \barlang(q_i)$.  

  We prove the generalized claim by induction over $w$. The base case
  is trivial.

  Induction step for words $aw$: Let
  $D(aw)\subseteq \bigcup_{i=1}^n D(\barlang(q_i))$. We prove below that
  \begin{equation}\label{eq:free-step-dlang}
    D(w)\subseteq \bigcup_{i\in I, q_i\trans{\alpha}q',\alpha\in\{a,\scriptnew{a}\}}D(\barlang(q')).
  \end{equation}
  Indeed, let $u \in D(w)$, i.e.~there exists $v \alphaeq w$ with
  $\unbar(v) = u$. Then $\unbar(av) = au$ and $av \alphaeq aw$ imply $au \in
  D(aw)$, so by assumption there exists $i \in \{1, \ldots, n\}$ such
  that $au \in D(\barlang(q_i))$, i.e.~$au = \unbar(\alpha \bar u)$
  for $\alpha \in \{a, \newletter a\}$ and $[\alpha\bar u]_\alpha
  \in \barlang(q_i)$. By Lemma~\ref{lem:name-drop-alpha}, $\alpha \bar
  u \in L_0(q_i)$. Therefore there exists a transition $q
  \trans{\alpha} q'$ and $\bar u \in L_0(q')$. We conclude that $u =
  \unbar(\bar u) \in D(\barlang(q'))$ as desired. 

  Now, by induction hypothesis, it follows from~\eqref{eq:free-step-dlang} that we have $i\in I$,
  $\alpha\in\{a,\newletter{a}\}$, $q_i\trans{\alpha}q'$, and
  $w'\datasup w$ such that $[w']_\alpha\in \barlang(q')$. Then
  $\alpha w'\datasup aw$ and $[\alpha w']_\alpha \in \barlang(q_i)$, as
  required.

  Induction step for words $\newletter{a}w$: Let
  $D(\newletter{a}w)\subseteq \bigcup_{i=1}^n D(\barlang(q_i))$. Notice
  that 
  \begin{equation*}
    D(\newletter{a}w)=\bigcup_{b=a\lor b\fresh [w]_\alpha}bD(\pi_{ab}\cdot w)
  \end{equation*}
  (where $\cdot$ denotes the permutation group action and
  $\pi_{ab}=(a\ b)$ the transposition of $a$ and $b$; also note that $b \fresh [w]_\alpha$ iff $b \not\in \FN(w)$). Now pick
  $b\in\Names$ such that $b\fresh[w]_\alpha$ and none of the $q_i$ has
  a $b$-transition (such a $b$ exists because the set of free
  transitions of each $q_i$ is finite, as $A$ is an RNNA). We prove below that 
  \begin{equation*}
    bD(\pi_{ab}\cdot w)\subseteq \bigcup_{i\in I,q_i\trans{\subscriptnew{b}}q'}bD(\barlang(q')),
  \end{equation*}
  and hence we have
  \begin{equation}\label{eq:bound-step-dlang}
    D(\pi_{ab}\cdot w)
    \subseteq \bigcup_{i\in I,q_i\trans{\subscriptnew{b}}q'}D(\barlang(q')),
  \end{equation}
  again a finite union. In order to see that the above inclusion
  holds, let $bu \in bD(\pi_{ab} \cdot w)$, i.e., we have
  $v \alphaeq \pi_{ab}\cdot w$ with $\unbar(v) = u$. Then
  $\newletter b v \alphaeq \newletter b (\pi_{ab}\cdot w) \alphaeq
  \newletter a w$ and $\unbar(\newletter bv) = bu$, which implies that
  $bu \in D(\newletter a w)$. By our assumption
  $D(\newletter{a}w)\subseteq \bigcup_{i=1}^n D(\barlang(q_i))$ we
  obtain $i \in \{1, \ldots, n\}$ such that $bu \in D(\barlang(q_i))$,
  i.e. $bu = \unbar(\beta \bar u)$ for $\beta \in \{b, \newletter b\}$
  and $[\beta \bar u]_\alpha \in \barlang(q_i)$. By
  Lemma~\ref{lem:name-drop-alpha}, we have $\beta\bar u \in L_0(q_i)$,
  and since $q_i$ has no $b$-transitions, we therefore know that
  $\beta = \newletter b$. Hence we have a transition $q_i
  \trans{\newletter b} q'$ and $\bar u \in L_0(q')$. It follows that
  $u = \unbar(\bar u) \in D(\barlang(q'))$, whence $bu \in
  dD(\barlang(q'))$ as desired. 

  Now, by induction hypothesis, we obtain
  from~\eqref{eq:bound-step-dlang} $i\in I$,
  $q_i\trans{\scriptnew{b}}q'$, and $w'\datasup \pi_{ab}\cdot w$ such
  that $[w']_\alpha\in \barlang(q')$. It follows that
  \begin{equation*}
    \newletter{b}w'\datasup  \newletter{b}(\pi_{ab}\cdot w)\quad\text{and}\quad [\newletter{b}w']_\alpha\in \barlang(q_i).
  \end{equation*}
  Now we have $a\fresh[\pi_{ab}\cdot w]_\alpha$ (because
  $b\fresh [w]_\alpha$)), and therefore $a\fresh[w']_\alpha$ because
  $\pi_{ab}\cdot w\dataord w'$; it follows that
  $\newletter a(\pi_{ab}\cdot w') \alphaeq \newletter b w'$. As
  $\dataord$ is clearly equivariant, we have
  $\pi_{ab}\cdot w'\datasup w$, so
  \begin{equation*}
    \newletter{a}(\pi_{ab}\cdot w')\datasup\newletter{a}w\quad\text{and}\quad [\/\newletter{a}(\pi_{ab}\cdot w')]_\alpha=[\newletter b w']_\alpha\in \barlang(q_i),
  \end{equation*}
  which proves the inductive claim.
\end{proof}

\end{document}